\newcommand{\longv}[1]{}
\newcommand{\loglog}{\ell_2}
\newcommand{\ACDL}{\mathbb{ACDL}}
\title{Characterizing Small Circuit Classes from $\AC^0$ to $\AC^1$ via Discrete Ordinary Differential Equations} 
\author{Melissa Antonelli$^{1,2}$ \quad Arnaud Durand$^3$ \quad Juha Kontinen$^2$}
\date{\small{$^1$Helsinki Institute for Information Technology, $^2$University of Helsinki, $^3$Universit\'e Paris Cit\'e}}
\newcommand{\BIT}{\mathsf{BIT}}
\newcommand{\fun}[1]{\mathsf{#1}}
\newcommand{\sB}{B}
\newcommand{\sA}{A}
\newcommand{\Nat}{\mathbb{N}}
\newcommand{\FP}{\mathbf{FP}}
\newcommand{\NC}{\mathbf{FNC}}
\newcommand{\AC}{\mathbf{FAC}}
\newcommand{\TC}{\mathbf{FTC}}
\newcommand{\FACC}{\mathbf{FACC[2]}}
\newcommand{\sign}{\mathsf{sg}}
\newcommand{\tu}[1]{\mathbf{#1}}
\newtheorem{prop}{Proposition}
\newtheorem{theorem}{Theorem}
\newtheorem{lemma}{Lemma}
\theoremstyle{definition}
\newtheorem{defn}{Definition}
\newtheorem{notation}{Notation}
\newtheorem{example}{Example}
\newtheorem{remark}{Remark}
\newcommand{\ODEACC}{\ell\text{-ODE}_{\text{FACC[2]}}}
\newcommand{\schODE}{\ell\text{-b}_0\text{ODE}}
\newcommand{\chODE}{\ell\text{-bODE}}
\newcommand{\pODE}{\ell\text{-pODE}}
\begin{document}

\maketitle

\begin{abstract}
    In this paper, we provide a uniform framework for investigating small circuit classes and bounds through the lens of ordinary differential equations (ODEs).
    Following an approach recently introduced to capture the class of polynomial-time computable functions via ODE-based recursion schemas and later applied to the context of functions computed by unbounded fan-in circuits of constant depth ($\AC^0$), 
     we study multiple relevant small circuit classes. 
    In particular, we show that \emph{natural} restrictions on linearity and derivation along functions with specific growth rate correspond to kinds of functions that can be proved to be  in various classes, ranging from $\AC^0$ to $\AC^1$.
    This reveals an intriguing link between constraints over linear-length ODEs and circuit computation, providing new tools to tackle the complex challenge of establishing bounds for classes in the circuit hierarchies and possibly enhancing our understanding of the role of counters in this setting.
    Additionally, we establish several completeness results, in particular obtaining the first ODE-based characterizations for the classes of functions computable in constant depth with unbounded fan-in and \textsc{Mod 2} gates ($\FACC$) and in logarithmic depth with bounded fan-in Boolean gates ($\NC^1$).
\end{abstract}


\section{Introduction}

Implicit computational complexity is a active area of theoretical computer science that aims to provide machine-independent characterizations of relevant complexity classes.
In the context of recursion theory, a foundational work in this field is by Cobham~\cite{Cobham}, who, in the 1960s, provided an implicit characterization of the class of polynomial-time computable functions ($\FP$) through a \emph{limited} recursion schema.
This work not only led to the development of function algebras characterizing multiple classes different from poly-time (see~\cite{CloteKranakis}), but also inspired several alternative recursion-theoretic approaches~\cite{BellantoniCook,Leivant,LeivantMarion93}.
Among them, the one introduced in~\cite{BournezDurand19}, shows that ordinary differential equations (ODEs) provide a natural tool for algorithmic design and offers an original characterization of $\FP$, using \emph{linear} systems of equations and \emph{derivation along a logarithmically growing function} (such as the length function $\ell$).
Informally, the latter condition controls the number of recursion steps, while linearity governs the growth of objects generated during the computation.
This approach has recently been generalized to the continuous setting~\cite{BlancBournez23,BlancBournez24} and to the study of two small circuit classes~\cite{ADK24a}.
In this paper, we extend the latter far further by presenting a \emph{comprehensive} investigation of small circuit classes through the lens of ODE schemas.

Although small circuit classes have been characterized in different ways, investigating them through the ODE framework has never been attempted (except for~\cite{ADK24a}).
This approach is both \emph{interesting}, as for a descriptive method based on ODEs to be meaningful and fruitful, it must be able to adjust with very subtle changes in restricted models of computation, and \emph{challenging}, because even simple and useful mathematical functions may not be computable within the considered classes and, consequently, the tools available may seem drastically restricted and the naturalness of the approach questionable.
However, the results of this paper show that this is not hopeless. 
Very simple and natural restrictions on linearity – namely, constraining the values of coefficients and allowing (or not) the call for the sign of the defined function inside the linear term – and on the function along which derivation is performed, allows us to capture the very essence of small circuit computation.
%
%
Many classical functions – such as polylog bit sum, parity, majority, iterated sum and product, and binary search – are shown to be naturally encapsulated in very simple linear ODE families.   
For this reason, we believe that our approach to this line of research, which is still in its beginning, can shed some new light on circuit computation and offer a simple (and machine-independent) framework for analyzing these notoriously hard problems through very \emph{natural} and \emph{uniform} constraints on linear ODEs.

Concretely, the results of this paper have led to completely original ODE-based characterizations for the classes of functions computable by unbounded fan-in circuits of constant depth and polynomial size that allow \textsc{Mod 2} gates ($\FACC$) and by bounded fan-in circuits of logarithmic depth and polynomial size ($\NC^1$).
We have also improved our understanding of schemas capturing the classes of functions computable by unbounded fan-in circuits of constant depth and polynomial size ($\AC^0$), possibly including \textsc{Maj} gates ($\TC^0$), by generalizing and making them part of a more general picture.
Finally, we have shown that considering schemas with \emph{derivative along $\ell_2=\ell\circ\ell$} is definitely worth investigating: in particular, the corresponding linear schema is in $\NC^1$, somewhat mirroring what $\ell$-ODE does for $\FP$.

\textbf{Structure of the paper.}
In Section~\ref{sec:preliminaries}, we present the basics of the method developed in~\cite{BournezDurand19}, the necessary notions to formulate our results, and recap definitions of the small circuit classes we will consider throughout the paper.
In Section~\ref{sec:towards}, we present, in a high-level yet systematic and uniform manner, the main schemas (differently) corresponding to the classes that will be studied.
In Section~\ref{sec:alongL}, we introduce ODE-based schemas obtained by deriving along $\ell$ and imposing different constraints on linearity, and we show that they are computable in the respective circuit classes.
We also provide improved completeness results for $\AC^0$ and $\TC^0$, along with completely original ODE-based characterizations for $\FACC$ and $\NC^1$.
In Section~\ref{sec:2ell}, we introduce a new class of ODE schemas, the defining feature of which consists of deriving along $\ell_2$.
Our main result here is introduction of a very natural linear $\ell_2$-ODE the computation of which can be performed in $\NC^1$.
We conclude in Section~\ref{sec:conclusion} pointing to possible directions for future research.
For space reasons, most proofs are omitted or only sketched in the paper,  but full details can be found in the Appendix. 
To make the exposition shorter, most of our characterizations are proved indirectly, 
as these proofs tend to be more compact.
However, direct proofs (both in the uniform and non-uniform settings) can be provided (but are not presented here).

\section{Preliminaries}\label{sec:preliminaries}

We assume the reader familiar with the basics of complexity theory.
In this section, we briefly introduce the approach to complexity delineated in~\cite{BournezDurand23} (see the original paper for full details) and a few notions needed for our new characterizations.
We also cursorily recall the definitions of the circuit classes that we will consider in the remainder of the paper.

\subsection{Capturing Complexity via ODEs}

One of the main approaches developed in implicit computational complexity is that based on recursion.
It was initiated by Cobham~\cite{Cobham}, who gave the first characterization for $\FP$ via a limited recursion schema. 
This groundbreaking result has inspired not only similar characterizations for other classes, but also alternative recursion-based approaches to capture poly-time computable functions, such as safe recursion~\cite{BellantoniCook} and ramification~\cite{Leivant,LeivantMarion}.
Among them, the proposal by~\cite{BournezDurand19} offers the advantages of not imposing any explicit bound on the recursion schema or assigning specific role to variables.
Instead, it is based on Discrete Ordinary Differential Equations (a.k.a. Difference Equations) and combines two novel features, namely~derivation along specific functions, to control the number of computation steps, and a special syntactic form of the equation 
(linearity),  to control the object size.

Recall that the \emph{discrete derivative of $\tu f(x)$} is defined as $\Delta \tu f(x) = \tu f(x+1) - \tu f(x)$, and that ODEs are expressions of the form 
$$
\frac{\partial \tu f(x, \tu y)}{\partial x} = \tu h\big(x,\tu y, \tu f(x, \tu y)\big)
$$ 
where $\frac{\partial \tu f(x, \tu y)}{\partial x}$ stands for the derivative of $\tu f(x, \tu y)$ considered as a function of $x$, for a fixed value for $\tu y$.
When some initial value $\tu f(0, \tu y) = \tu g(\tu y)$ is added, this is called Initial Value Problem (IVP).
Then, in order to deal with complexity, some restrictions are needed. 
First, we introduce the notion of \emph{derivation along functions}.

\begin{notation}
    For $x\neq 0$, let $\ell(x)$ denote the length function, returning the length of $x$ written in binary, i.e.~$\lceil$log$_2(x+1)\rceil$, and such that $\ell(0)=0$. 
    We use $\ell_2=\ell\circ\ell$.
    For $u\ge 0$, let $\alpha(u)=2^u-1$ denote the greatest integer whose length is $u$; similarly, $\alpha_2(u)=2^{2^u-1}-1$ is the greatest integer $t$ such that $\ell_2(t)=u$. 
    Finally, let $\div 2$ denote integer division by 2, i.e.~for all $x\in \mathbb{Z}, x \div 2 = \big\lfloor \frac{x}{2}\big\rfloor$.
\end{notation}

\noindent
 Notice that the value of $\ell(x)$ changes (it increases by 1) when $x$ goes from $2^t-1$ to $2^t$ (for $t\ge 1$).
    Similarly, the value of $\ell_2(x)$ changes when $x$ goes from $2^{2^1-1}-1$ to $2^{2^t-1}$ (for $t\ge 1$).

\begin{defn}[$\ell$-ODE Schema]
    Let $\tu f, \lambda :\Nat^p \times \mathbb{Z} \to \mathbb{Z}$ and $\tu h: \Nat^{p+1} \to \mathbb{Z}$ be functions.
    Then,
    \begin{align}
        \frac{\partial \tu f(x, \tu y)}{\partial \lambda} = \frac{\partial \tu f (x, \tu y)}{\partial \lambda (x, \tu y)} = \tu h\big(x, \tu y, \tu f(x, \tu y)\big) 
    \end{align}
    is a formal synonym of $\tu f(x+1, \tu y)=\tu f(x,\tu y)+\big(\lambda (x+1, \tu y) - \lambda (x, \tu y)\big) \times \tu h\big(x, \tu y, \tu f(x, \tu y)\big)$.
    When $\lambda (x, \tu y)=\ell (x)$, we call (1) a \emph{length-ODE} or $\ell$-ODE.
\end{defn}

\noindent
Intuitively, one of the key properties of the $\lambda$-ODE schema is its dependence on the number of distinct values of $\lambda$, i.e.,~the value of $\tu f(x, \tu y)$ changes only when the value of $\lambda(x, \tu y)$ does.
Computation of solutions of $\lambda$-ODEs have been fully described in~\cite{BournezDurand23}.
Here, we focus on the special case of $\lambda=\ell$, which is particularly relevant for our characterizations.
%
If $\tu f$ is a solution of (1) with $\lambda =\ell$ and initial value $\tu f(0, \tu y)=\tu g(\tu y)$, then $\tu f(1, \tu y)=\tu f(0, \tu y) + \tu h\big(\alpha(0), \tu y, \tu f(\alpha(0), \tu y)\big)$.
More generally, for all $x$ and $\tu y$:
$\tu f(x, \tu y) = \tu f(x-1, \tu y) 
+ \Delta \big(\ell(x-1)\big) \times \tu h\big(x-1, \tu y, \tu f(x-1, \tu y)\big)= \tu f\big(\alpha(\ell(x)-1), \tu y\big) + \tu h\big(\alpha(\ell(x)-1), \tu y, \tu f(\alpha(\ell(x)-1), \tu y)\big)$, where $\Delta(\ell(t-1)) =\ell(t)-\ell(t-1)$.
Starting from $t=x\ge 1$ and taking successive decreasing values of $t$, the first difference such that $\Delta (t)\neq 0$ is given by the biggest $t-1$ such that $\ell(t-1) = \ell(x)-1$, i.e.~$t-1=\alpha(\ell(x)-1)$.
Setting $\tu h( \alpha(-1), \tu y, \cdot) = \tu f(0, \tu y)$, it follows by  induction that:
$$
\tu f(x, \tu y) = \sum^{\ell(x)-1}_{u=-1} \tu h\big(\alpha(u), \tu y, \tu f(\alpha(u), \tu y)\big)
$$

%
In the present paper, both polynomial expressions (see~\cite{ADK24a}) and limited polynomial expressions are considered.
Let $\fun{sg}:\mathbb{Z} \to \mathbb{Z}$ be the sign function over $\mathbb{Z}$, taking value 1 for $x>0$ and 0 otherwise, and $\fun{cosg} :\mathbb{Z}\to \mathbb{Z}$ 
be the cosign function, such that $\fun{cosg}(x)=1-\fun{sg}(x)$.

\begin{defn}
    A \emph{limited} ($\fun{sg}$-)\emph{polynomial expression} is an expression  over the signature $\{+, -, \div 2\}$ (and the $\fun{sg}$ function, resp.) on the set of variables/terms $X=\{x_1,\dots, x_h\}$ and integer constants; ($\fun{sg}$-)\emph{polynomial expressions} are defined analogously over the signature $\{ +, -, \div 2,\times\}$.  
\end{defn}

\noindent
To define the notion of degree for polynomial expressions, we follow the generalization to sets of variables introduced in~\cite{ADK24a}.

\begin{defn}
    Given a list of variables $\tu x=x_{i_1},\dots, x_{i_m}$, for $i_1,\dots, i_m \in \{1,\dots, h\}$, the \emph{degree of a set $\tu x$ in a $\fun{sg}$-polynomial expression $P$}, $\deg(\tu x,P)$, is inductively defined as follows:
    \begin{itemize}
        \itemsep0em
        \item $\deg(\tu x, x_{i_j})=1$, for $x_{i_j}\in \{x_{i_1}, \dots, x_{i_m}\}$, and $\deg(\tu x, x_{i_j})=0$, for $x_{i_j}\not\in \{x_{i_1},\dots, x_{i_m}\}$
        \item  $\deg(\tu x, \fun{sg}(P))=\deg(t\tu x, c) = 0$, for $c$ being an integer constant
        \item $\deg(\tu x, P+Q)=\deg(\tu x, P-Q)=\max\{\deg(\tu x, P), \deg(\tu x, Q)\}$
        \item $\deg(\tu x, P\times Q)=\deg(\tu x, P)+\deg(\tu x, Q)$ and $\deg(\tu x, P\div 2)=\deg(\tu x, P)$.
    \end{itemize}
\end{defn}

    %

\noindent
A $\fun{sg}$-polynomial expression $P$ is said to be \emph{essentially constant in a set of variables $\tu x$} when $\deg(\tu x, P)=0$
and \emph{essentially linear in $\tu x$}, when $\deg(\tu x, P)=1$. 

\begin{example}
    Let $\tu x = \{x_1, x_2, x_3\}$.
    The expression $P(x_1, x_2, x_3)=3x_1x_3 + 2x_2x_3$ is essentially linear in $x_1$, in $x_2$ and in $x_3$.
    It is not essentially linear in $\tu x$, as $\deg(P, \tu x)=2$.
    The expression $P'(x_1,x_2,x_3)=x_1\fun{sg}((x_1-x_3) \times x_2) + x_2^3$ is essentially linear in $x_1$, essentially constant in $x_3$ and not linear in $x_2$.
    Clearly, $P'$ is not linear in $\tu x$.
\end{example}

\noindent
The concept of \emph{linearity} allows us to control the growth of functions defined by ODEs.
In what follows, we will consider functions $\tu f:\Nat^{p+1}\to \mathbb{Z}^d$, i.e.~vectors of functions $\tu f=f_1, \dots, f_d$ from $\Nat^{p+1}$ to $\mathbb Z$, and introduce the linear $\ell$-ODE schema.

\begin{defn}[Linear $\lambda$-ODE]~\label{def:linear length ODE}
    Given $\tu g:\Nat^p \to \Nat^d, \tu h,\lambda: \Nat^{p+1} \to \mathbb{Z}$ and $\tu u:\mathbb{Z}^{d+1}\times \Nat^{p+1} \to \mathbb{Z}^d$, the function $\tu f:\Nat^{p+1}\to \mathbb{Z}^d$ is \emph{linear $\lambda$-ODE} definable from $\tu g, \tu h$ and $\tu u$ if it is the solution of the IVP with initial value $\tu f(0, \tu y) = \tu g(\tu y)$ and such that:
    $$
        \frac{\partial f(x, \tu y)}{\partial \lambda} = \tu u\big(\tu f(x, \tu y), \tu h(x, \tu y), x, \tu y\big)
    $$
    where $\tu u$ is \emph{essentially linear in the list of terms $\tu f(x, \tu y)$}.
    For $\lambda = \ell$, such schema is called \emph{linear length ODE}.
\end{defn}

\noindent 
If $\tu u$ is \emph{essentially linear in $\tu f(x, \tu y)$}, there exist matrices $\tu A$ and $\tu B$, whose coefficients are essentially constant in $\tu f(x, \tu y)$, such that $\tu f(0, \tu y)=\tu g(\tu y)$ and:
$$
\frac{\partial \tu f(x, \tu y)}{\partial \ell} = 
\tu A\big(x,\tu y, \tu h(x, \tu y),\tu f(x, \tu y)\big) \times \tu f(x, \tu y) + \tu B\big(x, \tu y, \tu h(x, \tu y), \tu f(x, \tu y)\big).
$$
Then, for all $x$ and $\tu y$, $\tu f(x, \tu y)$ is
$$
\sum^{\ell(x)-1}_{u=-1} \Bigg(\prod^{\ell(x)-1}_{t=u+1} \bigg(1+ \tu A\Big(\alpha(t), \tu y, \tu h(\alpha(t), \tu y), \tu f(\alpha(t), \tu y)\Big)\bigg) \Bigg) \times \tu B\big(\alpha(u), \tu y, \tu h(\alpha(u), \tu y), \tu f(\alpha (u), \tu y)\big)
$$
with the convention that $\prod^{x-1}_x\kappa(x)=1$ and $\tu (\alpha(-1), \tu y, \cdot, \cdot )=\tu f(0, \tu y)$.
Actually, the use of matrices is not necessary for the scope of this paper, but it is consistent with the general presentation of computation through essentially linear ODEs, of which full details can be found in~\cite{BournezDurand19,BournezDurand23}.

One of the main results of~\cite{BournezDurand19} is the implicit characterization of $\FP$ by the algebra made of basic functions $\fun{0}, \fun{1}, \pi^p_i, \ell, +, -, \times, \fun{sg}$, where $\pi^p_i$ denotes the projection function, and closed under composition ($\circ$) and $\ell$-ODE.

\subsection{On Small Circuit Classes and Function Algebras}

Boolean circuits are vertex-labeled directed acyclic graphs whose nodes/gates are either input nodes, 
output nodes
or are labeled with a Boolean function from the set $\{\wedge, \vee, \neg\}$.
Boolean circuit with modulo gates allows in addition gates labeled \textsc{Mod 2} or, more generally, 
\textsc{Mod $p$}, that output the sum of the inputs modulo $2$ or $p$, respectively.
Finally, Boolean circuit with majority gates includes \textsc{Maj} gates, that output 1 when the majority of the inputs are 1's. 
A family of circuits 
is $\mathbf{Dlogtime}$-uniform if there is a Turing machine (with a random access tape) that decides in deterministic logarithmic time the direct connection language of the circuit, i.e.~which, given 1$^n, a,b$ and $t\in \{\wedge, \vee, \neg\}$, decides if $a$ is of type $t$ and $b$ is a predecessor of $a$ in the circuit (and analogously for input/output nodes).
When dealing with circuits, the resources of interests are \emph{size}, i.e. the number of gates, and \emph{depth}, i.e.~the length of the longest path from input to output (see~\cite{Vollmer} for further details and related results).

\begin{defn}[Classes $\AC^i$ and $\NC^i$]
    For $i\in \Nat$, 
    $\mathbf{AC}^i$ (resp., $\mathbf{NC}^i$) is the class of language recognized by a $\mathbf{Dlogtime}$-uniform family of Boolean circuits of unbounded (resp. bounded) fan-in gates of polynomial size and depth $O(($log $n)^i)$.
    We denote by $\AC^i$ (resp., $\NC^i$) the corresponding function class.
\end{defn}
\noindent 
It is known that unbounded (poly$(n)$) fan-in can be simulated using a bounded tree of depth $O($log $n)$, so that $\NC^i \subseteq \AC^i \subseteq \NC^{i+1}$.
However, the inclusion has been proved to be strict only for $i=0$.
In contrast, when dealing with classes with counting ability, interesting bounds have been established. 

\begin{defn}[Classes $\mathbf{FACC[p]}$ and $\TC^0$]
    The class $\mathbf{ACC[p]}$ (resp., $\mathbf{TC}^0$) is the class of languages recognized by a $\mathbf{Dlogtime}$-uniform family of Boolean circuits including \textsc{Mod $p$} (resp., \textsc{Maj}) gates of polynomial size and constant depth.
    We denote by $\mathbf{FACC[p]}$ (resp., $\TC^0$) the corresponding function class.
\end{defn}
\noindent
We can even consider other levels in the $\TC$ hierarchy, such that for $i\in \Nat$, $\TC^i$ corresponds to classes of functions computed by families of unbounded fan-in circuits including \textsc{Maj} gates, of polynomial size and depth $O$((log $n)^i$).
It is known that the inclusion $\AC^0 \subset \FACC$ is proper, as the parity function cannot be computed in constant depth by standard unbounded fan-in Boolean gates, but it can be if \textsc{Mod 2} gates are added. 
Moreover, since any \textsc{Mod $p$} gate can be expressed by \textsc{Maj}, $\mathbf{FACC[p]}\subset \TC^0$.

    As mentioned, Cobham's work paved the way to several recursion-theoretic characterizations for classes other than $\FP$, including small circuit ones.
    In 1988/90, Clote introduced the algebra $\mathcal{A}_0$ to capture functions in the log-time hierarchy (equivalent to $\AC^0$) using so-called \emph{concatenation recursion on notation} (CRN)~\cite{Clote88,Clote1990}.
    The greater classes $\AC^i$ and $\NC$ are captured adding \emph{weak bounded recursion on notation}~\cite{Clote1990}, while
    $\NC^1$ has been characterized both in terms of the \emph{$k$-bounded recursion on notation} ($k$-BRN) schema~\cite{Clote93} and relying on the basic function $tree$~\cite{Clote1990}.
    In 1990, an alternative algebra (and logic), based on a schema called \emph{upward tree recursion}, was introduced to characterize $\NC^1$~\cite{ComptonLaflamme}.
    %
    In 1991, a new recursion-theoretic characterization (and arithmetic \emph{\`a la Buss}) for $\NC$ was presented~\cite{Allen}.
    In 1992 and 1995, other small circuit classes, including $\TC^0$, were considered (and re-considered) by Clote and Takeuti, who also introduced bounded theories for them~\cite{CloteTakeuti92,CloteTakeuti}.
    %
    %
    More recently, characterizations for $\NC^i$ were developed in~\cite{BonfanteKahleMarionOitavem}, this time following the ramification approach. 
    In~\cite{ADK24a}, the first ODE-based characterizations for $\AC^0$ and $\TC^0$ were introduced. 
    Related approaches to capture small circuit classes have also been provided in the framework of model- and proof-theory, both in the first and second order~\cite{Immerman,BarringtonImmermanStraubing,GurevichLewis,Lindell,ComptonLaflamme,Allen,CloteTakeuti92,CloteTakeuti,Johannsen,CookNguyen,Arai,CookMorioka,DurandHaakVollmer}.
    %
    %
    %
    %
    %
    To the best of our knowledge, so far, no \emph{uniform} implicit framework to characterize these classes has been offered in the literature.

\section{Towards a Uniform Understanding of Small Circuit Classes}\label{sec:towards}

We informally outline here the comprehensive investigation of small circuit classes in terms of ODEs (see Figure~\ref{fig} for a complete description). 
By slightly modifying the restrictions on the linearity allowed  in $\ell$-ODE and $\ell_2$-ODE, we can perform computation and, in several cases, characterize multiple function classes in the range from  $\mathbf{FAC}^0$  to  $\mathbf{FAC}^1$.

\begin{notation}\label{notation:strict}
    An equation or, by extension, an ODE schema is said to be \emph{strict} when it does not include any call to $f(x, \tu y)$ in $A$ or $B$ (not even under the scope of the sign function). 
    We say that $B$ includes \emph{simple} calls to $f(x,\tu y)$ if $f(x,\tu y)$ occurs in $B$ only in expressions of the form $s=\fun{sg}(f(x,\tu y))$.
    \\
    For clarity's sake, we will use $k(x, \tu y)$ for functions with limited range, typically in $\{0,1\}$, and $K(x, \tu y, h(x, \tu y), f(x, \tu y))$ for (possibly limited) $\fun{sg}$-polynomial expressions of restricted range.
    \\
    We use $s\big(x,\tu y, h(x, \tu y), f(x, \tu y)\big)$ or (when not ambiguous) simply $s(x)$ as an abbreviation for signed expressions in which $f(x, \tu y)$ can occur, i.e.~$s=\fun{sg}\big(E(x, \tu y, h(x, \tu y), f(x, \tu y))\big)$, where $E$ is any polynomial expression in which $x, \tu y, h(x, \tu y)$ and $f(x, \tu y)$ may appear.
    Finally, for readability, we often use 
    \begin{itemize}
        \itemsep0em
        \item $s(t)$ as a shorthand for $s\big(\alpha(t),\tu y, h(\alpha(t), \tu y), f(\alpha(t), \tu y)\big)$, 
        \item $K(t,\tu y, h, f)$, or even $K(t)$, as a shorthand for $K(\alpha(t), \tu y, h(\alpha(t),\tu y), f(\alpha(t), \tu y))$,
        \item $K_i(t)$ (or $B_i(t)$), with $i\in \{0,1\}$, as a shorthand for $K(\alpha(t), \tu y, h(\alpha(t), \tu y), f(\alpha(t), \tu y))$ in which $s(\alpha(t))=i$.
    \end{itemize}
    When different signed expressions $s_1, \dots, s_r$ occur in $A$ or $B$ these notational conventions are generalized accordingly.
    If not stated otherwise, we assume that expressions used in $\AC^0$ and $\FACC$ schemas are limited $\fun{sg}$-polynomial expressions (since multiplication is not in these classes), while those in $\NC^1$ and $\AC^1$ are $\fun{sg}$-polynomial expressions.
\end{notation}

Let us first examine the evolution in the expressive power of the linear schema (see Def.~\ref{def:linear length ODE}), which captures $\FP$,
$$A(x,\tu y, h, f) \times  f(x, \tu y) + B(x, \tu y,  h, f)$$

\noindent
when calls to $f$ are not allowed, i.e.~when dealing with equation of the form $A(x,\tu y, h) \times f(x, \tu y) + B(x, \tu y,  h)$, but still deriving along $\ell$. 
In this case, $f$ can be expressed as a combination of iterated multiplications and products and is in $\TC^0$. 
Interestingly, this schema is also the key ingredient to capture $\TC^0$ and to provide a corresponding completeness result. 
On the other hand, if calls to $f$ are allowed (i.e.~we are back to the general form), but the derivative is along the ``slower'' function $\ell_2$, then expressivity jumps to $\NC^1$. 

As a second example to focus on the smallest classes, consider schemas of the form $-f(x,\tu y)+B(x,\tu y,h,f)$, for $B\ge 0$, with derivative along $\ell$ and only \textit{simple} calls to $f$. 
We show that such defined functions are in $\NC^1$. 
Forcing the second part to be in $\{0,1\}$, i.e.~considering equations of the forms $-f(x,\tu y)+K(x,\tu y,h,f)$ and $-f(x,\tu y)+k(x,\tu y,h)$ (namely,~in the latter case, calls to $f$ are not allowed) makes the functions respectively in $\FACC$ and $\AC^0$. 
Remarkably, in the first two cases, the mentioned schemas are central to characterize the corresponding classes.

Finally, we examine (strict) equations of the form $B(x,\tu y,h)$ (where $B$ is a limited $\sign$-polynomial expression), this time deriving along $\loglog$. 
Equations of this form provide the ODE equivalent to log iterated sums and, not surprisingly, define a computation in $\AC^0$. 
Generalizing them to the non-strict setting by allowing equations of the form $B(x,\tu y,h,f)$, where calls to $f$ occur, gives a really more tricky schema, that is shown computable in $\TC^0$.

\longv{

\noindent \textbf{Strict schemas deriving along $\ell$.}
In the context of \emph{strict} derivation along $\ell$, we consider increasingly stronger constraints on linearity.
When linearity is limited to either $A=\pm 1$ or $B=k(x,\tu y)\in \{0,1\}$ or $A=k(x, \tu y)-1$ and $B=0$, computation is shown to be in $\AC^0$.
If $A$ and $B$ are properly linked, we can even allow $A$ to be 0, i.e.~when $A,B\in \{0,1\}$, $A= - k(x,\tu y)$ and $B=k(x,\tu y)\times k'(x, \tu y)$, defining a system which can be computed in (and characterizes) $\FACC$.
Computation of full strict linear length ODEs, i.e. of the form $A(x,\tu y, h)\times f(x,\tu y)+B(x, \tu y, h)$, is proved to be in $\TC^0$.
Notice that already the very limited case of $A=0$ and $B=k(x,\tu y)\in \{0,1\}$ is enough to go beyond $\FACC$, as it naturally expressed \textsc{BCount}, which is not in this class.

\noindent\textbf{Non-strict schemas deriving along $\ell.$}
Also when dealing with ODEs including calls to $f(x,\tu y)$ in $A$ and $B$ strong connections between the shape of linear equations and complexity classes have emerged.
In particular,  computation through $\ell$-ODE in which $A=K(x,\tu y, h, f)-1$, with simple calls of $f(x,\tu y)$ (or slightly generalized to expressions of the form $s=\fun{sg}(f(x,\tu y) - c)$, for $c\in \Nat$) in it, and $B=0$ is shown to be in $\AC^0$.
If $A=-1$ and $B=K\in \{0,1\}$, the corresponding system is not only in $\FACC$, but also characterizes this class, as it expresses the parity function.
By considering $A=0$ and $B=K\in \{0,1\}$ with simple calls only, we obtain a stronger schema, that is computable in $\TC^0$ and able to naturally express \textsc{BCount}.
Then, by generalizing the non-strict $\ell$-ODE schema that characterizes $\FACC$, so that $A = -1$ still holds but $B$ can take any positive value, we obtain a schema which is in (and characterizes) $\NC^1$.
Even computation through a system defined by $A = 1$ and $B = K \in {0, 1}$ with simple (or slightly generalized) calls only, has been shown to be in $\NC^1$.
If $A$ is also allowed to be 0, computation through this generalized schema is provably in $\AC^1$.

\noindent\textbf{Schemas deriving along $\ell_2$.}
Although the feature of deriving along $\ell_2$ is introduced here for the first time, investigating this generalization of the schema presented in \cite{BournezDurand19} is particularly natural. Remarkably, it is shown here that the $\ell_2$ counterpart to the linear $\ell$-ODE schema characterizing $\FP$, his here shown related to circuit classes. 
We show that when $A = 0$, the strict $\ell_2$-ODE schema is in $\AC^0$ and straightforwardly captures \textsc{LogItAdd} computation.
If this schema is generalized to include calls to $f(x, \tu y)$, we obtain a system whose computation is in $\TC^0 $.
Finally, when we consider the $\ell_2$ version of linear-length ODE, $A(x,h, f) \times f(x) + B(x,h, f)$, we obtain a schema whose computation is not only in $\NC^1$ but also provides a new ODE-based characterization of this class.
 } 

\begin{figure}\
\caption{ODE Schemas and Small Circuit Classes in a Nutshell}\label{fig}
\centering
\begin{center}
\begin{tabular}{c || c | c | c }
& strict schemas along $\ell$ & non-strict schemas along $\ell$ & schemas along $\ell_2$ \\
\hline
\hline
$\AC^0$ &
 $(k(x)-1) \times f(x)$ 
 & $(K(x,f)-1) \times f(x)$ 
 & $B(x)$  \\
  &
 $\pm \textcolor{blue}{f(x) + k(x)}$ 
 &  
 &  \\ 
 \hline
 $\FACC$ &
  $ -k(x) \times f(x) + k(x) \times k'(x)$
  & \textcolor{blue}{$-f(x) + K(x,f)$}  &  \\  
  %
  %
%
  \hline
 $\TC^0$ & \textcolor{blue}{$A(x)\times f(x)+B(x)$} & $K(x,f)$ & $B(x,f)$  \\
 & & $K(x,f) \times f(x)$ & \\ \hline
 $\NC^1$ & & $f(x) + K(x,f)$ & $A(x,f) \times f(x) + B(x,f)$\\
 & & \textcolor{blue}{$-f(x) + B(x,f)$} & \\
 \hline
 $\AC^1$ & & $B(x,f)$ & \\
\end{tabular}
\end{center}
\footnotesize
\raggedright
The table summarizes the features defining the equations of the main linear schemas in the corresponding small circuit class.
For readability, the arguments $h $ and $\tu y$ are omitted. 
Schemas in blue are not only computable in the corresponding class, but also provide a characterization for it.
Calls in non-strict schemas deriving along $\ell$ are simple or restricted to $s=\fun{sg}(f(x)-c)$, for $c\in \Nat$, except for the schema in $\FACC$, and $B$ is assumed to take positive values only.
\end{figure}
\normalsize

\section{Investigating Circuit Complexity Deriving along $\ell$}\label{sec:alongL}

In this section, we present linear ODE schemas obtained by deriving along $\ell$ and show that they are computable and, in some cases, capable of characterizing the corresponding class.

\subsection{The Class $\AC^0$}\label{sec:AC}

We start by revising the characterization of $\AC^0$ from~\cite{ADK24a}.
Recall that the function algebra $\ACDL$ capturing $\AC^0$ is based on the two very restrictive forms of $\ell$-ODE's below.

\begin{defn}[Schemas $\ell$-ODE$_1$, $\ell$-ODE$_3$]
    Given $g:\Nat^p\to \Nat$ and $k:\Nat^{p+1} \to \{0,1\}$, the function $f:\Nat^{p+1} \to \Nat$ is obtained by $\ell$-ODE$_1$ from $g$, if it is the solution of the IVP defined by  $f(0, \tu y)=g(\tu y)$ and 
    $$
    \frac{\partial f(x, \tu y)}{\partial \ell} = f(x, \tu y)+ k(x, \tu y).
    $$
    On the other hand, $f$ is obtained by $\ell$-ODE$_3$ from $g$ if it is the solution of the IVP defined by  $f(0, \tu y)=g(\tu y)$ and
    $$
    \frac{\partial f(x, \tu y)}{\partial \ell}=-\bigg\lceil \frac{f(x, \tu y)}{2}\bigg\rceil.
    $$
\end{defn}

\noindent
Relying on them, in~\cite{ADK24a}, the ODE-based function algebra below is introduced:
$$
\ACDL =
[\fun{0}, \fun{1}, \pi^p_i, \ell, +, -, \#, \div 2, \fun{sg}; \circ, \ell\text{-ODE}_1, \ell\text{-ODE}_3]
$$
where, as standard, $x \# y = 2^{\ell(x)\times \ell(y)}$.
In the same paper, this class is shown to precisely characterize $\AC^0$, see~\cite[Cor. 19]{ADK24a}.
%
Here, we dramatically simplify the (indirect) completeness proof, by showing that  $\ell$-ODE$_3$ precisely captures the idea of looking for the value of a bit in an $\ell(x)$ long sequence, while $\ell$-ODE$_1$ corresponds to the idea of left-shifting (the binary representation of) a given number, possibly adding 1.

\begin{remark}[The Function $\fun{BIT}$]\label{remark:BIT}
    Given an input $y$ of length 
    $\ell(y)$, for any $i\in \{0,\dots, \ell(y)\}$, the schema $\ell$-ODE$_3$ allows us to compute its $i^{th}$ bit in a direct way.
    Let us consider an instance of $\ell$-ODE$_3$ such that $g(\tu y)=y$, i.e. defined by $\fun{rsh}(0, y)=y$ and
    $$
    \frac{\partial \fun{rsh}(x,y)}{\partial \ell} = - \bigg\lceil \frac{ \fun{rsh}(x, y)}{2}\bigg\rceil.
    $$
    Then, the $\ell(x)^{th}$ bit of $y$ is $\fun{BIT}(x, y)=\fun{rsh}(x,y)-\fun{rsh}(x+1, y) \times 2$.
\end{remark}

\begin{remark}[The CRN Schema]\label{remark:CRN}
    In~\cite{Clote1990}, CRN is defined by $f(0, \tu y)=g(\tu y)$ and $f(s_i(x), \tu y)=s_{h_i(x,\tu y)}(f(x, \tu y))$, for $i\in \{0,1\}$, $h_i$ taking value in $\{0,1\}$ and $s_i(x)=2x+i$ being binary successor functions.
    We can rewrite this schema 
    as an instance of
    $\ell$-ODE$_1$ such that 
    \small
    \begin{align*}
        f_{crn}(0, y, \tu y) &= g(\tu y) \\
        \frac{\partial f_{crn}(x, y, \tu y)}{\partial \ell} &= f_{crn}(x,y,\tu y) + h_{\fun{BIT}(\ell(y)-\ell(x)-1,y)}(x, \tu y) \\
        &= f_{crn}(x,y,\tu y) +   h_0(x, \tu y) \times \fun{cosg}\big(\fun{BIT}(\ell(y)-\ell(x)-1, y)\big) \\
        & \quad \quad \quad \quad \quad \quad \; \; + h_1(x, \tu y) \times \fun{sg}\big(\fun{BIT}(\ell(y)-\ell(x)-1, y)\big)
    \end{align*}
    \normalsize
    where, by definition, for any $t$, $h_i(t,\tu y) \leq 1$.
    Then, $\fun{crn}(x,\tu y)=f_{crn}(x,x,\tu y)=f(x, \tu y)$.
\end{remark}

\begin{theorem}[\cite{ADK24a}]
    $\AC^0=\ACDL$.
\end{theorem}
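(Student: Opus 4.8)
The plan is to prove the two inclusions $\ACDL \subseteq \AC^0$ and $\AC^0 \subseteq \ACDL$ separately, using the simplified reading of the two ODE schemas provided in Remarks~\ref{remark:BIT} and~\ref{remark:CRN}.

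For the inclusion $\ACDL \subseteq \AC^0$, I would argue by structural induction on the definition of functions in $\ACDL$. The base functions $\fun{0}, \fun{1}, \pi^p_i, \ell, +, -, \#, \div 2, \fun{sg}$ are all easily seen to be computable by $\mathbf{Dlogtime}$-uniform constant-depth unbounded fan-in circuits (for $\ell$ and $\#$ one needs the standard observation that lengths of polynomially-bounded numbers fit in $O(\log n)$ bits, so the relevant arithmetic is trivial at this scale). Composition is handled by stacking circuits, which keeps depth constant since we compose a fixed (constant) number of times. The crux is closure under $\ell$-ODE$_1$ and $\ell$-ODE$_3$. Here I would use the closed-form solution for $\ell$-ODE recalled in the preliminaries: a solution of an $\ell$-ODE is obtained as a sum/iteration indexed by $u$ ranging over $-1,\dots,\ell(x)-1$, and since inputs are polynomially bounded, $\ell(x) = O(\log n)$, so there are only $O(\log n)$ stages. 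For $\ell$-ODE$_3$, the closed form says $\fun{rsh}(x,y)$ is just $y$ right-shifted $\ell(x)$ times, i.e. $y \div 2^{\ell(x)}$, which is extractable in $\AC^0$ (it is a bit-selection/truncation, definable from $\BIT$); hence $\fun{BIT}(x,y)$ is in $\AC^0$. For $\ell$-ODE$_1$, unrolling gives $f(x,\tu y) = 2^{\ell(x)} g(\tu y) + \sum_{u} 2^{\,\ell(x)-1-u} k(\alpha(u),\tu y)$, a sum of $O(\log n)$ many shifted bits — again an $\AC^0$ operation, provided $k$ is itself in $\AC^0$ by the induction hypothesis. (One must check the range of $f$ stays polynomially bounded in bit-length; since there are only logarithmically many stages and $k\in\{0,1\}$, the length grows only by $O(\log n)$, which is fine.)

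For the inclusion $\AC^0 \subseteq \ACDL$, I would not build circuits directly but instead go through Clote's algebra $\mathcal{A}_0$ characterizing $\AC^0$, whose defining schema is concatenation recursion on notation (CRN). The key step is Remark~\ref{remark:CRN}: any function defined by CRN can be re-expressed as an instance of $\ell$-ODE$_1$, using $\fun{BIT}$ (itself obtained from $\ell$-ODE$_3$ via Remark~\ref{remark:BIT}) to read off the bit of $y$ selected at stage $\ell(x)$, and using $\fun{sg}/\fun{cosg}$ together with the $\{0,1\}$-valued step functions $h_0,h_1$ to implement the case distinction $s_{h_i(x)}$. One then checks that all base functions of $\mathcal{A}_0$ are in $\ACDL$ and that $\ACDL$ is closed under the composition operators $\mathcal{A}_0$ uses; combined with the CRN-to-$\ell$-ODE$_1$ translation, this gives $\mathcal{A}_0 \subseteq \ACDL$, hence $\AC^0 \subseteq \ACDL$.

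I expect the main obstacle to be the $\ACDL \subseteq \AC^0$ direction, and specifically the bookkeeping around $\ell$-ODE: one must verify uniformly that the iteration has only $O(\log n)$ stages on polynomially-bounded inputs, that the intermediate values stay polynomially bounded in magnitude (so that the whole computation remains an $\AC^0$-sized object), and that the resulting logarithmic-width sum of shifted values is genuinely $\mathbf{Dlogtime}$-uniform $\AC^0$ and not something that secretly needs iterated addition of $n$ numbers (which would be $\TC^0$). The point is that the number of summands is $O(\log n)$, not $\mathrm{poly}(n)$, and adding $O(\log n)$ numbers each of $O(\log n)$ bits is in $\AC^0$; making this precise and uniform — together with confirming that $\div 2$ applied inside $\ell$-ODE$_3$ really does compute the ceiling-halving correctly at each of the $O(\log n)$ stages so that $\fun{rsh}$ equals truncated division — is where the real care is needed. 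The rest is routine.
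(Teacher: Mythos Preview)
Your proposal is correct and follows essentially the same approach as the paper: the $\AC^0 \subseteq \ACDL$ direction goes through Clote's algebra $\mathcal{A}_0$ exactly as you describe, using Remarks~\ref{remark:BIT} and~\ref{remark:CRN} to encode $\fun{BIT}$ and CRN, while for $\ACDL \subseteq \AC^0$ the paper simply cites the closure propositions from~\cite{ADK24a}, whose content is precisely the structural-induction argument you sketch. Your only mild over-caution is in worrying about iterated addition for $\ell$-ODE$_1$: since the summands $2^{\ell(x)-1-u}k(\alpha(u),\tu y)$ have disjoint bit positions, the ``sum'' is really a carry-free concatenation, so the $\AC^0$ membership is even more immediate than via \textsc{LogItAdd}.
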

\begin{proof}
$(\supseteq)$ By~\cite[Prop. 12, 17]{ADK24a}. 
$(\subseteq)$ The proof is indirect and passes through Clote's algebra
$$
\mathcal{A}_0 = [\fun{0}, \pi^p_i, \fun{s}_0, \fun{s}_1, \ell, \fun{BIT}, \#; \circ, \text{CRN}],
$$
which, in~\cite{Clote1990}, is proved to characterize $\AC^0$.
Indeed, its basic functions 0, $\pi^p_i, \ell, \#$ and composition are by design in $\ACDL$, $s_0,s_1$ can be rewritten in our setting almost for free, while $\fun{BIT}$ and CRN can be encoded as in Remarks~\ref{remark:BIT} and~\ref{remark:CRN}.
\end{proof}

\begin{toappendix}
\subsection{Strict Schemas in $\AC^0$}\label{appendix:AC0}


\begin{lemma}
    Let $k: \Nat^{p} \to \{0, 1\}$ be computable in $\AC^0$. 
    Then, the function $f: \Nat^{p+1} \to \Nat$ defined by $k$ as the solution of the IVP below:
        \begin{align*}
        f(0, \tu y) &= g(\tu y) \\
        \frac{\partial f(x, \tu y)}{\partial \ell} &= - f(x, \tu y) + k(x, \tu y)
        \end{align*}
    is in $\AC^0$.
\end{lemma}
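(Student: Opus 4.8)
The plan is to solve the initial value problem in closed form and then observe that the resulting expression is, except for a corner case at $x=0$, merely a composition of $\AC^0$ functions.

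First I would instantiate the closed-form solution for linear $\ell$-ODEs recalled in Section~\ref{sec:preliminaries} with $\tu A\equiv-1$ and $\tu B=k$. Since every factor $1+\tu A(\alpha(t),\tu y,\dots)$ is then $0$, each product $\prod_{t=u+1}^{\ell(x)-1}\bigl(1+\tu A(\alpha(t),\dots)\bigr)$ vanishes unless it is an empty product, i.e.\ unless $u=\ell(x)-1$; the whole sum therefore collapses to a single summand and one gets
\[
f(x,\tu y)=k\bigl(\alpha(\ell(x)-1),\tu y\bigr)\ \text{ for }x\ge 1,\qquad f(0,\tu y)=g(\tu y).
\]
The same identity can be read off the recurrence directly: $f$ is constant on each interval $2^{t}\le x<2^{t+1}$, and at the single step $x=2^{t}-1\mapsto 2^{t}$ inside that interval the value is reset to $-f(2^{t}-1,\tu y)+k(2^{t}-1,\tu y)=k(\alpha(t),\tu y)$, while $\ell(x)=t+1$ throughout.

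Second I would check that this closed form is computable in $\AC^0$. The map $x\mapsto\alpha(\ell(x)-1)=2^{\ell(x)-1}-1$ is in $\AC^0$: its binary length is at most $\ell(x)$, and its bit of index $i$ equals $1$ iff $x\ge 2^{\,i+1}$, i.e.\ iff some bit of $x$ of index greater than $i$ is $1$ — an unbounded fan-in disjunction, clearly $\mathbf{Dlogtime}$-uniform. By hypothesis $k\in\AC^0$, and the initial function $g$ is in $\AC^0$ as well (as is implicit whenever such a schema is applied in this setting). Since $\AC^0$ is closed under composition and under $\AC^0$-definable case distinctions, the function returning $g(\tu y)$ when $x=0$ (a trivial $\AC^0$ test on the bits of $x$) and $k(\alpha(\ell(x)-1),\tu y)$ otherwise lies in $\AC^0$; by the displayed identity this function is exactly $f$.

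I expect no genuine obstacle: the only steps requiring care are the collapse of the closed-form sum — equivalently, recognising the ``reset'' dynamics of the $A=-1$ schema — and the separate treatment of $x=0$, which is forced because $\alpha(\ell(0)-1)=\alpha(-1)$ is not defined over $\Nat$ and there the solution is just the initial value $g(\tu y)$. Once the closed form is established, the remainder is a routine closure-under-composition argument for $\AC^0$.
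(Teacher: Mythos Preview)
Your proof is correct and follows the same idea as the paper's: with $A=-1$ the linear $\ell$-ODE collapses to a single evaluation of $k$, which is in $\AC^0$ by hypothesis. The paper's own proof is a single sentence (``by definition of $\ell$-ODE, we simply have to compute $f(x,\tu y)=k(x,\tu y)$''), so your version is actually more precise in spelling out that the argument is $\alpha(\ell(x)-1)$ rather than $x$ itself and in handling the $x=0$ corner case explicitly.
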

\begin{proof}
    By definition of $\ell$-ODE, we simply have to compute $f(x, \tu y) = k(x, \tu y)$, which can be done in $\AC^0$, whenever $k(x, \tu y)$ can be computed in $\AC^0$.
\end{proof}

\begin{lemma}\label{lemma:ODEzero}
Let $g:\Nat^p \to \Nat$ and $k:\Nat^{p+1}\to \{0,1\}$ be computable in $\AC^0$.
Then, the function $f:\Nat^{p+1}\to \Nat$ defined by $k$ as the solution of the IVP below:
    \begin{align*}
        f(0, \tu y) &= g(\tu y) \\
        \frac{\partial f(x, \tu y)}{\partial \ell} &= \big(k(x, \tu y) -1\big) \times f(x, \tu y)
    \end{align*}
is in $\AC^0$.
\end{lemma}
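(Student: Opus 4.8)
The plan is to reduce the IVP to a transparent closed form by invoking the general solution of linear $\ell$-ODEs recalled in Section~\ref{sec:preliminaries}. Here the schema is \emph{strict}: the coefficient $A(x,\tu y) = k(x,\tu y) - 1$ contains no call to $f$, and the additive part is $B = 0$. Plugging this into the solution formula, every summand with index $u \ge 0$ vanishes since it is multiplied by $B = 0$, and only the $u = -1$ summand — which by convention carries the factor $f(0,\tu y) = g(\tu y)$ — survives. Because $1 + A(\alpha(t),\tu y) = 1 + (k(\alpha(t),\tu y) - 1) = k(\alpha(t),\tu y)$, this yields
$$
f(x,\tu y) \;=\; g(\tu y)\cdot\prod_{t=0}^{\ell(x)-1} k(\alpha(t),\tu y).
$$
As each factor $k(\alpha(t),\tu y)$ lies in $\{0,1\}$, the product is the Boolean \textsc{And} of the $\ell(x)$ bits $k(\alpha(0),\tu y),\dots,k(\alpha(\ell(x)-1),\tu y)$; in particular $f(x,\tu y)$ equals $g(\tu y)$ when all of them are $1$ and $0$ otherwise, which in passing confirms that $f$ is $\Nat$-valued and consistent with the initial condition (the empty product at $x=0$ returning $g(\tu y)$).

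It then remains to check that this closed form is computed by a $\mathbf{Dlogtime}$-uniform family of constant-depth, polynomial-size circuits. First I would note that $\alpha(t) = 2^t - 1$ is, in binary, the all-ones string of length $t$, so the (at most $\ell(x) = O(\log n)$) values $\alpha(0),\dots,\alpha(\ell(x)-1)$ can all be generated from $x$ in parallel in $\AC^0$. Evaluating $k$ at each of these arguments is done by laying down that many disjoint copies of the given $\AC^0$ circuit for $k$; since there are only logarithmically many copies, each of polynomial size, the result stays polynomial-size and constant-depth. Feeding the $\ell(x)$ output bits into a single unbounded fan-in \textsc{And} gate, and then selecting between the $\AC^0$-computable value $g(\tu y)$ and $0$ according to that bit (one further constant-depth layer), produces $f(x,\tu y)$. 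Closure of $\AC^0$ under composition, as already exploited for this algebra in~\cite{ADK24a}, takes care of assembling these pieces.

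I expect the only genuinely delicate point to be the uniformity bookkeeping for the composite circuit: one must exhibit a deterministic log-time procedure deciding its direct connection language, i.e.\ routing a queried gate either to an $\alpha(t)$-generator, to the appropriate copy of the $k$-subcircuit, or to the final \textsc{And}/selection layer. This is routine rather than hard — the number of $k$-copies is only $O(\log n)$, so from a gate index a log-time machine can recover the block it belongs to by simple arithmetic and then defer to the uniformity machine of $k$, while the generators and the top layer are uniform by inspection — but it is the step that needs to be written out with some care. Everything else is a direct instantiation of the $\ell$-ODE solution formula together with standard $\AC^0$ closure properties.
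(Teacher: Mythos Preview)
Your proof is correct and follows essentially the same approach as the paper: unwind the $\ell$-ODE to the closed form $f(x,\tu y)=g(\tu y)\cdot\prod_{t=0}^{\ell(x)-1}k(\alpha(t),\tu y)$, observe that the $\{0,1\}$-valued product is a single unbounded fan-in \textsc{And}, and select between $g(\tu y)$ and $0$ accordingly. The paper's version is terser and omits the uniformity discussion and the explicit generation of the $\alpha(t)$'s, but the argument is the same.
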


%
\begin{proof}
    By definition of $\ell$-ODE:
    $$
        f(x, \tu y) = \prod^{\ell(x)-1}_{u=-1} k(\alpha(u), \tu y)
    $$
    with the convention that $k(\alpha(-1), \tu y)=g(\tu y)$.
    Clearly $f(x, \tu y)=g(\tu y)$ when $k(\alpha(t), \tu y)=1$ for all $t\in \{0, \dots, \ell(x)\}$ and 0 otherwise.
    By hypothesis, we can compute each $k(\alpha(t), \tu y)$ in parallel and ``iterated multiplication'' by 0 (i.e.,~cancellation) and 1 (i.e.,~no change) can be performed in $\AC^0$, i.e.~by a single layer with unbounded fan-in $\wedge$.
\end{proof}
%
%
        %

\end{toappendix}

In the context of a broader understanding of the hierarchy between classes, we have also considered non-strict $\ell$-ODE schemas that are computable in $\AC^0$. 
As anticipated, this investigation demonstrates a strong link between depth constraints over circuits and the syntactical constraints that define the corresponding linear length ODE.
Particularly interesting in this setting is the IVP defined by linear equations in which $A\in \{-1,1\}$ and $B=0$. 
As long as $f(x, \tu y)$ occurs in $A(=K-1)$ only in expressions of the form $s=\fun{sg}(f(x, \tu y))$, computation through this schema is proved to be in $\AC^0$.
Remarkably, computing the solution of such system corresponds to performing bounded search.

\begin{lemma}\label{lemma:ACODE}
    Let $g:\Nat^p\to \Nat$ and $h:\Nat^{p+1} \to \Nat$ be computable in $\AC^0$.
    Then, $f(x, \tu y)$ defined from $g$ and $h$ as the solution of the IVP made of $f(0, \tu y)=g(\tu y)$ and 
    $$
        \frac{\partial f(x, \tu y)}{\partial \ell} = \big(K(x, \tu y, h(x, \tu y), f(x, \tu y))-1\big) \times f(x, \tu y)
    $$
    where $K\in \{0,1\}$ is a limited $\fun{sg}$-polynomial expression and $f(x, \tu y)$ occurs in it only under the scope of the sign function in expressions of the form $s=\fun{sg}(f(x, \tu y))$, is in $\AC^0$.
\end{lemma}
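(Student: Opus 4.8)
The plan is to reduce this non-strict schema to the \emph{strict} one already handled in Lemma~\ref{lemma:ODEzero}. Write $s=\fun{sg}(f(x,\tu y))$ for the only form in which $f(x,\tu y)$ is permitted to occur in $K$, and let $\hat K(x,\tu y)$ be the limited $\fun{sg}$-polynomial expression obtained from $K\big(x,\tu y,h(x,\tu y),f(x,\tu y)\big)$ by replacing every occurrence of $s$ with the constant $\fun{sg}(g(\tu y))$. Then $\hat K$ is a fixed limited $\fun{sg}$-polynomial expression applied to $x$, $\tu y$, the value $h(x,\tu y)$, and $\fun{sg}(g(\tu y))$; since $g,h$ are $\AC^0$-computable by hypothesis, $\fun{sg}$ is $\AC^0$-computable, and a fixed limited $\fun{sg}$-polynomial expression is evaluable in constant depth (no multiplication occurs), closure of $\AC^0$ under composition gives that $\hat K:\Nat^{p+1}\to\{0,1\}$ is computable in $\AC^0$. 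Let $f'$ be the solution of the IVP with $f'(0,\tu y)=g(\tu y)$ and $\frac{\partial f'(x,\tu y)}{\partial\ell}=(\hat K(x,\tu y)-1)\times f'(x,\tu y)$. By Lemma~\ref{lemma:ODEzero}, $f'$ is computable in $\AC^0$, so it remains to show $f=f'$.

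For this I would prove, by induction on $t\ge 0$, that $f(\alpha(t),\tu y)=f'(\alpha(t),\tu y)$. Both sides equal $g(\tu y)$ when $t=0$, as $\alpha(0)=0$. For the step, the unrolled form of the linear $\ell$-ODE (with $B=0$, so that $1+A=K$) gives $f(\alpha(t+1),\tu y)=K\big(\alpha(t),\tu y,h(\alpha(t),\tu y),f(\alpha(t),\tu y)\big)\times f(\alpha(t),\tu y)$ and $f'(\alpha(t+1),\tu y)=\hat K(\alpha(t),\tu y)\times f'(\alpha(t),\tu y)$, and also $f(\alpha(t),\tu y)=g(\tu y)\prod_{t'<t}K(\alpha(t'),\dots)\ge 0$ since $g$ is $\Nat$-valued and $K\in\{0,1\}$. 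Hence $0$ is absorbing: if $f(\alpha(t),\tu y)=0$, then $f(\alpha(t+1),\tu y)=0$ and, by the induction hypothesis, $f'(\alpha(t),\tu y)=0$, so both sides vanish; if $f(\alpha(t),\tu y)>0$, then necessarily $g(\tu y)>0$, so $s$ and the substituted constant $\fun{sg}(g(\tu y))$ both equal $1$ at the point $\alpha(t)$, whence $K\big(\alpha(t),\tu y,h(\alpha(t),\tu y),f(\alpha(t),\tu y)\big)=\hat K(\alpha(t),\tu y)$, and combining with the induction hypothesis $f(\alpha(t),\tu y)=f'(\alpha(t),\tu y)$ yields $f(\alpha(t+1),\tu y)=f'(\alpha(t+1),\tu y)$. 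Finally, since any solution of an $\ell$-ODE is constant on each block $\{z:\ell(z)=\ell(x)\}$, which contains $\alpha(\ell(x))=2^{\ell(x)}-1$, we get $f(x,\tu y)=f(\alpha(\ell(x)),\tu y)=f'(\alpha(\ell(x)),\tu y)=f'(x,\tu y)$ for all $x,\tu y$.

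The only real obstacle --- and what sets this apart from Lemma~\ref{lemma:ODEzero} --- is precisely that $K$ contains a call to the function being defined, so the unrolled product $\prod_t K(\alpha(t),\dots)$ is not, a priori, something one can evaluate without already knowing $f$. The monotonicity/absorption argument above is the resolution: on the suffix where $f$ has collapsed to $0$ the sign bit $s$ is multiplied by $0$ and hence irrelevant, and on the complementary prefix $s$ equals $1=\fun{sg}(g(\tu y))$ (the prefix being nonempty only when $g(\tu y)>0$), so hard-wiring $s:=\fun{sg}(g(\tu y))$ changes nothing. Everything else is routine: the $\AC^0$-computability of $\hat K$ is closure under composition, and the $\AC^0$-computability of $f'$ is quoted directly from Lemma~\ref{lemma:ODEzero} (a constant-depth circuit computing the $\alpha(t)$ and $\hat K(\alpha(t),\tu y)$ in parallel, followed by the trivial ``iterated product of $0$'s and $1$'s'' layer).
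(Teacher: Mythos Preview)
Your argument is correct and matches the paper's approach: the key observation is that once $f$ hits $0$ it stays $0$, and while $f>0$ the sign bit $s$ equals $1$, so replacing $s$ by a fixed value (the paper uses $1$, you use $\fun{sg}(g(\tu y))$) yields the same product of $\{0,1\}$-values. The paper's main text computes this product directly with an unbounded $\wedge$-gate, while its appendix contains the same reduction to Lemma~\ref{lemma:ODEzero} that you carry out; your choice of $\hat K=K_{\fun{sg}(g(\tu y))}$ rather than $K_1$ is a harmless variant, since both agree with $K(\cdot,f)$ wherever $f\neq 0$.
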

\begin{proof}[Proof Sketch]
    By definition of $\ell$-ODE, $f(x, \tu y)=\prod^{\ell(x)-1}_{u=-1}K\big(\alpha(u), \tu y, h(\alpha(u), \tu y), f(\alpha(u), \tu y)\big)$, with the convention that $K(\alpha(-1), \cdot)=g(\tu y)$.
    Clearly, if there is a $t\in \{0,\dots,x\}$ such that $K(t)=0$, then $f(x, \tu y)=0$. 
    Therefore, we compute $f(x, \tu y)=\prod^{\ell(x)-1}_{u=-1}K_1(u)$ which is either 0, if $g(\tu y)=0$ or there is a $K_1(u)=0$, or $g(\tu y)$, otherwise.
    By hypothesis, we can compute in $\AC^0$ both $g(\tu y)$ and, for any $u\in \{0,\dots, \ell(x)-1\}$, $K_1(u)$. Thus, we conclude in one layer defined by an unbounded fan-in $\wedge$-gate.
\end{proof}
\noindent
This result can actually be strengthened by showing that it holds even for generalized expressions of the form $s = \fun{sg}(f(x, \tu y) - r)$ occurring in $K$.

To conclude the presentation of schemas for $\AC^0$, let us consider simple cases of bounded search encoded by the aforementioned ODE schemas.
(For further details, see Appendix~\ref{appendix:AC}.)

\begin{example}[BSearch]
    Let $R\subseteq \Nat^{p+1}$ and $h_R$ be its characteristic function.
    Then, for all $x$ and $\tu y$, 
    $(\forall z \leq \ell(x))R(z,\tu y)= \fun{sg}(f(x, \tu y))$,
    where $f$ is defined by $f(0, \tu y)= \fun{sg}(h_R(0, \tu y))$
    and
    $
    \frac{\partial f(x, \tu y)}{\partial \ell} = (h_R(x, \tu y) -1) \times f(x, \tu y),
    $
    which is clearly an (very limited) instance of the schema presented in Lemma~\ref{lemma:ACODE}.
    Another example of limited search consists in checking whether, at some point, $f(x, \tu y)$ reaches or not a boundary, say $r$.
    This can be expressed in a natural way by an instance of the (generalized) non-strict schema (in $\AC^0$) defined by the equation $\frac{\partial f(x, \tu y)}{\partial \ell} = \big(\fun{sg}(f(x, \tu y)-r)-1\big)\times f(x, \tu y)$.
\end{example}

\begin{toappendix}
\subsection{Non-Strict Schemas in $\AC^0$}\label{appendix:strictAC0}
%
%
            %

\begin{lemma}[Generalized Lemma~\ref{lemma:ACODE}]\label{lemma:ACODEg}
    Let $g:\Nat^p\to \Nat$ and $h:\Nat^{p+1} \to \Nat$ be computable in $\AC^0$.
    Then, $f(x, \tu y)$ defined from $g$ and $h$ as the solution of the IVP made of $f(0, \tu y)=g(\tu y)$ and 
    $$
        \frac{\partial f(x, \tu y)}{\partial \ell} = \big(K(x, \tu y, h(x, \tu y), f(x, \tu y))-1\big) \times f(x, \tu y)
    $$
    where $K\in \{0,1\}$ is a limited $\fun{sg}$-polynomial expression and $f(x, \tu y)$ occurs in it only under the scope of the sign function in expressions of the form $s=\fun{sg}(f(x, \tu y)-c)$ for any constant $c\in \Nat$, is in $\AC^0$.
\end{lemma}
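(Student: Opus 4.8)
The plan is to run the proof of Lemma~\ref{lemma:ACODE} essentially verbatim; the only new point is how to resolve the apparent circularity created by the generalized self-references $s_j=\fun{sg}(f(x,\tu y)-c_j)$ occurring in $K$. As in that proof, unfolding the $\ell$-ODE gives $f(x,\tu y)=\prod_{u=-1}^{\ell(x)-1}K\big(\alpha(u),\tu y,h(\alpha(u),\tu y),f(\alpha(u),\tu y)\big)$ with the convention $K(\alpha(-1),\cdot)=g(\tu y)$. Writing out the $u=-1$ factor, this product equals $g(\tu y)\cdot\prod_{u=0}^{\ell(x)-1}K(\alpha(u),\dots)$: one factor $g(\tu y)$ times factors in $\{0,1\}$. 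Hence $f(\alpha(u),\tu y)\in\{0,g(\tu y)\}$ at every stage, and since $\alpha(0)=0$ we have $f(\alpha(0),\tu y)=g(\tu y)$.

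The key observation is that, along the only ``branch'' in which $f(x,\tu y)$ does not collapse to $0$, the term $f(\alpha(u),\tu y)$ always equals $g(\tu y)$, so the self-reference can be replaced by a quantity we can precompute. Let $\widetilde K(u)$ denote $K(\alpha(u),\tu y,h(\alpha(u),\tu y),\cdot)$ with every occurrence of $f(\alpha(u),\tu y)$ replaced by the value $g(\tu y)$ --- equivalently, each $s_j$ replaced by $\fun{sg}(g(\tu y)-c_j)$. A one-line induction on $u$ shows: if $g(\tu y)\neq 0$ and $\widetilde K(v)=1$ for all $v<u$, then $f(\alpha(u),\tu y)=g(\tu y)$, whence $K(\alpha(u),\dots)=\widetilde K(u)$; and as soon as some $\widetilde K(u_0)=1$ fails (or $g(\tu y)=0$), the partial product hits $0$ and stays there. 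This yields the clean description
\[
f(x,\tu y)=
\begin{cases}
g(\tu y) & \text{if } g(\tu y)\neq 0 \text{ and } \widetilde K(u)=1 \text{ for all } u\in\{0,\dots,\ell(x)-1\},\\
0 & \text{otherwise.}
\end{cases}
\]

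Finally, this is computed in $\AC^0$ exactly as in Lemma~\ref{lemma:ACODE}. We compute $g(\tu y)$ in $\AC^0$; in parallel over all $u\leq\ell(x)-1$ we compute $h(\alpha(u),\tu y)$, the finitely many constants $\fun{sg}(g(\tu y)-c_j)$ (comparisons, hence $\AC^0$), and then evaluate the limited $\fun{sg}$-polynomial expression $\widetilde K(u)$, which is a fixed finite composition of $+,-,\div 2,\fun{sg}$ on $\AC^0$-computable inputs and so is $\AC^0$. We conjoin all the bits $\widetilde K(u)$ together with $\fun{sg}(g(\tu y))$ via one unbounded-fan-in $\wedge$-gate, and output $g(\tu y)$ with each bit masked by the resulting bit. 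The only genuine obstacle --- the occurrence of $f(x,\tu y)$ inside $K$ --- is disposed of by the two-valued invariant $f(\alpha(u),\tu y)\in\{0,g(\tu y)\}$; everything else is routine.
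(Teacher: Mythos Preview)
Your proof is correct, and in fact it exploits the two-valued invariant $f(\alpha(u),\tu y)\in\{0,g(\tu y)\}$ more directly than the paper does. The paper's argument also observes this invariant, but then enumerates all monotone sign-configurations $K_0(t),\dots,K_j(t)$ (indexed by how many of the thresholds $c_1<\dots<c_h$ are exceeded), computes them all in parallel, and wires a consistency check against the initial values $s_j(g(\tu y))$ to select the correct branch. Your key simplification is noticing that once $f$ hits $0$ the product is $0$ regardless of what $K$ returns, so the \emph{only} substitution that ever matters on the surviving branch is $f\mapsto g(\tu y)$; this collapses the paper's family $\{K_{j'}(t)\}_{j'}$ to the single expression $\widetilde K(t)$ and the consistency layer to one unbounded $\wedge$. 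The paper's more elaborate construction pays off in settings where one genuinely needs to track several live hypothetical values of $f$ (as in the later $\NC^1$ and $\TC^0$ proofs), but for this particular schema your shortcut is both sound and cleaner.
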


\begin{proof}
    Let $s_j(u)= \fun{sg}\big(f(\alpha(u), \tu y)-c_j\big)$, $j\in \{1,\dots, h\}$ be the list of signed terms appearing in $K$ and $c=\max_j c_j$.
    Let's assume that the list can be ordered so that, for any $j', j''\in \{0,\dots,h\}$ and $j''>j'$, if $s_{j''}=1$, then $s_{j'}=1$ and vice versa, if $s_{j'}=0$, then $s_{j''}=0$.
    For readability, we use $K_{j'}(u)$ as a shorthand for $K_{i_0\dots i_{j'} i_{j'+1}\dots i_j}(u)= K_{\underbrace{1\dots 1}_{j' \text{ times}}\underbrace{0\dots 0}_{j-j' \text{ times}}}(u)$ where, for any $j''\in \{1,\dots, j\}$, $i_{j''}\in \{0,1\}$, corresponding to $K(x, \tu y, h(x, \tu y), f(x, \tu y))$ obtained by substituting $s_0, \dots, s_{j'}$ with 0 and $s_{j'+1},\dots, s_{j}$ with 1.
    %
    %
    By definition of $\ell$-ODE,
    $$
    f(x, \tu y) = \prod^{\ell(x)-1}_{u=-1} K\big(\alpha (u), \tu y, h(\alpha(u), \tu y), f(\alpha(u), \tu y)\big)
    $$
    with the convention that $K(\alpha(-1), \tu y, h(\alpha(u), \tu y), f(\alpha(u), \tu y))=g(\tu y)$.
    Notice that if there is a $t$ such that $K(t)= 0$, then $f(x, \tu y)=0$.
    Then, $f(x, \tu y)$ is either 0, when $g(\tu y)=0$ or there is a $t$ such that $K(t)=0$,
    or $g(\tu y)$, if, for all $t$, $K(t)=1$.
    Since, for any $t\in \{0,\dots, \ell(x)\}$, $f(\alpha(t),\tu y)$ can only take two different values, namely 0 or $g(\tu y)$, all the values of $K_j(t)$ (i.e.,~of conditional $f(\alpha(t+1), \tu y$) and the corresponding $s_j(t)$ can be computed independently from each other, in the beginning.
    Intuitively this is implemented in constant depth as follows:
    \begin{itemize}
    \itemsep0em
        \item In parallel, compute $g(\tu y)$ and, for any $u\in \{0,\dots, \ell(x)-1\}$, $K_{0}(t), \dots, K_{j}(t)$ and $s_0(g(\tu y)), \dots, s_j(g(\tu y))$.
        \item In parallel, for each $j'\in \{0,\dots, j\}$ and $t\in \{0,\dots, \ell(x)-1\}$, check the consistency between the initial value (i.e. what is the correct value of each $s_{j'}(g(\tu y))$) and that between all subsequent values of $K_{j'}(t), K_{j'}(t), \dots$.
        This can be done via unbounded fan-in $\wedge$-gate.
        \item In one layer, check whether the correct (i.e.~consistent with the initial value of $g(\tu y)$) sequence of values corresponds to strictly positive $K(t)$ only. This is just the (unbounded) disjunction between all the previous results.
    \end{itemize}
\end{proof}
In the restrictive case of $s=\fun{sg}(f(x, \tu y))$ it is immediate to see that this (non-strict) schema can be rewritten in terms of (strict) $\ell$-ODE$_1$ (offering an alternative indirect proof of Lemma~\ref{lemma:ACODE}).
Indeed, since $s=\fun{sg}(f(x, \tu y))$, $f(x, \tu y) \neq 0$ when $g(\tu y)>0$ and for any $u\in \{0, \dots, \ell(x)-1\}$, $K(u)=K_1(u)>0$.
This can be expressed via $\ell$-ODE$_1$ as follows:
\begin{align*}
    f'(0, \tu y) &= \fun{cosg}(g(\tu y)) \\
    \frac{\partial f'(x, \tu y)}{\partial \ell} &= f'(x, \tu y) \times \fun{cosg}\big(K_1(x)\big).
\end{align*}
Then, $\fun{bsearch}(x,x)=\fun{cosg}(f'(x,x))$.
Thus, also this schema allows us to perform computation corresponding to bounded search, and in fact it can be rewritten in terms of weaker schema introduced in Lemma~\ref{lemma:ODEzero}.

\begin{remark} 
In the limited case of $s=\fun{sg}(f(x,\tu y))$, the given schema is actually equivalent to the one presented in Lemma~\ref{lemma:ODEzero}:
\begin{itemize}
    \itemsep0em
    \item It is clear that the schema of Lemma~\ref{lemma:ODEzero} is nothing but a special case of the one of Lemma~\ref{lemma:ACODE}, such that $f(x, \tu y)$ does not occur in $K$.
    \item The converse direction also holds.
    By definition, 
    $$
    K\big(x, \tu y, h(x, \tu y), f(x, \tu y)\big) = K_0(x) \times \fun{cosg}\big(f(x, \tu y)\big) + K_1(x) \times \fun{sg}\big(f(x, \tu y)\big) 
    $$
    Then,
    \begin{align*}
        \frac{\partial f(x,\tu y)}{\partial \ell} &= \big(K(x, \tu y, h(x, \tu y), f(x, \tu y)) - 1\big) \times f(x, \tu y) \\
        &= \big(\big(K_0(x) \times \fun{cosg}(f(x, \tu y)) + K_1(x) \times \fun{sg}(f(x,\tu y))\big)-1\big) \times f(x, \tu y) \\
        %
        &= K_0(x) \times \fun{cosg}(f(x, \tu y)) \times f(x, \tu y) + K_1(x) \times \fun{sg}(f(x, \tu y))\times f(x, \tu y) - f(x, \tu y)
        \\
        &= K_1(x) \times f(x, \tu y) - f(x, \tu y) \\
        &= (K_1(x)-1) \times f(x, \tu y)
    \end{align*}
    where $K_1(x)$ does not include any occurrence of $s$, i.e. $K_1(x) = k(x, \tu y) \in \{0,1\}$, making the corresponding IVP an instance of the schema of Lemma~\ref{lemma:ODEzero}.
\end{itemize}
\end{remark}
\end{toappendix}

\subsection{The Class $\FACC$}

By weakening the constraints on linearity of the schemas which define $\AC^0$, we introduce an $\ell$-ODE the computation of which is provably in $\FACC$ and actually characterizes this class.
We consider a non-strict schema such that $A=-1$ and $B=K\in \{0,1\}$.
%
As we will see (Section~\ref{sec:chODE}), it is a special case of the schema we will introduce to capture $\NC^1$.
Together with the functions and schemas defining $\ACDL$, it is also enough to capture $\FACC$, thus providing the very first ODE-based characterization for this small circuit class  ``with counters''.

\begin{defn}[Schema $\schODE$]
    Given $g:\Nat^p \to \Nat$ and $h:\Nat^{p+1}\to \Nat$, the function $f:\Nat^{p+1} \to \Nat$ is defined by $\schODE$ if it is the solution of the IVP with $f(0, \tu y) = g(\tu y)$ and
    $$
    \frac{\partial f(x, \tu y)}{\partial \ell} = -f(x, \tu y) + K\big(x, \tu y, h(x, \tu y), f(x, \tu y)\big)
    $$
     where $K\in \{0,1\}$ is a limited $\fun{sg}$-polynomial expression and $f(x, \tu y)$ occurs in $K$ only under the scope of the sign function.
\end{defn}

Notably, this schema allows us to capture the computation performed by \textsc{Mod 2} gates and to rewrite the parity function in a very natural way.

\begin{example}[The Parity Function]\label{ex:parity}
      Let $z$ be a string of $\ell(z)=n$ bits.
      To count the number of 1's in $z$ we consider the following system defined by $ \fun{par}(0, y) = \fun{BIT}(0, y)$ and:
      \begin{align*}
          \frac{\partial \fun{par}(x, y)}{\partial \ell} &= - \fun{par}(x,  y) + \big(\fun{sg}(\fun{par}(x, y))\times \fun{cosg}(\fun{BIT}(\ell(x) + 1, y)) + \\
          &\quad \quad \quad \quad \quad \quad \quad \; \;
          \fun{cosg}(\fun{par}(x, y)) \times \fun{sg}(\fun{BIT}(\ell(x) + 1, y))\big)
      \end{align*}
      which is clearly an instance of $\schODE$.
      The parity function is now computed by $\fun{par}(z,z)$.
\end{example}
\noindent
Notice that this example shows that $\schODE$ really increases the power of $\ell$-ODE$_1$, as the parity function is known not to be in $\AC^0$ (see~\cite{DBLP:conf/focs/FurstSS81}).
%
This also indicates that $\schODE$ is actually a schema capturing the specific computational power corresponding to $\FACC$.
Moreover, it is shown that the desired closure property holds for it.

\begin{lemma}\label{lemma:schODE}
    If $f(x, \tu y)$ is defined by $\ell$\emph{-b}$_0$\emph{ODE} from functions in $\FACC$, then $f(x, \tu y)$ is in $\FACC$ as well.
\end{lemma}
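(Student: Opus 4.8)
The strategy is to exploit the closed-form solution of a linear $\ell$-ODE with $A=-1$, which collapses dramatically because the telescoping product $\prod_{t=u+1}^{\ell(x)-1}(1+A(\alpha(t)))=\prod_{t=u+1}^{\ell(x)-1}(1-1)=0$ whenever $u+1\le \ell(x)-1$, i.e. whenever $u\le \ell(x)-2$. Hence only the top term of the sum survives and
$$
f(x,\tu y)=K\big(\alpha(\ell(x)-1),\tu y,h(\alpha(\ell(x)-1),\tu y),f(\alpha(\ell(x)-1),\tu y)\big).
$$
In other words, $f(x,\tu y)$ at ``level'' $\ell(x)$ depends only on $f$ at ``level'' $\ell(x)-1$, so the whole computation is just a chain of $\ell(x)\le O(\log(x+1))$ successive applications of $K$, each applied to a single bit $f(\alpha(t),\tu y)\in\{0,1\}$ (the range of $K$). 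I would first record this reduction formally (or cite the general closed form displayed after Definition~\ref{def:linear length ODE} specialized to $A=-1$), noting that $f(\alpha(t),\tu y)\in\{0,1\}$ for all $t\ge 0$ since $K\in\{0,1\}$, and that the initial point contributes $f(\alpha(0),\tu y)=g(\tu y)$ only through $\fun{sg}(g(\tu y))$, so without loss of generality the state is always Boolean.

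\textbf{Main construction.} The key point is that along the chain $f$ takes values in $\{0,1\}$, so at each level $t$ there are only two possible ``incoming'' states, and we can precompute, in parallel for every $t\in\{0,\dots,\ell(x)-1\}$ and every $b\in\{0,1\}$, the value
$$
K_b(t)=K\big(\alpha(t),\tu y,h(\alpha(t),\tu y),b\big)\in\{0,1\},
$$
using that $h$ is in $\FACC$ and $K$ is a limited $\fun{sg}$-polynomial expression (so it is built from $+,-,\div 2,\fun{sg}$ applied to $\FACC$-computable inputs, hence itself $\FACC$-computable — note no multiplication is needed since $f$-occurrences are only under $\fun{sg}$ and the two Boolean branches can be selected by $\fun{cosg}/\fun{sg}$ of the previous state). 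This yields a bit-transition table: entering level $t$ in state $b$ produces state $K_b(t)$ at level $t+1$. Composing these $\ell(x)$ two-state transitions starting from $\fun{sg}(g(\tu y))$ is exactly an iterated composition of functions $\{0,1\}\to\{0,1\}$ of logarithmic length; since there are only three such functions (constant $0$, constant $1$, identity — the transposition cannot arise from $\{0,1\}$-valued $K$, but even if one allows all four unary maps it is still a fixed monoid) this is an instance of \emph{iterated product over a fixed finite monoid}. That problem is in $\AC^0$, a fortiori in $\FACC$: one can, for each $t$, test in parallel whether any level $t'\le t$ with $K_0(t')=K_1(t')$ ``resets'' the chain and then read off the forced constant, or detect that the chain is pure identity back to the start and return $\fun{sg}(g(\tu y))$ — this is a constant-depth circuit over the already-computed table.

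\textbf{Putting it together.} I would organize the proof as: (1) derive the one-step recurrence $f(x,\tu y)=K(\alpha(\ell(x)-1),\dots)$ from the closed form with $A=-1$; (2) observe the state is always in $\{0,1\}$; (3) in $\FACC$, in parallel over all $t<\ell(x)$, compute $h(\alpha(t),\tu y)$ and then $K_0(t),K_1(t)$ (each a fixed $\fun{sg}$-polynomial expression of $\FACC$ inputs, hence $\FACC$); (4) resolve the length-$\ell(x)$ chain of Boolean transitions in constant depth by the reset-detection argument above; (5) conclude $f(x,\tu y)\in\FACC$. The circuits are $\mathbf{Dlogtime}$-uniform because $\ell(x)$, $\alpha(t)=2^t-1$, and the wiring of the parallel blocks are all trivially uniform, and the $\FACC$ subcircuits for $h$ and $K$ are uniform by assumption and by structural induction on the expression $K$.

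\textbf{Expected obstacle.} The only genuinely delicate point is step (3)–(4): verifying that evaluating the limited $\fun{sg}$-polynomial expression $K$ on each of the two Boolean branches, and then chaining the $\ell(x)$ resulting transitions, stays within constant depth (no multiplication, bounded-range intermediate values) — essentially checking that the finite-monoid iterated product does not secretly need $\TC^0$ counting. Since $|\{0,1\}|=2$ and the transition monoid is fixed and solvable (indeed aperiodic), this is fine and in fact lands in $\AC^0$; one must just be careful that the ``reset detection'' uses only a logarithmic-width $\vee$ of $\FACC$-computable conditions, which is exactly what an unbounded fan-in layer provides.
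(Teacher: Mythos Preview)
Your overall architecture is right and close to the paper's: collapse the closed form using $A=-1$ so that only the top term survives, observe the state becomes Boolean after one step, precompute the transition table $(K_0(t),K_1(t))_{t<\ell(x)}$ in parallel, and then resolve a length-$\ell(x)$ chain of maps $\{0,1\}\to\{0,1\}$. The gap is in the last step.

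You assert that ``the transposition cannot arise from $\{0,1\}$-valued $K$'' and that the transition monoid is aperiodic, hence the chain is in $\AC^0$. This is false. Take $K=1-\fun{sg}(f(x,\tu y))$, which is a perfectly good limited $\fun{sg}$-polynomial expression; then $K_0(t)=1$ and $K_1(t)=0$, i.e.\ the transition is the bit flip. More to the point, the paper's own Example~\ref{ex:parity} (the parity function) is built from $\schODE$ precisely so that the transition at step $t$ is negation when $\fun{BIT}(\ell(x)+1,y)=1$ and identity when it is $0$. Composing these transitions is literally computing parity, which is not in $\AC^0$; your ``reset-detection'' circuit, which only looks for the last constant transition and otherwise returns the initial bit, gives the wrong answer on any input with no resets and an odd number of flips. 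So the monoid is \emph{not} aperiodic: it contains $\mathbb Z/2\mathbb Z$, and \textsc{Mod 2} gates are genuinely required.

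The fix is exactly what the paper does (and what you would get by repairing your argument): locate the last level $t^\star$ at which $K_0(t^\star)=K_1(t^\star)$ (a reset, doable in $\AC^0$), take that forced bit as the new start, and then count with a \textsc{Mod 2} gate the number of flips among the remaining levels $t>t^\star$ (those $t$ with $K_0(t)\neq K_1(t)$ and $K_0(t)=1$). This is constant depth with unbounded fan-in plus \textsc{Mod 2}, i.e.\ $\FACC$. The paper packages the same idea by first rewriting $\schODE$ as a strict schema $-k(x,\tu y)f(x,\tu y)+k(x,\tu y)k'(x,\tu y)$ with $k=(K_0=K_1)$ and a $k'$ that already incorporates the \textsc{Mod 2} count of subsequent sign changes, and then proving that this strict form is in $\FACC$; but the underlying computation is the same last-reset-then-parity argument.
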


\begin{proof}[Proof Sketch]
    For simplicity, let us consider the simple case of $f(x, \tu y)$ occurring only under the scope of the sign function in expressions of the form $s=\fun{sg}(f(x, \tu y))$.
    Relying on the fact that $K(x, \tu y, h(x, \tu y), f(x, \tu y))= K_1(x) \times \fun{sg}(f(x, \tu y)) + K_0(x) \times \fun{cosg}(f(x, \tu y))$, we can rewrite the equation defining $\schODE$ as follows: 
 %
  %
  \begin{align*} 
        \frac{\partial f(x,\tu y)}{\partial \ell} &= - f(x,\tu y)\times \big(K_0(x) =K_1(x)\big) +
        \big(K_0(x)=K_1(x)\big)\times \\
        & \quad \quad \quad \quad \quad \Big(\big((K_0(x)=1) \times \textsc{Mod 2} \Big(\sum^{\ell(x)}_{i=x}\big(K_0(i)=K_0(i+1)\big)\Big)=0\big) \\
        & \quad \quad \quad \quad + \big((K_0(x)=0) \times \textsc{Mod 2} \Big(\sum^{\ell(x)}_{i=x}\big(K_0(i)=K_0(i+1)\big)\Big)=1\big)\Big).
    \end{align*}
    \normalsize
    Intuitively, if $K_0(t)\neq K_1(t)$, no change occurs.
    In contrast, if $K_0(t)=K_1(t)$ we restart the computation (as the preceding value of the input must not be the last one for which $K_0=K_1$) by removing the previous value of $f(t,\tu y)$, and return 1 if either one of the following holds:
    \begin{itemize}
        \itemsep0em
        \item[i] $K_0(t)=K_1(t)=1$, i.e.~$K(t)$  ``generates'' a 1, and the number of changes of values (temporarily assuming that $t$ is the last value in $\{0,\dots, \ell(x)-1\}$ such that $K_0(t)=K_1(t)$) between subsequent values, i.e.~$K_0(t)$ and $K_0(t+1)$, is such that 1 (possibly alternating)  ``propagates'' until the end of the computation or 
        \item[ii] $K_0(t)=K_1(t)=0$, i.e.~$K(t)$  ``generates'' a 0, and the subsequent number of changes of values is such that 0 does not propagate to the final value, i.e.~\textsc{Mod 2}\Big($\sum^{\ell(x)-1}_{i=t}\big(K_0(t)=K_0(t+1)\big)\Big)=1$.
    \end{itemize}
    It is easy to see that the value computed in this way corresponds to the final value of the computation, i.e.~$f(x,\tu y)$ is given by $f(t, \tu y)$ such that for any $t'\in \{t,\dots, \ell(x)\}, K_0(t')\neq K_1(t')$.
    Clearly, this is a special case of $\schODE$ in which no call to $f(x,\tu y)$ occurs in $A$ or $B$, namely of the form $\frac{\partial f(x,\tu y)}{\partial \ell}=-k(x, \tu y) \times f(x,\tu y) + k(x, \tu y) \times k'(x,\tu y)$, for $k(x, \tu y),k'(x, \tu y)\in \{0,1\}$.
    Computation through this schema can be implemented by circuits of constant depth and polynomial size including \textsc{Mod 2} gates: 
    for any $t\in \{0,\dots, \ell(x)-1\}$, we can compute $k(\alpha(t), \tu y)$ and $k'(\alpha(t), \tu y)$ in parallel and check both their mutual consistency, i.e. whether $k(\alpha(t), \tu y)=k'(\alpha(t),\tu y)$, and the consistency between subsequent pairs, i.e.~whether $k(\alpha(t), \tu y)=k(\alpha(t+1), \tu y)$; then, using \textsc{Mod 2} gates, we can compute whether the number of changes of values is even or odd and conclude in constant depth via (unbounded fan-in) $\vee$-gate. 
    (For further details, see Appendix~\ref{appendix:FACC}).
\end{proof}


\begin{toappendix}

\subsection{Strict Schemas in $\FACC$}\label{appendix:ACC}

We consider a strict schema, which can be computed in $\FACC$.
It weakens the constraints on $A$ so that 
the coefficient of $f(x, \tu y)$ can now be 0, but its value is related to that of $B$:
by a special limited form of linearity, in which $A=-k(x,\tu y)\in \{0,1\}$ and $B=k(x, \tu y)\times k'(x, \tu y)\in \{0,1\}$, it is not possible that the coefficient of $f(x,\tu y)$ is 0, when $B\neq 0$ (\textsc{bcount} is not in $\FACC$).

\begin{lemma}\label{lemma:kkODE}
    Let $g:\Nat^p\to \{0,1\}$ and $k, k':\Nat^{p+1}\to \{0,1\}$ be computable in $\FACC$, then computation through the schema defined by the IVP below:
    \begin{align*}
        f(0, \tu y) &= g(\tu y) \\
        \frac{\partial f(x, \tu y)}{\partial \ell} &= - f(x, \tu y) \times k(x, \tu y) + k(x, \tu y) \times k'(x, \tu y)
    \end{align*}
    is in $\FACC$.
\end{lemma}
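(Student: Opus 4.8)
The plan is to unfold the $\ell$-ODE into its closed-form solution and then argue that this closed form can be evaluated by a $\mathbf{Dlogtime}$-uniform family of constant-depth polynomial-size circuits with $\textsc{Mod 2}$ gates. First I would apply the standard $\ell$-ODE expansion (as recalled in the Preliminaries): since the equation has the linear form $\frac{\partial f(x,\tu y)}{\partial\ell} = -k(x,\tu y)\times f(x,\tu y) + k(x,\tu y)\times k'(x,\tu y)$, we have $A(\alpha(t),\dots) = -k(\alpha(t),\tu y)$ and $B(\alpha(t),\dots) = k(\alpha(t),\tu y)\times k'(\alpha(t),\tu y)$, so that $1 + A(\alpha(t),\dots) = 1 - k(\alpha(t),\tu y) = \fun{cosg}(k(\alpha(t),\tu y))$. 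Hence $f(x,\tu y) = \sum_{u=-1}^{\ell(x)-1}\Big(\prod_{t=u+1}^{\ell(x)-1}\fun{cosg}(k(\alpha(t),\tu y))\Big)\times k(\alpha(u),\tu y)\times k'(\alpha(u),\tu y)$, with the convention that the $u=-1$ summand contributes $\big(\prod_{t=0}^{\ell(x)-1}\fun{cosg}(k(\alpha(t),\tu y))\big)\times g(\tu y)$.

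The key observation is that the product $\prod_{t=u+1}^{\ell(x)-1}\fun{cosg}(k(\alpha(t),\tu y))$ is $1$ exactly when $k(\alpha(t),\tu y)=0$ for every $t\in\{u+1,\dots,\ell(x)-1\}$, and $0$ otherwise. Therefore the whole sum collapses: $f(x,\tu y)$ equals $k'(\alpha(t_0),\tu y)$ where $t_0$ is the \emph{largest} index $t\le\ell(x)-1$ with $k(\alpha(t),\tu y)=1$ (and $f(x,\tu y)=g(\tu y)$ if no such $t_0$ exists, since then every factor $\fun{cosg}(k(\alpha(t),\tu y))$ in the $u=-1$ term is $1$). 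Because $k$ takes values in $\{0,1\}$, only one summand survives, so the range of $f$ stays in $\{0,1\}$ — this is the point where the special ``linked'' form $A=-k$, $B=k\times k'$ is used, as opposed to a free $B$ which would blow up. Concretely I would phrase the circuit as: in parallel, for every $t\in\{0,\dots,\ell(x)-1\}$ compute the bit $k(\alpha(t),\tu y)$ and the bit $k'(\alpha(t),\tu y)$ (each in $\FACC$ by hypothesis, and $\alpha(t)=2^t-1$ is $\AC^0$-computable from $t$); for each $t$, compute the indicator that $t$ is the maximal index with $k(\alpha(t),\tu y)=1$, namely $k(\alpha(t),\tu y)\wedge\bigwedge_{t'=t+1}^{\ell(x)-1}\fun{cosg}(k(\alpha(t'),\tu y))$, using one unbounded fan-in $\wedge$; then output $\bigvee_{t}\big(\text{indicator}(t)\wedge k'(\alpha(t),\tu y)\big)\vee\big(g(\tu y)\wedge\bigwedge_{t}\fun{cosg}(k(\alpha(t),\tu y))\big)$ via one unbounded fan-in $\vee$. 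This is constant depth over the $\FACC$ subcircuits, hence the whole thing is in $\FACC$; indeed no $\textsc{Mod 2}$ gate is even needed for this particular lemma, only the ambient $\AC^0$ bookkeeping plus the assumed $\FACC$-circuits for $g,k,k'$.

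The main obstacle — and the only part requiring care — is uniformity and the handling of the additive encoding of sequences. One must be precise that ``$f(x,\tu y)$'' as a number is obtained by the $\ell$-ODE recursion whose number of steps is $\ell(x)=O(\log x)$, so $\ell(x)-1$ many indices $t$ are involved, which is logarithmically many and hence polynomially many gates; and that the map $t\mapsto\alpha(t)=2^t-1$ together with the substitution into the $\FACC$-circuits for $k,k'$ preserves $\mathbf{Dlogtime}$-uniformity (this is routine, since $2^t-1$ has a trivial bit pattern and composition with $\mathbf{Dlogtime}$-uniform families is closed). A secondary point is to double-check the boundary conventions ($u=-1$ giving the initial value $g(\tu y)$, and $\prod$ over an empty range equal to $1$) so that the ``no surviving index'' case is correctly attributed to $g(\tu y)$; I would state this explicitly rather than gloss over it. With these in place the argument is complete, and I would also remark (as the surrounding text anticipates) that this is exactly the strict shape $-k(x,\tu y)\times f(x,\tu y)+k(x,\tu y)\times k'(x,\tu y)$ to which the non-strict $\schODE$ is reduced in Lemma~\ref{lemma:schODE}, so this lemma is what actually certifies membership in $\FACC$ there.
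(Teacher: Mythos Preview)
Your argument is correct. Both you and the paper begin from the same closed-form unfolding
\[
f(x,\tu y)=\sum_{u=-1}^{\ell(x)-1}\Bigl(\prod_{t=u+1}^{\ell(x)-1}\bigl(1-k(\alpha(t),\tu y)\bigr)\Bigr)\,k(\alpha(u),\tu y)\,k'(\alpha(u),\tu y),
\]
and both observe that at most one summand survives, namely the one indexed by the largest $t_0$ with $k(\alpha(t_0),\tu y)=1$ (or the $u=-1$ term contributing $g(\tu y)$ if no such $t_0$ exists). From this point the two constructions diverge. The paper wires a circuit that computes, for each $z$, equality tests between $k$ and $k'$ at $z$ and between $k$ at $z$ and $z{+}1$, feeds the latter into $\textsc{Mod 2}$ gates counting parity of value changes on the suffix $\{z,\dots,\ell(x)-1\}$, and combines these with a final unbounded $\vee$. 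You instead read off directly that the surviving summand is selected by the predicate ``$t$ is maximal with $k(\alpha(t),\tu y)=1$'', implement that predicate with a single unbounded $\wedge$, and take an unbounded $\vee$ over $t$ of $(\text{predicate at }t)\wedge k'(\alpha(t),\tu y)$ together with the fallback $g(\tu y)\wedge\bigwedge_t\fun{cosg}(k(\alpha(t),\tu y))$.

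Your route is genuinely more elementary: it shows that the recursion itself contributes only $\AC^0$ overhead, so the $\textsc{Mod 2}$ gates enter solely through the hypotheses on $g,k,k'$. This is a sharper statement than what the paper's construction establishes and makes transparent why the ``linked'' form $A=-k$, $B=k\cdot k'$ prevents any counting from being needed at this stage. The paper's parity-based construction is not wrong, but it obscures this point and does more work than necessary; it appears to be anticipating the shape of $k'$ coming from the reduction in Lemma~\ref{lemma:schODE}, where $k'$ itself is built from $\textsc{Mod 2}$ sums. Your remark that this lemma is precisely the target of that reduction is apt and worth keeping.
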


\begin{proof}
By definition of $\ell$-ODE,
$$
f(x, \tu y) = \sum^{\ell(x)-1}_{u=-1} \Bigg(\prod^{\ell(x)-1}_{t=u+1} \Big(1-k(\alpha(t), \tu y)\Big) \Bigg) \times k'(\alpha(u), \tu y) \times k(\alpha(u), \tu y).
$$
Notice that $f(x, \tu y)\in \{0,1\}$.
Clearly, $\prod^{\ell(x)-1}_{t'=u'+1}\big(1-k(\alpha(t'), \tu y)\big)=1$ when $k(\alpha(t'), \tu y)=0$ for all $t'\in \{u'+1, \dots, \ell(x)-1\}$ and $\prod^{\ell(x)-1}_{t'=u'+1}(1-k(\alpha(t'), \tu y))\times k'(\alpha(u'), \tu y) \times k(\alpha(u'), \tu y) =1$ when, in addition, $k'(\alpha(u')) \times k(\alpha(u'))=1$.
Then, for all $u''\in \{u'+1, \dots, \ell(x)-1\}$, $\prod^{\ell(x)-1}_{t=u''+1} (1-k(\alpha(t), \tu y)) \big) \times k(\alpha(u''), \tu y) \times k'(\alpha(u''), \tu y)=0$ (since $k(\alpha(u''),\tu y)=0$).
\\
We can compute this value using constant-depth circuits including 
\textsc{Mod 2} gates:
%
\begin{itemize}
    \itemsep0em
    \item In parallel, we compute $g(\tu y)$ and, for any $z\in \{0, \dots, \ell(x)-1\}$, the $\ell(x)$ pairs of values $k(\alpha(z))$ and $k'(\alpha(z'))$ (both in $\{0,1\}$).
    This can be done in $\FACC$ by hypothesis.
    \item Again in parallel, compute consistency between each pair, i.e. whether $k(\alpha(z)) = k'(\alpha(z))$ and between pairs of subsequent values, i.e.~whether $k(\alpha(z))=k(\alpha(z+1))$
    \item In the subsequent layer, we consider computation through $\ell(x)$ \textsc{Mod 2} gates ``counting'' whether the number of changes of values from the considered $z\in \{0, \dots, \ell(x)-1\}$ to $\ell(x)-1$ is even or odd.
    \item In constant depth, for any $z\in \{0,\dots, \ell(x)-1\}$, we consider the disjunction of:
    \begin{itemize}
        \itemsep0em
        \item the (unbounded fan-in) conjunction of the result of the corresponding consistency gate (i.e.~checking whether $k(\alpha(z), \tu y)=k'(\alpha(z), \tu y)$), of $k(\alpha(z), \tu y)$ and of the negation of the \textsc{Mod 2} gate for $z$ and
        \item the (unbounded fan-in) conjunction of the result of the corresponding consistency gate (i.e. checking whether $k(\alpha(z), \tu y)=k'(\alpha(z), \tu y)$), of the negation of $k(\alpha(z), \tu y)$ and of \textsc{Mod 2} gate for $z$.
    \end{itemize}
    \item We conclude in one layer computing the (unbounded fan-in) disjunction of all the $\ell(x)$ results of the previous gates (at most one of which can be 1).
\end{itemize}
\end{proof}


%
%
%

\subsection{Non-Strict Schemas in $\FACC$}\label{appendix:FACC}

\begin{proof}[Proof of Lemma~\ref{lemma:schODE}]
    Let us start with the simple case $s=\fun{sg}(f(x, \tu y))$.
    Since $K\in \{0,1\}$,
    $$
    K(x, \tu y, h(x, \tu y), f(x, \tu y)) = K_1(x) \times \fun{sg}(f(x,\tu y)) + K_0(x) \times \fun{cosg}(f(x, \tu y)).
    $$
    Relying on this fact, it is possible to rewrite $\schODE$ as the solution of the IVP with initial value $f(0, \tu y)=g(\tu y)$ and such that:
    \small
    \begin{align*}
        \frac{\partial f(x, \tu y)}{\partial \ell} &= f(x, \tu y) \times (K_0(x) \neq K_1(x) -1) \\
        &\quad + \Bigg(\bigg(\big(K_0(x) = K_1(x) = 1\big) \times \textsc{Mod 2}\bigg(\sum^{\ell(x)}_{i=x}\big(K_0(i) = K_0(i+1)\big)\bigg) = 0\bigg) \\
        &\quad + \bigg(\big(K_0(x) = K_1(x) = 0 \big) \times \textsc{Mod 2} \bigg(\sum^{\ell(x)}_{i=x}\big(K_0(i) = K_0(i+1)\big)\bigg)=1\bigg)\Bigg) \\
        &= -f(x, \tu y) \times \big(K_0(x) = K_1(x)\big) + \big(K_0(x) = K_1(x)\big) \\
        &\quad \times \Bigg(\bigg(\big(K_0(x) = 1 \big) \times \textsc{Mod 2}\bigg(\sum^{\ell(x)}_{i=x} \big(K_0(i) = K_0(i+1)\big)\bigg) = 0\bigg) \\
        & \quad + \bigg(\Big(K_0(x) = 0\Big) \times \textsc{Mod 2}\bigg(\sum^{\ell(x)}_{i=x}\big(K_0(i) = K_0(i+1)\big)\bigg) =1 \bigg)\Bigg)
    \end{align*}
    \normalsize
    Intuitively, if $K_0(t) \neq K_1(t)$, no change occurs, i.e.~$f(t+1)=f(t)$.
    On the contrary, if $K_0(x)=K_1(x)$ we ``remove'' previous $f(t,\tu x)$ and return 1 if either
    \begin{itemize}
        \itemsep0em
        \item $K_0(t)=K_1(t)=1$, i.e.~$K(t)$ ``generates'' a 1, and the number of subsequent changes of values (i.e.~$K_0(t') \neq K_0(t'+1)$) such that 1 (alternating) propagates to the final value (assuming $t$ is the last value in $\{0,\dots, \ell(x)-1\}$ such that $K_0(t)=K_1(t)$) or
        \item $K_0(t)=K_1(x)=0$, i.e.~$K(t)$ ``generates'' a 0, and the subsequent number of changes of values is such that 0 does not propagate to the final value,
    \end{itemize}
and 0 otherwise.
It is easy to see that the value computed in this way for $t\in \{0,\dots, \ell(x)-1\}$ corresponds to the final value of the computation only when for any subsequent $t'\in \{t,\dots, \ell(x)\}$, $K_0(t')\neq K_1(t')$. Otherwise, the final value is given by the last $t'$ such that $K_0(t')=K_1(t')$. 
    Since now no call to $f(x, \tu y)$ occurs in the lienar equation, this IVP comes out to be an instance of the schema proved in $\FACC$ in Lemma~\ref{lemma:kkODE}, such that, in particular,  $k(x, \tu y)$ is $K_0(x)=K_1(x)$ and, clearly, both $k,k'\in \{0,1\}$ and can be computed in $\FACC$ (as $\textsc{Mod 2} \sum^{\ell(x)}_{i=x}(K_0(i) = K_0(i+1))$ corresponds to computation through \textsc{Mod 2} gates.
    %

    As mentioned, this proof can be generalized to avoid the limitation $s=\fun{sg}(f(x,\tu y))$, allowing linear expression under the scope of the sign function.
    Let us consider the occurrence of one such expression in $K$, namely $s=\fun{sg}\big(E(x, \tu y, f(x,\tu y))\big)$.
    Since $K\in \{0,1\}$ (and $A=-1$), the proof is similar to the previous one except for the fact that, in this case, for any $t\in \{0, \dots, \ell(x)-1\}$, we compute (in parallel) even the sign of $s_0(t), s_1(t)$, and we search for the last $s_0(t)=s_1(t)$.
    Informally, the corresponding circuit is implemented as follows:
    \begin{itemize}
        \itemsep0em
        \item For any $t\in \{0, \dots, \ell(x)-1\}$, compute in parallel $K_0(t), K_1(t)$ and the corresponding $s_0(t), s_1(t)$ (where, as usual, $s_0=\fun{sg}(E(\alpha(t), \tu y,0))$ and $s_1=\fun{sg}(E(\alpha(t),\tu y,1))$, i.e.~0 or 1 substitutes $f(\alpha(t-1), \tu y) =K(t-1)$ in $s$).
        \item We search for the last $t\in \{0, \dots, \ell(x)-1\}$ such that $s_0(t)=s_1(t)$ (if for any $t\in \{0,\dots, \ell(x)-1\}$, $s_0(t)\neq s_1(t)$, we consider $g(\tu y)$).
        \item The construction is as before, but, here, we count whether the (number of) changes of values between $s_0(t')$ and $s_0(t'+1)$, occurring for $t'\in \{t, \dots, \ell(x)-1\}$, is even or odd.
        \item We select the correct final value, between $K_0(\ell(x)-1)$ and $K_1(\ell(x)-1)$ accordingly. 
    \end{itemize}
\end{proof}

\longv{
\marginpar{\textcolor{red}{A direct connection between $\ODEACC$ and $\schODE$ is missing}}
\begin{remark}[Relationship with $\ODEACC$ and $\schODE$]
    \textcolor{blue}{Both $\ODEACC$ and $\schODE$, restricted to occurrences of the form $s=\fun{sg}(f(x,\tu y))$, are special cases of the schema presented in Lemma~\ref{lemma:kkODE}.
    Indeed, any instance of $\ODEACC$
    can be seen as an instance of the given schema such that $k(x, \tu y)=2h(x, \tu y)$ and $k'(x, \tu y)=\frac{1}{2}$.
    On the other hand, $\schODE$ can be rewritten as an IVP such that $k(x, \tu y)= K_0(x)\times K_1(x)+ (1-K_0(x)) \times (1-K_1(x))$, while $k'(x, \tu y)$ is a convoluted expression involving $K_0(x)$ and $K_1(x)$ and described in Lemma~\ref{lemma:schODE}.}
\end{remark}
}

\end{toappendix}

Since $\schODE$ has been shown to capture computations performed by \textsc{Mod 2} gates, our characterization for $\FACC$ is obtained by simply endowing $\ACDL$ with it.

\begin{theorem}\label{theorem:FACC}
$\FACC
%
        = [\fun{0}, \fun{1}, \ell, \fun{sg}, +, -, \div 2, \pi^p_i, \#; \circ, \ell\emph{-ODE}_1, \ell\emph{-ODE}_3, \ell\emph{-b}_0\emph{ODE}]
 $
\end{theorem}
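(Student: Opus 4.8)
The plan is to prove the two inclusions separately, reusing as much of the existing $\AC^0$ characterization machinery as possible.

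\textbf{Inclusion $(\subseteq)$.} I would show that every function in $\FACC$ can be obtained in the algebra on the right. The natural route is to mimic the structure of the $\AC^0$ proof: invoke a known recursion-theoretic characterization of $\FACC$ and translate its basic functions and recursion schemas into the ODE setting. Concretely, one expects to use an algebra of the form $\mathcal{A}_0$ augmented with the parity (or $\textsc{Mod 2}$) function, analogous to Clote's characterization of $\AC^0$ by $\mathcal{A}_0 = [\fun{0}, \pi^p_i, \fun{s}_0, \fun{s}_1, \ell, \fun{BIT}, \#; \circ, \text{CRN}]$. All the ingredients of $\mathcal{A}_0$ are already available in $\ACDL \subseteq [\fun{0}, \fun{1}, \ell, \fun{sg}, +, -, \div 2, \pi^p_i, \#; \circ, \ell\text{-ODE}_1, \ell\text{-ODE}_3]$ by the proof of Theorem~1: $\fun{s}_0, \fun{s}_1$ are cheap, $\fun{BIT}$ comes from $\ell\text{-ODE}_3$ (Remark~\ref{remark:BIT}), and $\text{CRN}$ comes from $\ell\text{-ODE}_1$ (Remark~\ref{remark:CRN}). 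It then remains only to define the parity function, which is exactly what Example~\ref{ex:parity} does using $\schODE$: $\fun{par}(z,z)$ computes parity. Composing parity with the $\AC^0$-level functions and closing under $\circ$ and the three ODE schemas gives every $\FACC$ function, since $\FACC$ is generated from $\AC^0$ together with iterated $\textsc{Mod 2}$ counting, and a single layer of $\textsc{Mod 2}$ gates over an $\AC^0$-circuit can be captured by substituting $\fun{par}$ of an $\AC^0$-definable bit-sequence.

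\textbf{Inclusion $(\supseteq)$.} Here I would argue that the algebra is sound, i.e. contained in $\FACC$. This is a structural induction on the definition of a function in the algebra. The base functions $\fun{0}, \fun{1}, \ell, \fun{sg}, +, -, \div 2, \pi^p_i, \#$ are all in $\AC^0 \subseteq \FACC$ (these are standard). Closure under composition $\circ$ is immediate since $\FACC$ is closed under composition. Closure under $\ell\text{-ODE}_1$ and $\ell\text{-ODE}_3$ holds because these schemas are already part of $\ACDL$ and were shown in~\cite{ADK24a} (and reproved via Remarks~\ref{remark:BIT},~\ref{remark:CRN} plus Theorem~1) to stay within $\AC^0 \subseteq \FACC$. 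The new content is closure under $\schODE$, which is precisely Lemma~\ref{lemma:schODE}: if $g, h$ are in $\FACC$ then the $\schODE$-solution $f$ is in $\FACC$. Since every function in the algebra is built by finitely many applications of these constructors, the induction closes and the whole algebra sits inside $\FACC$.

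\textbf{Main obstacle.} The delicate point on the $(\subseteq)$ side is making the reduction to a recursion-theoretic base algebra precise: one must either cite an existing characterization of $\FACC$ by a Clote-style algebra with a $\textsc{Mod 2}$/parity generator, or else argue directly that an arbitrary $\mathbf{Dlogtime}$-uniform constant-depth circuit with $\textsc{Mod 2}$ gates can be unwound into nested applications of $\fun{BIT}$, $\text{CRN}$ and $\fun{par}$ — the latter requiring care about the layered structure of such circuits and the fact that the $\textsc{Mod 2}$ gates at a given layer take $\AC^0$-computable inputs. Everything on the $(\supseteq)$ side is essentially bookkeeping once Lemma~\ref{lemma:schODE} is in hand; the real work is confirming that $\fun{par}$ together with the $\AC^0$ toolkit genuinely regenerates all of $\FACC$ and not merely $\AC^0[\textsc{parity of one }\AC^0\text{-sequence}]$, which one handles by noting that nested/iterated uses of $\fun{par}$ under composition with $\ACDL$-functions suffice to simulate an unbounded (polynomial) number of $\textsc{Mod 2}$ gates arranged in constant depth.
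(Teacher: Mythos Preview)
Your proposal is correct and matches the paper's argument essentially line for line: $(\supseteq)$ is exactly the structural induction you describe, with the only new closure being Lemma~\ref{lemma:schODE}, and $(\subseteq)$ is handled by reducing to the $\ACDL$/$\mathcal{A}_0$ machinery together with parity via Example~\ref{ex:parity}. The obstacle you flag is real and the paper resolves it in two ways: in the main text by citing the relevant result from~\cite{ADK24a}, and in the appendix by invoking Clote--Takeuti's characterization $\FACC = [\fun{0}, \pi^p_i, \fun{s}_0, \fun{s}_1, \ell, \fun{BIT}, \#; \circ, \text{CRN}, \text{1-BRN}]$ and encoding $1$-BRN directly as an instance of $\schODE$, which is a cleaner alternative to the layered-parity simulation you sketch.
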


\begin{proof}
    $(\supseteq)$ Following the proofs~\cite[Prop. 12, 17]{ADK24a} plus Lemma~\ref{lemma:schODE} for $\schODE$. 
    $(\subseteq)$ Consequence of~\cite[Th. 18]{ADK24a} and of the fact that the parity function can be simulated by $\schODE$ (see Example~\ref{ex:parity}).
\end{proof}

\begin{toappendix}
\subsection{ODE-Based Characterizations of $\FACC$}

\begin{proof}[Alternative Proof of Theorem~\ref{theorem:FACC}.($\supseteq$)]
We provide an alternative, indirect proof of $\supseteq$.
The proof relies on Clote and Takeuti's characterization of $\FACC$ in terms of BRN~\cite{CloteTakeuti}:
$$
\FACC = [0, I, s_0, s_1, \ell, BIT, \#; \circ, \text{CRN}, \text{1-BRN}]
$$
As said, all basic functions, $\circ$ and CRN can be rewritten in $\mathbb{ACDL}$, see~\cite{ADK24a}.
It remains to show that a function $f(x, \tu y)$ defined by 1-BRN from $g$ and $h$ as follows:
\begin{align*}
    f(0, \tu y) &= g(\tu y) \\
    f(x, \tu y) &= h\big(x, \tu y, f(\ell(x), \tu y)\big)
\end{align*}
with $f(x, \tu y)\leq 1$, can be rewritten as the solution of the IVP below:
\begin{align*}
    \fun{1brn}(0, \tu y) &= g(\tu y) \\
    \frac{\partial \fun{1brn}(x, \tu y)}{\partial \ell} &= - \fun{1brn}(x, \tu y) + \fun{sg}\big(f(x, \tu y)\big) \times h\big(\ell(x), \tu y, 1\big) + \fun{cosg}\big(f(x, \tu y)\big) \times h\big(\ell(x), \tu y, 0\big)
\end{align*}
which is an instance of (limited) $\schODE$.
\end{proof}

\longv{
\begin{theorem}
$\FACC
= 
[\fun{0}, \fun{1}, \ell, \fun{sg}, +, -, \div 2, \pi^p_i, \#; \circ, \ell\emph{-ODE}_1, \ell\emph{-ODE}_3, \ODEACC]
$
%
\end{theorem}

\begin{proof}
    $\subseteq$ Following the closure proofs from~\cite{ADK24a} plus Lemma~\ref{lemma:FACCone} and Lemma~\ref{lemma:schODE} for $\ODEACC$ and $\schODE$, respectively.

    $\supseteq$ Consequence of the fact that (unbounded fan-in) \textsc{Mod 2} computation or parity can be simulated by $\ODEACC$ (see Remark~\ref{remark:mod2}).
\end{proof}
}
\end{toappendix}

\subsection{The Class $\TC^0$}\label{sec:TC}
In this section, we introduce schemas the computation of which is proved to be in $\TC^0$.
One is defined in terms of strict linear $\ell$-ODE and provides a new, compact characterization for this class.
The second schemas considered are not strict, as they include simple calls to $f(x,\tu y)$, and are defined uniformly with respect to corresponding (i.e.,~non-strict) schemas for $\AC^0$ and $\FACC$ (as well as for $\NC^1$ and $\AC^1$, see  Sections~\ref{sec:chODE} and~\ref{sec:ACone})

We start by dealing with the linear ODE schema defined deriving along $\ell$ and so that $\sA(x, \tu y)$ and $\sB (x, \tu y)$ are both  $\fun{sg}$-polynomial expressions, with no call to $f(x, \tu y)$.

\begin{defn}[Schema $\pODE$]
    Given $g: \Nat^p \to \Nat$ and $h:\Nat^{p+1} \to \Nat$, the function $f:\Nat^{p+1}\to \Nat$ is defined by $\pODE$ from $g$ and $h$, if it is the solution of the IVP with initial value $f(0, \tu y) = g(\tu y)$  and such that
    $$
    \frac{\partial f(x, \tu y)}{\partial \ell} = \sA \big(x, \tu y, h(x, \tu y)\big) \times f(x, \tu y) + \sB\big(x, \tu y, h(x, \tu y)\big)
    $$
    where $\sA$ and $\sB$ are $\fun{sg}$-polynomial expressions.
\end{defn}
\noindent
This compactly encapsulates not only iterated multiplication and addition but also $\ell$-ODE$_1$ and $\ell$-ODE$_3$ computation.

\begin{example}[Counting and Iterated Addition in $\pODE$]\label{ex:pODE}
    Notably, \textsc{BCount} (resp. iterated addition, \textsc{ItAdd}) can be expressed as a special case of $\pODE$ such that $A=0$ and $B=k(x, \tu y)$ with $k,g\leq 1$, i.e.~by considering an IVP such that $f(0, \tu y) = g(\tu y)$ and:
    \begin{align*}
        \frac{\partial f(x, \tu y)}{\partial \ell} = k(x, \tu y) \qquad (\mbox{resp. }  \frac{\partial f(x, \tu y)}{\partial \ell} &= \sB(x, \tu y, h(x, \tu y)).
    \end{align*}
    Taking $g(\tu y)=\fun{BIT}(0, y)$ and $k(x, \tu y)=\BIT(\ell(x),y)$, $f(x,x)=\sharp \{u: \BIT(u,x)=1\}$. 
\end{example}

\noindent
These examples are especially relevant as they reveal that $\pODE$ is really more expressive than all strict schemas introduced so far (\textsc{BCount} is not computable in $\AC^0$ and $\FACC$, see~\cite{Razborov87,DBLP:conf/stoc/Smolensky87}).
The closure property can be shown to hold for $\pODE$ by relying on the following result.

\begin{prop}\label{prop:pODE}
 If $g: \Nat^p \to \Nat$ and $h:\Nat^{p+1} \to \Nat$ are computable in $\TC^0$ and $f:\Nat^{p+1}\to \Nat$ is defined by $\pODE$ from $g$ and $h$, then $f$ is computable in $\TC^0$ as well.
\end{prop}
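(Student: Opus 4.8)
The plan is to write out the closed form of $f$ coming from the general linear $\ell$-ODE solution formula recalled in the preliminaries, and then to argue that every ingredient of that formula is computable in $\TC^0$. Recall that for a linear $\ell$-ODE with equation $\partial f/\partial\ell = A\times f + B$ and initial value $f(0,\tu y)=g(\tu y)$, the solution is
$$
f(x,\tu y) = \sum_{u=-1}^{\ell(x)-1}\Bigg(\prod_{t=u+1}^{\ell(x)-1}\big(1+A(\alpha(t))\big)\Bigg)\times B(\alpha(u)),
$$
with the conventions $B(\alpha(-1))=g(\tu y)$ and the empty product equal to $1$ (I suppress the arguments $\tu y$ and $h$ for readability, exactly as in the excerpt). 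The crucial structural point is that in $\pODE$ the expressions $A$ and $B$ are \emph{strict}: they contain no call to $f(x,\tu y)$. Hence for every $t\in\{0,\dots,\ell(x)-1\}$ the values $A(\alpha(t))$ and $B(\alpha(t))$ depend only on $x$, $\tu y$ and $h(\alpha(t),\tu y)$, so they can all be evaluated \emph{in parallel}, independently of one another — there is no sequential dependency to unwind. Since $\ell(x)=O(\log x)$, there are only logarithmically many such terms.

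First I would observe that each $A(\alpha(t))$ and $B(\alpha(t))$ is obtained by evaluating a fixed $\fun{sg}$-polynomial expression on inputs $x,\tu y,h(\alpha(t),\tu y)$; by hypothesis $g$ and $h$ are in $\TC^0$, and $\TC^0$ is closed under composition and under evaluation of $\fun{sg}$-polynomial expressions (addition, subtraction, multiplication, $\div 2$ and $\fun{sg}$ are all $\TC^0$ operations), so all the $A(\alpha(t))$ and $B(\alpha(t))$, for $-1\le t\le \ell(x)-1$, are computed by a single $\TC^0$ sub-circuit. Next I would handle the inner products $P_u := \prod_{t=u+1}^{\ell(x)-1}\big(1+A(\alpha(t))\big)$: this is an iterated product of at most $\ell(x)=O(\log x)$ many integers, each of magnitude polynomial in the inputs (here one needs the standard size bound for linear $\ell$-ODE solutions, which guarantees the partial products stay of polynomial bit-length), and iterated product of polynomially-many polynomially-sized integers is computable in $\TC^0$. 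One computes all $P_u$ for $u\in\{-1,\dots,\ell(x)-1\}$ in parallel. Finally, $f(x,\tu y)=\sum_{u=-1}^{\ell(x)-1}P_u\times B(\alpha(u))$ is an iterated sum of polynomially-many products of polynomially-sized numbers, again a $\TC^0$ computation. Composing these three $\TC^0$ stages gives $f\in\TC^0$.

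The step I expect to be the main obstacle — or at least the one requiring care rather than being routine — is the bit-length bound: one must check that the intermediate quantities (the partial products $1+A(\alpha(t))$, their running products $P_u$, and finally $f$ itself) all have bit-length polynomial in the input size, so that the $\TC^0$ iterated-product and iterated-sum circuits genuinely apply. This follows from the general growth analysis of essentially-linear $\ell$-ODE already developed in \cite{BournezDurand19,BournezDurand23}: because derivation is along $\ell$, there are only $O(\log x)$ genuine recursion steps, and linearity bounds the multiplicative blow-up at each step by a polynomial factor, so after $O(\log x)$ steps the value is still polynomially bounded. I would state this as the key lemma (citing the earlier framework), and then the circuit construction above is essentially bookkeeping. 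A secondary point worth a sentence is uniformity: since $\alpha(t)=2^t-1$ and $\ell(x)$ are $\mathbf{Dlogtime}$-computable, the whole family is $\mathbf{Dlogtime}$-uniform, matching the definition of $\TC^0$ used in the paper.
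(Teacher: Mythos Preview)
Your proposal is correct and follows essentially the same approach as the paper: write the closed-form solution $f(x,\tu y)=\sum_{u=-1}^{\ell(x)-1}\big(\prod_{t=u+1}^{\ell(x)-1}(1+A(\alpha(t)))\big)\times B(\alpha(u))$ and observe that, since $A$ and $B$ are strict, evaluating it reduces to iterated addition and iterated multiplication, both of which are in $\TC^0$. The paper's proof is a two-line version of what you wrote; your added discussion of bit-length bounds and uniformity is more careful than the paper but not a different argument.
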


\begin{proof} By definition of $\ell$-ODE, such $f$ satisfies
$f(0, \tu y)=g(\tu y)$ and
 $$
        f(x, \tu y) = \sum^{\ell(x)-1}_{u=-1} \Bigg(\prod^{\ell(x)-1}_{t=u+1} \bigg(1+\sA(r, \tu y)\bigg)\Bigg) \times \sB (u, \tu y)
    $$
   
\noindent with the convention that $\prod^{z-1}_z \kappa(x)=1$ and $\sB(-1, \tu y)=g(\tu y)$. Computing this expression can be done combining iterated addition and multiplication, which are in $\TC^0$.  
\end{proof}

\noindent
As an immediate byproduct we obtain a new, concise ODE-based characterization of $\TC^0$.

\begin{theorem}
    $\TC^0 = [\fun{0}, \fun{1}, \ell, \fun{sg}, +, -, \div 2, \pi^p_i, \#; \circ, \ell\emph{-pODE}].$
\end{theorem}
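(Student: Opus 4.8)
The plan is to prove both inclusions. For $(\subseteq)$, I would rely on a known recursion-theoretic characterization of $\TC^0$ — most naturally Clote–Takeuti's algebra for $\TC^0$ in terms of the basic functions together with CRN and a suitable counting/threshold recursion schema (or, equivalently, the algebra obtained by adding an iterated-addition or \textsc{bcount} function to Clote's $\mathcal{A}_0$-style basis). The argument would mirror the structure of the $\AC^0$ proof given above: the basic functions $\fun{0},\fun{1},\pi^p_i,\ell,\#$ and composition are already in the algebra by design; $\fun{s}_0,\fun{s}_1,\fun{BIT}$ and CRN are encoded exactly as in Remarks~\ref{remark:BIT} and~\ref{remark:CRN} (these encodings use only $\ell\text{-ODE}_1$ and $\ell\text{-ODE}_3$, both of which are trivially instances of $\ell\text{-pODE}$ with $A=1$ and $A=-\tfrac12$ respectively — or more carefully, $\ell\text{-ODE}_3$ is reproduced by taking $A(x,\tu y,h)$ to encode $-\lceil \cdot/2\rceil$ via $\div 2$ and $\fun{sg}$, which is a $\fun{sg}$-polynomial expression). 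The one genuinely new ingredient is the counting/threshold schema of the reference characterization, and the point is that this is captured by the $A=0$, $B=k(x,\tu y)$ instance of $\ell\text{-pODE}$ displayed in Example~\ref{ex:pODE}: that instance computes $\sum_{u}k(\alpha(u),\tu y)$ along the $\ell$-slices, which is exactly \textsc{bcount}, and from \textsc{bcount} together with the $\AC^0$-machinery one recovers iterated addition, threshold, and hence everything in $\TC^0$.

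For $(\supseteq)$, closure under composition and the basic functions is immediate, and closure under $\ell\text{-pODE}$ is precisely Proposition~\ref{prop:pODE}: any $f$ defined by $\ell\text{-pODE}$ from $\TC^0$ functions is again in $\TC^0$, since by the displayed solution formula $f(x,\tu y)$ is a sum of products of values $1+A(\alpha(t),\tu y,h)$ times $B(\alpha(u),\tu y,h)$, and iterated product and iterated sum of polynomially many polynomially-bounded integers are both computable in $\TC^0$. One should check the routine bound that all intermediate values stay polynomially bounded in bit-length: since derivation is along $\ell$ there are only $O(\log x)$ slices, each factor $1+A$ has length polynomial in the input, so the product over $O(\log x)$ factors still has polynomial length, and similarly for the outer sum; this justifies invoking the $\TC^0$ iterated-arithmetic primitives. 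Assembling these gives $[\fun{0},\fun{1},\ell,\fun{sg},+,-,\div 2,\pi^p_i,\#;\circ,\ell\text{-pODE}]\subseteq\TC^0$.

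The main obstacle, I expect, is the $(\subseteq)$ direction — specifically, pinning down exactly which external characterization of $\TC^0$ to simulate and verifying that its counting/threshold schema really does reduce to the $A=0$ instance of $\ell\text{-pODE}$ without smuggling in extra power. In particular one must be careful that the reference schema's recursion is "short" in the right sense (so that it corresponds to derivation along $\ell$, i.e.\ logarithmically many steps) and that the bit-extraction bookkeeping needed to feed the right input bit into each slice — the $\fun{BIT}(\ell(y)-\ell(x)-1,y)$ trick from Remark~\ref{remark:CRN} — goes through uniformly. Everything else is bookkeeping: once CRN and \textsc{bcount} are available as $\ell\text{-pODE}$ instances, standard $\TC^0$ function-algebra manipulations close the argument.
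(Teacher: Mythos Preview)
Your proposal is correct and follows essentially the same route as the paper: $(\supseteq)$ is exactly Proposition~\ref{prop:pODE}, and $(\subseteq)$ goes through the Clote--Takeuti algebra $[\fun{0},\pi^p_i,\fun{s}_0,\fun{s}_1,\ell,\fun{BIT},\times,\#;\circ,\text{CRN}]=\TC^0$, observing that $\ell\text{-ODE}_1$ and $\ell\text{-ODE}_3$ are instances of $\ell\text{-pODE}$ (so the whole $\ACDL=\mathcal{A}_0$ machinery is available) and that the $A=0$, $B=k$ instance from Example~\ref{ex:pODE} yields \textsc{BCount}, which is equivalent to $\times$ or \textsc{Maj}. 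Your one point of hesitation --- how exactly $\ell\text{-ODE}_3$ fits the strict $A\times f+B$ template --- is handled in the paper (Remark~\ref{remark:pODE}) by writing $-\lceil f/2\rceil = (f\div 2)-f$ and treating this as an essentially linear expression in $f$ via the $\div 2$ operator; the paper is admittedly terse here, but your instinct that $\div 2$ and $\fun{sg}$ suffice is the intended reading.
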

\begin{proof}
\begin{sloppypar}
    $(\supseteq)$ By Proposition~\ref{prop:pODE}. 
    $(\subseteq)$ Consequence of the facts that (i) $\ell$-pODE is strong enough to express $\ell$-ODE$_1$ and $\ell$-ODE$_3$,
    (see Appendix~\ref{appendix:strictTC}), 
    together with $\AC^0=\mathcal{A}_0=\ACDL$ (see~\cite{ADK24a,Clote1990}), and (ii) $\ell$-pODE captures \textsc{BCount} computation (see Example~\ref{ex:pODE}), which is equivalent to \textsc{Maj} or $\times$ (see~\cite{Vollmer}), together with the fact that $[\fun{0}, \pi^p_i, \fun{s}_0, \fun{s}_1, \ell, \fun{BIT} \times, \#; \circ, \text{CRN}]=\TC^0$ (see~\cite{CloteTakeuti}).
established by~\cite{CloteTakeuti}.
\end{sloppypar}
\end{proof}

\begin{toappendix} 
\subsection{Strict Schemas in $\TC^0$}\label{appendix:strictTC}
As mentioned the closure property can be shown for the linear ODE schema defined deriving along $\ell$ and considering $\sA(x, \tu y)$ and $\sB (x, \tu y)$ to be both polynomial expressions.
This is obtained relying on the fact that iterated multiplication and addition can be done in $\TC^0$.

\begin{lemma}[Strict Sharply Bounded Sum]\label{lemma:BSum}
    Let $\sB$ be a $\fun{sg}$ polynomial expression and $g:\Nat^p \to \Nat$ and $h: \Nat^{p+1}\to \Nat$ be functions computable in $\TC^0$.
    Then, the function $f:\Nat^{p+1}\to \Nat$ defined by $h$ and $g$ as the solution of the IVP made of $f(x, \tu y) = g(\tu y)$ and
    $$
        \frac{\partial f(x,\tu y)}{\partial \ell} = \sB(x, \tu y, h(x, \tu y)\big)
    $$
    is in $\TC^0$ as well.
\end{lemma}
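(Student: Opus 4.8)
The plan is to unfold the $\ell$-ODE definition and reduce the problem to a single instance of iterated addition applied to $\TC^0$-computable summands. First I would recall that, by the defining property of $\ell$-ODE with derivation along $\ell$, the solution of the given IVP is
$$
f(x, \tu y) = \sum^{\ell(x)-1}_{u=-1} \sB\big(\alpha(u), \tu y, h(\alpha(u), \tu y)\big),
$$
with the convention $\sB(\alpha(-1), \tu y, \cdot) = g(\tu y)$ absorbing the initial value. This is exactly the ``sum of $\tu h(\alpha(u), \cdot)$'' formula derived in the preliminaries, specialized to a term $\sB$ that contains no call to $f$.

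Next I would argue that each summand is $\TC^0$-computable: the map $u \mapsto \alpha(u) = 2^u - 1$ is easily computed, $h$ is $\TC^0$ by hypothesis, and $\sB$ is a fixed $\fun{sg}$-polynomial expression built from $+, -, \div 2, \times$ and $\fun{sg}$, all of which are available in $\TC^0$ (iterated product for the bounded number of multiplications in a fixed expression is in $\TC^0$; here only a constant number of $\times$ occur since $\sB$ is a fixed expression, so in fact plain multiplication of two numbers suffices, and that is already in $\TC^0$). Hence for each $u \in \{-1, 0, \dots, \ell(x)-1\}$ the value $\sB(\alpha(u), \tu y, h(\alpha(u), \tu y))$ — including the exceptional value $g(\tu y)$ at $u = -1$ — is computable by a $\TC^0$ circuit, and there are only $\ell(x)+1 = O(\log x) + 1$ of them, which we compute in parallel. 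Finally, summing $O(\log n)$ many numbers, each of polynomial bit-length, is iterated addition, which is in $\TC^0$ (indeed even in $\AC^0$ for logarithmically many summands, but $\TC^0$ is all we need); composing the constant-depth summand circuits with the constant-depth iterated-addition circuit keeps the total depth constant, giving $f \in \TC^0$.

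The only point requiring care — and the main (mild) obstacle — is bounding the bit-length of the partial and final values so that the iterated-addition circuit has polynomial size: since $\sB$ is a fixed polynomial expression, $\sB(\alpha(u), \tu y, h(\alpha(u), \tu y))$ is polynomially bounded in the lengths of its arguments, which are themselves polynomial, and summing $O(\log n)$ such values only adds $O(\log\log n)$ bits, so everything stays polynomial; this is routine and I would state it briefly rather than belabor it.
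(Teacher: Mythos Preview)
Your proposal is correct and follows essentially the same approach as the paper: unfold the $\ell$-ODE to obtain $f(x,\tu y)=\sum_{u=-1}^{\ell(x)-1}\sB(\alpha(u),\tu y,h(\alpha(u),\tu y))$ with the convention $\sB(\alpha(-1),\cdot)=g(\tu y)$, and conclude by the fact that iterated addition is in $\TC^0$. Your version is simply more explicit about why each summand is $\TC^0$-computable and about the size bounds, whereas the paper's proof sketch leaves those details implicit.
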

\begin{proof}[Proof Sketch]
    By definition of $\ell$-ODE, the solution of this system is of the form 
    $$
    f(x, \tu y) = \sum^{\ell(x)-1}_{u=-1} \sB(\alpha(u), \tu y, h(\alpha(u), \tu y))
    $$
    where $\alpha(u)=2^u-1$ and with the convention that $\sB(\cdot, \alpha(-1), \tu y)=f(x, \tu y)=g(\tu y)$, and the fact that iterated addition is in $\TC^0$ (see~\cite{Vollmer}). 
\end{proof}


\begin{lemma}[Strict Sharply Bounded Product]\label{lemma:BProd}
    Let $\sA$ be a $\fun{sg}$ polynomial expression and $g:\Nat^{p}\to \Nat$ and $h:\Nat^{p+1}\to \Nat$ be functions computable in $\TC^0$.
    Then, the function $f(x, \tu y)$ defined from $h(x, \tu y)$ and $g(x, \tu y)$ as the solution of the IVP made of $f(0, \tu y)=g(\tu y)$ and
    $$
        \frac{\partial f(x, \tu y)}{\partial \ell} = \sA\big(x, \tu y, h(x, \tu y)\big)\times f(x, \tu y)
    $$
    is in $\TC^0$.
\end{lemma}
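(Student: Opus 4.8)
The plan is to reduce the statement, as in the proof of Lemma~\ref{lemma:BSum}, to a closed form that is then recognised as an iterated product of short numbers. Specialising the general solution of a linear $\ell$-ODE (the formula following Def.~\ref{def:linear length ODE}) to the case $\sB = 0$, the outer sum collapses to its single $u=-1$ term, so that for all $x$ and $\tu y$ one gets
$$
f(x, \tu y) = g(\tu y) \times \prod_{t=0}^{\ell(x)-1}\Big(1 + \sA\big(\alpha(t), \tu y, h(\alpha(t), \tu y)\big)\Big),
$$
with the convention that the empty product equals $1$ and $\alpha(t) = 2^t-1$. First I would verify this identity by unfolding the recurrence $f(x+1,\tu y) = f(x,\tu y) + \Delta(\ell(x))\times \sA(x,\tu y,h(x,\tu y))\times f(x,\tu y)$ and observing that the multiplicative increments $1+\sA(\alpha(t),\dots)$ occur exactly at the $\ell(x)$ points where $\ell$ jumps, as already spelled out for $\ell$-ODE in Section~\ref{sec:preliminaries}.

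Next I would argue that every factor is both polynomially bounded and $\TC^0$-computable. The arguments handed to $\sA$ are harmless: $\alpha(t)\le x$ for $t\le \ell(x)-1$ and its binary form $1^t$ is immediate, $\tu y$ is part of the input, and $h(\alpha(t),\tu y)$ is polynomially bounded since $h\in\TC^0$. As a $\fun{sg}$-polynomial expression uses only $+,-,\div 2,\times$ and $\fun{sg}$, each $1+\sA(\alpha(t),\tu y,h(\alpha(t),\tu y))$ is computable in $\TC^0$ and has $\mathrm{poly}(n)$ bits. Crucially there are only $\ell(x)=O(\log n)$ such factors (together with the initial value $g(\tu y)$, itself in $\TC^0$), so the whole product has at most $O(\log^2 n)$ bits and stays polynomially bounded — this is exactly the point where deriving along $\ell$ rather than along the identity matters.

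Finally I would invoke the fact that iterated multiplication of polynomially many polynomially-long integers is in $\mathbf{Dlogtime}$-uniform $\TC^0$ (see~\cite{Vollmer,CloteTakeuti}): the circuit first computes $g(\tu y)$ and the $\ell(x)$ factors in parallel, pads the list to a fixed polynomial length with $1$'s (using that $\ell(x)\le n$ is available in $\AC^0\subseteq\TC^0$), and then applies the iterated-product subcircuit. The main obstacle is precisely this last appeal: the $\TC^0$-membership of iterated multiplication is the only non-elementary ingredient, and one must be slightly careful with $\mathbf{Dlogtime}$-uniformity when laying out the $O(\log n)$ factors — but both points are standard and parallel the treatment already used for $\pODE$ and in Lemma~\ref{lemma:BSum}.
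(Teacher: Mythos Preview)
Your proof is correct and follows exactly the paper's approach: unfold the $\ell$-ODE into the closed-form iterated product $g(\tu y)\cdot\prod_{t=0}^{\ell(x)-1}(1+\sA(\alpha(t),\tu y,h(\alpha(t),\tu y)))$ and appeal to the $\TC^0$-computability of iterated multiplication. The only quibble is your $O(\log^2 n)$ bit bound, which is too optimistic in general (since $h\in\TC^0$ and the $\fun{sg}$-polynomial $\sA$ may produce values whose length is polynomial in the input \emph{length}, not logarithmic), but the product is still polynomially bounded in the input length, which is all the iterated-multiplication subroutine needs.
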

\begin{proof}[Proof Sketch]
    By definition of $\ell$-ODE, the solution of the system is of the form 
    $$
    f(x, \tu y) =\prod^{\ell(x)-1}_{u=-1} (1 + \sA(\alpha(u), \tu y, h(\alpha(u), \tu y))
    $$
    with the convention that $\prod^{x}_{u=-1}\kappa(x)=f(0, \tu y)=g(\tu y)$.
    The proof is analogous to the one for Lemma~\ref{lemma:BSum}, here remarking that iterated multiplication is in $\TC^0$ (see~\cite{HAB}).
\end{proof}

\noindent 
The two results above also hold when dealing with vectors rather than functions.

\begin{prop}\label{prop:aux}
    Let $\sA(x, \tu y)$ and $\sB(x, \tu y)$ be polynomial expressions.
    Let $z\in \Nat$ be such that $z\leq log^c(x)$, for some $c\in \Nat$ and 
    $$
        f(x, \tu y) = \sum^{z-1}_{u=-1} \Bigg(\prod^{z-1}_{t=u+1} \bigg(1+\sA(r, \tu y)\bigg)\Bigg) \times \sB (u, \tu y)
    $$
    with the convention that $\prod^{z-1}_z \kappa(x)=1$ and $\sB(-1, \tu y)=T(0, \tu y)$.
    Then, the function which, given $x, \tu y$ and $f(x, \tu y)$ as input, outputs $f(x, \tu y)$ can be computed in $\TC^0$.
\end{prop}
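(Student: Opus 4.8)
\textbf{Proof plan for Proposition~\ref{prop:aux}.}
The plan is to reduce the evaluation of the closed-form expression for $f(x,\tu y)$ to iterated addition and iterated multiplication, both of which are known to be in $\TC^0$, exactly as in the proof of Proposition~\ref{prop:pODE}, but now being careful about the sizes of the intermediate quantities since $\sA$ and $\sB$ are arbitrary $\fun{sg}$-polynomial expressions (with multiplication), not limited ones. First I would observe that because $z \le \log^c(x)$, there are only $O(\log^c x)$ values of $u$ in the outer sum and, for each, $O(\log^c x)$ factors in the inner product; since we are given $x$, $\tu y$ (and the relevant $f$-values) as input, each individual term $1+\sA(r,\tu y)$ and each $\sB(u,\tu y)$ is a fixed $\fun{sg}$-polynomial expression evaluated on inputs of length polynomial in the input length, hence has value bounded by $2^{\mathrm{poly}(n)}$ and can be computed in $\TC^0$ (polynomial-size arithmetic circuits of constant depth for $+,-,\div 2$ and a single bounded layer of multiplications, or by appeal to the fact that such fixed expressions are $\FP$-computable and in fact $\TC^0$-computable).

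The key steps, in order, are: (i) in parallel over all $t \in \{0,\dots,z-1\}$, compute the factors $c_t := 1 + \sA(\alpha(t),\tu y,\ldots)$ — or more precisely the $z$ values indexed by $r$ in the statement — each in $\TC^0$; (ii) for each $u \in \{-1,\dots,z-1\}$, form the inner product $P_u := \prod_{t=u+1}^{z-1} c_t$ using iterated multiplication of at most $z = O(\log^c x)$ numbers, which is in $\TC^0$ by~\cite{HAB}; the resulting numbers have length bounded by $z \cdot \mathrm{poly}(n) = \mathrm{poly}(n)$, so this stays within polynomial size; (iii) compute $\sB(\alpha(u),\tu y,\ldots)$ for each $u$ (with the convention $\sB(-1,\tu y)=f(0,\tu y)$, handled as a boundary case using the supplied $f$-value), again in $\TC^0$; (iv) form the products $P_u \times \sB(u,\tu y)$ in parallel, each a single multiplication of $\mathrm{poly}(n)$-bit numbers, in $\TC^0$; (v) finally compute $f(x,\tu y) = \sum_{u=-1}^{z-1} P_u \times \sB(u,\tu y)$ by iterated addition of $O(\log^c x)$ many $\mathrm{poly}(n)$-bit numbers, which is in $\TC^0$ (see~\cite{Vollmer}). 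Composing these constantly many $\TC^0$ stages yields a $\TC^0$ computation.

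The main obstacle is the bookkeeping on sizes: because $\sA$ involves honest multiplication, naively iterating the product could in principle blow up bit-length beyond polynomial if the number of factors were not controlled — but the hypothesis $z \le \log^c x$ is precisely what keeps $z$ polylogarithmic, so each of the products $P_u$ and each sum has at most polylogarithmically many terms and hence polynomial bit-length; this is the step I would state carefully. A secondary point to get right is the $\Dlogtime$-uniformity: one must check that the index structure (which $c_t$ multiplies into which $P_u$, and the computation of $\alpha(u) = 2^u - 1$ for polylog-bounded $u$) is uniformly describable, which it is since all index arithmetic is on numbers of $O(\log\log x)$ bits and the iterated-sum/product circuits of~\cite{Vollmer,HAB} are already $\Dlogtime$-uniform. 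Everything else is the routine observation that a constant-depth composition of $\TC^0$-computable maps is again in $\TC^0$.
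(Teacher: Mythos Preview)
Your proposal is correct and follows essentially the same approach as the paper: reduce the closed-form expression to a combination of iterated addition and iterated multiplication, both known to be in $\TC^0$ (cf.\ Lemmas~\ref{lemma:BSum} and~\ref{lemma:BProd} and the proof of Proposition~\ref{prop:pODE}). The paper leaves Proposition~\ref{prop:aux} without an explicit proof, treating it as the obvious generalization of the $z=\ell(x)$ case; your write-up is in fact more careful than the paper about the size bookkeeping (using $z\le \log^c x$ to keep the products polynomially bounded) and about uniformity, but the underlying argument is the same.
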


%



\begin{remark}[Relationship with $\ell$-ODE$_1$ and $\ell$-ODE$_3$]\label{remark:pODE}
    Observe that $\ell$-ODE$_1$ and $\ell$-ODE$_3$ are nothing but special cases of $\pODE$:
    $\ell$-ODE$_1$ is nothing but an instance of $\pODE$ such that $A=1$ and $B=k(x, \tu y)\in \{0,1\}$ and $\ell$-ODE$_3$ is a special case of $\pODE$ such that $A=(\div 2) -1$ and $B=0$. 
\end{remark}
\noindent
In other words, $\pODE$ encapsulates the computing power of $\ell$-ODE$_1$ and $\ell$-ODE$_3$ simultaneously.
As an immediate consequence $\AC^0 \subseteq \TC^0$.

\longv{
\begin{remark}[Generalized $\pODE$ in $\NC^1$]
    Actually, we can generalize this schema so that $\sA$ and $\sB$ are expressions defined by any $\AC^0$ functions and, since $f(x, \tu y)$ does not occur in them, even bounded sums is allowed.
\end{remark}
}


\end{toappendix}

Additionally, in order to better and uniformly understand upper bounds in circuit computation, we consider natural constraints over the non-strict linear length-ODE schema obtained by allowing \emph{simple} calls of $f(x, \tu y)$.
As shown throughout the paper, this kind of feature may increase the computational power of a schema, and, here, to ensure that computation remains in $\TC^0$ we have to restrict the forms of $A$ and $B$. 

\begin{lemma}\label{lemma:TCODE}
    Let $g:\Nat^{p}\to\Nat$ and $h:\Nat^{p+1}\to \Nat$ be in $\TC^0$. 
    Then, a function $f$ defined by the IVP with initial value $f(0, \tu y) = g(\tu y)$ and equation of one of the following forms: 
$$
        \frac{\partial f(x, \tu y)}{\partial \ell} = K\big(x, \tu y, h(x, \tu y), f(x, \tu y)\big) \mbox{ or }  \frac{\partial f(x, \tu y)}{\partial \ell} = K\big(x, \tu y, h(x, \tu y), f(x, \tu y)\big)\times f(x,\tu y)
$$

    \noindent where $K\in \{0,1\}$ is a $\fun{sg}$-polynomial expression and $f(x, \tu y)$ only occurs in expressions of the form $s=\fun{sg}(f(x, \tu y))$ in it, is in $\TC^0$ as well.
\end{lemma}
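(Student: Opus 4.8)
The plan is to use the closed-form solution of an $\ell$-ODE, exactly as in the proofs of Lemmas~\ref{lemma:schODE} and~\ref{lemma:ACODE}, and then argue that the resulting expression can be evaluated in $\TC^0$. By definition of $\ell$-ODE, in the additive case we have
$$
f(x, \tu y) = \sum_{u=-1}^{\ell(x)-1} K\big(\alpha(u), \tu y, h(\alpha(u), \tu y), f(\alpha(u), \tu y)\big),
$$
with the convention $K(\alpha(-1), \cdot) = g(\tu y)$, while in the multiplicative case
$$
f(x, \tu y) = g(\tu y) \times \prod_{u=0}^{\ell(x)-1} \Big(1 + K\big(\alpha(u), \tu y, h(\alpha(u), \tu y), f(\alpha(u), \tu y)\big)\Big).
$$
The key observation is that there are at most $\ell(x)+1 = O(\log x)$ summands (resp. factors), and each $K(\alpha(u),\cdot)$ depends on $f(\alpha(u),\tu y)$ only through the \emph{single bit} $s(u) = \fun{sg}(f(\alpha(u),\tu y))$. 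So the whole computation is governed by the length-$O(\log x)$ bit-vector $\big(s(0),\dots,s(\ell(x)-1)\big)$, and the plan is to recover this vector in $\TC^0$ and then finish.

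First I would, for each $u \in \{0,\dots,\ell(x)-1\}$ and each of the two possible values $i \in \{0,1\}$ that $s(u-1)$ might take, compute in parallel the two conditional values $K_0(u), K_1(u) \in \{0,1\}$ (abbreviating as in Notation~\ref{notation:strict}); this is in $\TC^0$ since $K$ is a $\fun{sg}$-polynomial expression over $\TC^0$-computable data. Next, note that knowing $K_i(u)$ for all $u,i$ and the initial value $\fun{sg}(g(\tu y))$ determines the genuine bit-vector $\big(s(0),\dots,s(\ell(x)-1)\big)$ by a forward pass: $s(0)$ depends on $s(-1):=\fun{sg}(g(\tu y))$, then $s(1)$ on $s(0)$, etc. There are only polynomially many (in fact $2^{O(\log x)} = x^{O(1)}$) candidate bit-vectors, but we do not want to enumerate them all — instead I would observe that, since $f(\alpha(u),\tu y)$ is an \emph{iterated sum/product} of the previously-generated conditional values, its sign $s(u)$ is itself computable by an iterated-sum-then-sign (resp. iterated-product-then-sign) gate from the $K_i(t)$, $t < u$, \emph{once the earlier signs are fixed}. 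Unwinding the recurrence, each $s(u)$ is a fixed $\TC^0$ function of $\fun{sg}(g(\tu y))$ and the table $\{K_i(t) : t \le u, i \in \{0,1\}\}$, computed by composing $O(\log x)$ layers — but the depth is constant because the composition is over \emph{bits} whose number and structure are fixed in advance, so the ``unfolding'' collapses to a single $\TC^0$ circuit selecting, layer by layer, the correct branch $K_{s(t-1)}(t)$ using $\fun{sg}$/selection gates. Having the correct signs and hence the correct sequence $\big(K_{s(t-1)}(t)\big)_{t}$ in hand, the final answer is one iterated addition (resp. one iterated multiplication, possibly scaled by $g(\tu y)$), both of which are in $\TC^0$ by~\cite{Vollmer,HAB}; formally this last step is Proposition~\ref{prop:pODE} / Lemma~\ref{lemma:BSum} / Lemma~\ref{lemma:BProd} applied after the coefficients have been resolved to strict ($f$-free) values.

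The main obstacle is showing that resolving the $f$-dependence inside $K$ — i.e., pinning down the true sign sequence $\big(s(0),\dots,s(\ell(x)-1)\big)$ — can be done in \emph{constant} depth rather than depth $O(\log x)$. The naive forward simulation costs one layer per recursion step. The way around this is precisely the bounded-search style argument already used for $\FACC$ and $\AC^0$: because each $f(\alpha(u),\tu y)$ is a prefix iterated sum/product of objects all of which are themselves determined by the bit-vector of signs, we can, in parallel over all $2^{\ell(x)} = x^{O(1)}$ candidate sign-vectors $\vec\sigma$, (i) recompute from $\vec\sigma$ and the $K_i$-table the induced conditional values, (ii) recompute the induced iterated sum/product prefixes and their signs, and (iii) check that these signs agree with $\vec\sigma$ (a single consistency check per position, then an unbounded $\wedge$); exactly one $\vec\sigma$ survives, and we output the iterated sum/product it determines, selected in one more layer. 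Each of steps (i)–(iii) is a $\TC^0$ computation on polynomially many candidates run in parallel, so the whole circuit has constant depth and polynomial size. I would relegate the bookkeeping of this parallel-search construction to the appendix and present only the closed form plus this high-level constant-depth argument in the proof sketch.
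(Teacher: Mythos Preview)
Your closed forms are right and you correctly isolate the crux: the only dependence of $K$ on $f$ is through the bit $s(u)=\fun{sg}(f(\alpha(u),\tu y))$. But the enumeration step is a genuine gap. You write ``$2^{\ell(x)}=x^{O(1)}$'' and treat that as ``polynomially many candidates''; it is polynomial in the \emph{value} $x$, not in the \emph{input length} $\ell(x)$. A $\TC^0$ circuit must have size polynomial in the number of input bits, i.e.\ in $\ell(x)$, and $2^{\ell(x)}$ is exponential in that. (Compare the proof of Lemma~\ref{lemma:l2TC}: there enumeration over $2^{\ell_2(x)}$ sign-vectors \emph{does} work, precisely because $2^{\ell_2(x)}=O(\ell(x))$ is linear in the input length. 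That trick does not transfer to derivation along $\ell$.) Your intermediate paragraph about ``the unfolding collapses to a single $\TC^0$ circuit'' is not an argument either; a naive forward pass genuinely costs $\ell(x)$ layers.

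What you are missing is the monotonicity that the paper exploits. Since $K\in\{0,1\}$, in the additive schema $f(\alpha(u+1),\tu y)=f(\alpha(u),\tu y)+K(\cdot)\ge f(\alpha(u),\tu y)$, so once $f>0$ it stays $>0$: the sign sequence has the form $0^{t}1^{\ell(x)-t}$. There are only $\ell(x)+1$ such sequences, which \emph{is} polynomial in the input length. Concretely: if $g(\tu y)\ge 1$ then $s\equiv 1$ and $f(x,\tu y)=\sum_u K_1(u)$ is a single \textsc{BCount}; if $g(\tu y)=0$, find in $\AC^0$ the first $t$ with $K_0(t)=1$ and output $K_0(t)+\sum_{u>t}K_1(u)$. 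In the multiplicative schema the factor $1+K\ge 1$ never kills a nonzero value, so $s\equiv\fun{sg}(g(\tu y))$ is constant: if $g(\tu y)=0$ then $f\equiv 0$, otherwise $f(x,\tu y)=\prod_u(1+K_1(u))$ is a single \textsc{ItMult}. No enumeration is needed at all.
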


\begin{proof}
For the first equation type, notice that, if $g(\tu y)\ge 1$, then, for any $t\in \{0,\dots, x\}$, $f(t, \tu y)\ge 0$ (by definition of the schema, in which $A=0$ and $B\in\{0,1\}$).
Thus, in this case,
$
f(x, \tu y) = \sum^{\ell(x)-1}_{u=-1} K_1(u)
$
with the convention that $K_1(-1)=g(\tu y)$, and we can conclude by noticing that $\textsc{BCount}$ is in $\TC^0$.
Otherwise, we compute in parallel $g(\tu y)$ and $K_0(t), K_1(t)$ for all $t\in \{0, \dots, \ell(x)\}$.
Then, we search for the first $t$ such that $K_0(t)\neq 0$ and compute $\sum^{\ell(x)-1}_{u=t} K_1(u)$:
this would be the expected result since, before $t$, each summand is equal to 0 and, after $t$, all signed terms remain strictly positive.
Again, this corresponds to \textsc{BCount}, which can be done in $\TC^0$.
In terms of circuit computation, this is implemented as follows:
\begin{itemize}
    \itemsep0em
    \item In parallel, the circuit computes the value of $g(\tu y)$ and, for each $u\in \{0, \dots, \ell(x)-1\}$ of both $K_0(u)$ and $K_1(u)$ (this can be done in $\TC^0$ for hypothesis).
    \item In one layer, for any $t\in \{0, \dots, \ell(x)-1\}$, it computes the sum $K_0(t) + \sum^{\ell(x)-1}_{u=t+1}K_1(u)$ by $t$ \textsc{Count} gate.
    \item In constant depth, for any $t\in \{0, \dots, \ell(x)-1\}$, it computes the conjunction between  $\neg K_0(0), \neg K_0(1), ..., $ $\neg K_0(t-1)$ and $K_0(t)$ and select the corresponding \textsc{Count} gate.
\end{itemize}

For the second equation type, first note that, if $g(\tu y)=0$, trivially $f(x, \tu y)=0$ (which is clearly computable in $\TC^0$). Let $g(\tu y)\ge 1$. 
Then, for any $t\in \{0, \dots, \ell(x)-1\}$, $\fun{sg}(f(0, \tu y))>0$.
    Therefore, by definition of $\ell$-ODE,   
    $
        f(x, \tu y) = \prod^{\ell(x)-1}_{u=-1} \big(1+ K_1(u)\big),
    $
    with $\prod^{t'}_{t=-1} \kappa=g(\tu y)$.
    This can be computed in $\TC^0$: $g(\tu y)$ and each $K_1(u')$, for $u'\in \{0,\dots, \ell(x)-1\}$ can be computed in parallel in the beginning ($g(x, \tu y)$ and $h(x, \tu y)$ being computable in $\TC^0$ for hypothesis and $K_1$ being a $\fun{sg}$-polynomial expression by definition) and \textsc{ItMult} (so, in particular, iterated multiplication by 1 and 2) is in $\TC^0$, see~\cite{HAB}.
\end{proof}



\subsection{The Class $\NC^1$}\label{sec:chODE}
In this section, we consider two restrictions on $\ell$-ODE, which give us natural schemas the solution of which is a function computable in $\NC^1$: namely,~the schema defined by $A=1$ and $B=K\in \{0,1\}$ and the one defined by $A=-1$  and more liberal $B\ge 0$ (obtaining the so-called $\chODE$ schema, somewhat generalizing $\schODE$).

\begin{lemma}[Bounded Concatenation with Simple Calls]\label{lemma:ODENC1}
    Let $g:\Nat^p \to \Nat$ and $h: \Nat^{p+1} \to \Nat$ be computable in $\NC^1$.
    Then, computation through the schema defined by the IVP with initial value $f(0, \tu y)=g(\tu y)$ and such that:
    $$
    \frac{\partial f(x, \tu y)}{\partial \ell} = f(x, \tu y) + K\big(x, \tu y, h(x, \tu y), f(x, \tu y)\big)
    $$
    where $K\in \{0,1\}$ is a $\fun{sg}$-polynomial expression, and $f(x, \tu y)$ appears in it only under the scope of the sign function in expressions of the form $s=\fun{sg}(f(x, \tu y))$, is in $\NC^1$.
\end{lemma}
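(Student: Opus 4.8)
The plan is to solve the recurrence explicitly via the closed form for linear $\ell$-ODEs given in Section~\ref{sec:preliminaries}, identify that the iteration only has $O(\log n)$ steps (since we derive along $\ell$), and then show each step — and the aggregation of all steps — can be simulated by a logarithmic-depth bounded fan-in circuit. Concretely, writing $s(\alpha(t)) = \fun{sg}(f(\alpha(t),\tu y))$, the key structural observation is that $K$ depends on $f(\alpha(t),\tu y)$ \emph{only} through this one bit $s(\alpha(t))$, so at each stage $t$ the value $K(\alpha(t),\dots)$ is one of the two precomputable quantities $K_0(t)$ or $K_1(t)$, selected by whether $f(\alpha(t),\tu y)$ is zero or positive.

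First I would expand the closed form. Since $A = 1$ (so $1 + A = 2$) and $B = K$, the general formula for linear $\ell$-ODE solutions specializes to
$$
f(x,\tu y) = \sum_{u=-1}^{\ell(x)-1} 2^{\,\ell(x)-1-u}\, K\big(\alpha(u),\tu y, h(\alpha(u),\tu y), f(\alpha(u),\tu y)\big),
$$
with the convention $K(\alpha(-1),\cdots) = g(\tu y)$; equivalently, in ``left-shift plus optional $+1$'' form, $f(\alpha(t+1),\tu y) = 2 f(\alpha(t),\tu y) + K(\alpha(t),\dots)$. The crucial consequence is that the \emph{sign} of $f(\alpha(t),\tu y)$ is monotone: once $f$ becomes positive it stays positive (since all summands are nonnegative and we are doubling), so $f(\alpha(t),\tu y) = 0$ iff $g(\tu y) = 0$ and $K_1(u) = 0$ for every $u < t$ — wait, more carefully, $f(\alpha(t),\tu y)=0$ iff $g(\tu y)=0$ and $K(\alpha(u),\dots)=0$ for all $u<t$, and along that all-zero prefix the relevant value of $K$ is exactly $K_0(u)$. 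Hence: let $t^\star$ be the least index $u \ge -1$ with (if $u=-1$) $g(\tu y)>0$, or (if $u\ge 0$) $K_0(u)=1$; then for $u < t^\star$ the contribution is $0$, and for $u \ge t^\star$ we have $f(\alpha(u),\tu y) > 0$, so $K$ takes value $K_1(u)$. Therefore
$$
f(x,\tu y) = \Big[ t^\star = -1\Big]\cdot 2^{\ell(x)} g(\tu y) \;+\; \sum_{u=\max(t^\star,0)}^{\ell(x)-1} 2^{\,\ell(x)-1-u}\, K_1(u),
$$
and $t^\star$ itself is definable as a simple bounded search over the $O(\log n)$ precomputable bits $g(\tu y), K_0(0), \dots, K_0(\ell(x)-1)$.

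Then I would describe the circuit. By hypothesis $g$ and $h$ are in $\NC^1$, and $K_0(u), K_1(u)$ are $\fun{sg}$-polynomial expressions evaluated at arguments involving $h$, so all of $g(\tu y)$ and the $\ell(x)$ pairs $(K_0(u),K_1(u))$ can be computed in parallel in $\NC^1$. From the bit-string $g(\tu y)>0, K_0(0),\dots,K_0(\ell(x)-1)$ — only $O(\log n)$ bits — finding $t^\star$ (a prefix-OR / leftmost-one) is trivially in $\NC^1$. The remaining object is the sum $\sum_{u} 2^{\ell(x)-1-u} K_1(u)$ of $O(\log n)$ numbers, each of magnitude at most $n^{O(1)}$ shifted by at most $\ell(x)$ bits, hence iterated addition of polynomially many polynomially-bounded integers, which is in $\TC^0 \subseteq \NC^1$; multiplexing in the $t^\star=-1$ branch $2^{\ell(x)}g(\tu y)$ is also trivial. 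Composing these constantly-many $\NC^1$ stages stays in $\NC^1$, giving the claim.

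\textbf{Main obstacle.} The genuine content is the monotonicity-of-sign argument that collapses the dependence of $K$ on $f$ to a fixed prefix-selection ($K_0$ before $t^\star$, $K_1$ after), which is what makes the quantities precomputable in parallel rather than forcing a sequential simulation of the recursion; once that is in hand, the circuit resource accounting ($O(\log n)$ stages, each doubling, aggregated by iterated addition) is routine. I would also need to remark that the formally-stated lemma allows general signed expressions $s = \fun{sg}(f(x,\tu y))$ — here literally just the sign, not $\fun{sg}(f(x,\tu y)-c)$ — which is exactly the case handled above; the generalization to $s=\fun{sg}(f(x,\tu y)-c)$ (mentioned elsewhere in the paper) would require tracking, in addition to the sign, the finitely many thresholds, but $f$ is doubling so it crosses each constant threshold $c$ within $O(\log c)$ further steps, and this can be folded into an enlarged but still $O(\log n)$-length bounded search.
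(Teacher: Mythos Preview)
Your argument is correct and in fact more direct than the paper's. Both proofs hinge on the same structural observation --- that since $f(\alpha(t+1),\tu y)=2f(\alpha(t),\tu y)+K(\cdot)$ with $K\ge 0$, the sign $\fun{sg}(f(\alpha(t),\tu y))$ is monotone, so there is a single transition index after which $K=K_1$ forever --- but they exploit it differently. The paper proceeds by divide-and-conquer: it splits the range at $\ell(x)/2$, computes the left half $S(-1,\ell(x)/2)$ recursively and both candidate right halves $S_0,S_1$ in parallel, then selects by the sign of the left half and concatenates. You instead compute the transition index $t^\star$ outright as a prefix-OR over the $O(\log n)$ precomputable bits $g(\tu y){>}0,K_0(0),\dots,K_0(\ell(x)-1)$, and then read off the answer as a single shifted sum. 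Your route is more elementary and makes clear that the reduction itself sits in $\TC^0$ (indeed mostly $\AC^0$, since the ``sum'' is carry-free concatenation); the paper's recursive splitting is the technique that it reuses uniformly for the neighboring lemmas ($\chODE$ in $\NC^1$, bounded sum in $\AC^1$), where a single transition point is not available.

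One small correction: in your displayed closed form for the case $t^\star\ge 0$, the summand at $u=t^\star$ should be $K_0(t^\star)=1$, not $K_1(t^\star)$, since at that step $f(\alpha(t^\star),\tu y)$ is still $0$; the switch to $K_1$ begins at $u=t^\star+1$. This is a bookkeeping slip and does not affect the argument. Also, you invoke $\TC^0$ iterated addition for the final aggregation; as you implicitly note (and as the paper makes explicit), the terms $2^{\ell(x)-1-u}K_1(u)$ occupy disjoint bit positions, so this is pure concatenation and does not even need $\TC^0$.
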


\begin{proof}[Proof Sketch]
\begin{sloppypar}
By definition of $\ell$-ODE,  $f(x, \tu y)=\sum^{\ell(x)-1}_{u=-1} 2^{\ell(x)-u-1} \times$ $K(\alpha(u),\tu y, h(\alpha(u), \tu y),$ $f(\alpha(u), \tu y))$, with the convention that $K(\alpha(-1))=g(\tu y)$.
Let us define the truncated sum (or concatenation) between $a$ and $b-1$ bounded by $\ell(x)-1$, as $S(a,b) = \sum^{b-1}_{u=a} K\big(\alpha(u), \tu y, h(\alpha(u), \tu y), (u)\big)$, with $S(u)=S(\alpha(u), \tu y)=f(\alpha(u), \tu y)$.
In particular, we consider two sums in which the sign of the first term is assumed to be either 0 or 1: $S_0(a,b)= \sum^{b-1}_{u=a} K(\alpha(u), \tu y, h(\alpha(u), \tu y), S(u))$, with $S_0(a)=0$ and $S_1(a,b)=\sum^{b-1}_{u=a} 2^{\ell(x)-u-1} \times K_1(u)$.
%
%
Therefore, for any $x$ and $\tu y$:
\end{sloppypar}
$$
f(x, \tu y) 
= S(-1, \ell(x)/2)+ S(\ell(x)/2, \ell(x)) 
= + \begin{cases} S(-1,\ell(x)/2) \\
S_0(\ell(x)/2, \ell(x)) \times (1-s(\ell(x)/2-1)) \\
S_1(\ell(x)/2, \ell(x)) \times s(\ell(x)/2-1). \end{cases}
$$
Clearly, $S(-1, \ell(x)/2), S_0(\ell(x)/2, \ell(x))$ and $S_1(\ell(x)/2, \ell(x))$ can be computed in parallel:
we know the actual starting value of $S_0(\ell(x)/2, \ell(x))$, which is 0, while to compute $S_1(\ell(x)/2, \ell(x))= \sum^{\ell(x)-1}_{u=\ell(x)/2}2^{\ell(x)-u-1}K_1(u)$ it is enough to know that, for any $a>\ell(x)/2$, $f(\alpha(a), \tu y)>0$.
Once $S(-1,\ell(x)/2)=f(\alpha(\ell(x)/2), \tu y)$ is computed and its sign tested, we can select between $S_0$ and $S_1$ accordingly.
We conclude by taking the concatenation between $S(-1,\ell(x)/2)$ and either $S_0$ or $S_1$.
\end{proof}

\noindent
This schema  can be generalized so to allow calls to $f(x, \tu y)$ in expressions of the form $s=\fun{sg}(f(x, \tu y)-c)$, i.e.~testing whether $f(x,\tu y)\geq c$.
Computation through an extended version in which $f(x, \tu y)$ occurs under the scope of $\fun{bit}$ is provably in $\NC^1$ 
but it is also strong enough to capture binary tree structure, hence to provide a direct encoding of $\NC^1$ circuit computation.

\begin{remark}
    The condition that $f$ appears only in expressions of the form $s=\fun{sg}(f(x, \tu y))$ or even  $\fun{sg}(f(x, \tu y)-c)$ might seem limited. However the possibilities it offers really increase the ODE's expressive power and, on the other side, even for slightly more liberal conditions, provide the only known upper bound for the schema is $\FP$. 
\end{remark}

\begin{toappendix}
\subsection{Schemas in $\NC^1$}\label{appendixNC}
\begin{proof}[Proof of Lemma~\ref{lemma:ODENC1}]
    By definition of $\ell$-ODE,
    $$
        f(x, \tu y) = \sum^{\ell(x)-1}_{u=-1} 2^{\ell(x)-u-1} \times K\big(\alpha(u), \tu y, h(\alpha(u), \tu y), f(\alpha(u), \tu y)\big)
    $$
    with the convention that $K(\alpha(-1), \tu y, h(\alpha(u), \tu y), f(\alpha(u), \tu y))=f(0, \tu y)=g(\tu y)$.
    Let us consider the truncated sum (or concatenation) between $a$ and $b-1$ bounded by $\ell(x)-1$, which is defined as:
    $$
    S(a,b) = \sum^{b-1}_{u=a} K\big(\alpha(u), \tu y, h(\alpha(u), \tu y), S(u), \tu y)\big) = \sum^{b-1}_{u=a} K\big(\alpha(u), \tu y, h(\alpha(u), \tu y), s(u)\big)
    $$
    with $S(a, \tu y)=f(\alpha(a), \tu y)$ and $s(a) = \fun{sg}\big(f(\alpha(a), \tu y)\big)$.
    In particular, we consider the following two sums in which the sign of the first term is assumed to be either 0 or 1:
    \begin{align*}
        S_0(a, b) &= \sum^{b-1}_{u=a} K(\alpha(u), \tu y, h(\alpha(u), \tu y), S_0(u)) \\
        S_1(a, b) &= \sum^{b-1}_{u=a} 2^{\ell(x)-u-1} K_1(u)
    \end{align*}
    with $S_0(u)=0$. 
    Clearly, if $K(a)=K_1(a)$, $f(\alpha(a), \tu y)$ must be strictly greater than 0 and, then, by definition, for any $t\in \{\alpha(a),\dots, x\}$, $f(t, \tu y)>0$.
    Therefore, for any $x$ and $\tu y$, the value of $f(x, \tu y)$ can be computed as follows:
    \begin{align*}
        f(x, \tu y) &= S(-1, \ell(x)) \\
        &= S(-1, \ell(x)/2) + S(\ell(x)/2, \ell(x)) \\
        &= f(\alpha(\ell(x)/2), \tu y) + S(\ell(x)/2, \ell(x)) \\
        &= \begin{cases}
            0 + S_0(\ell(x)/2, \ell(x)) \\
            S(-1, \ell(x)/2) + \sum^{b-1}_{u=a} 2^{\ell(x)-u-1} K_1(u)
        \end{cases}
    \end{align*}
    Hence,
    $$
     f(x, \tu y) = + \begin{cases}
        S(-1, \ell(x)/2) \\
        S_0(\ell(x)/2, \ell(x)) \times 1 - s(\ell(x)/2-1) \\
        S_1(\ell(x)/2, \ell(x)) \times s(\ell(x)/2-1)
    \end{cases}
    $$
    Independently computing $S(-1, \ell(x)/2), S_0(\ell(x)/2, \ell(x))$ and $S_1(\ell(x)/2), \ell(x))$ can be done in parallel: as seen, we know the actual starting value of $S_0(\ell(x)/2, \ell(x))$, which is $f(\alpha(\ell(x)/2), \tu y)=0$ (so that this sum can be computed exactly as the starting sum $S(-1, \ell(x)/2)$ in at most $\ell(x)$ steps), while to compute $S_1(\ell(x)/2, \ell(x))=\sum^{\ell(x)-1}_{u=\ell(x)/2} 2^{\ell(x)-u-1}K_1(u)$ we do not need to know the actual value of $f(\alpha(\ell(x)/2), \tu y)$ since we know that it is always strictly greater than 0, i.e. for $t\in \{a, \dots, x\}$ $K(t)=K_1(t)$.
    Once that $S(-1, \ell(x)/2)=f(\alpha(\ell(x)/2), \tu y)$ is known, the sign can be tested and selection between $S_0$ and $S_1$ obtained accordingly.
    The final step amounts in a concatenation between $S(-1, \ell(x)/2)$, which is either 0 or has length strictly smaller than $\ell(x)/2$, and $S_0$ or $S_1$.
\end{proof}

\begin{lemma}[Generalized Bounded Concatenation with Simple Calls]\label{lemma:gBC}
    Let $g:\Nat^p\to \Nat$ and $h:\Nat^{p+1}\to \Nat$ be functions computable in $\NC^1$.
    Then, $f:\Nat^{p+1}\to \Nat$ defined by $h$ and $g$ as follows:
    \begin{align*}
        f(0, \tu y) &= g(\tu y) \\
        \frac{\partial f(x, \tu y)}{\partial \ell} &= f(x, \tu y) + K\big(x, \tu y, h(x, \tu y), f(x, \tu y)\big)
    \end{align*}
    with $K\in \{0,1\}$ and $f$ appearing only under the scope of the sign function inside expressions of the form $s=\fun{sg}(f(x,\tu y) - c)$ for any constant $c\in \Nat$ is in $\NC^1$.
\end{lemma}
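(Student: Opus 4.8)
The argument generalizes the proof of Lemma~\ref{lemma:ODENC1}; the crucial observation is that the information about $f$ that must be propagated through the computation remains of \emph{constant} size. As there, the $\ell$-ODE solution formula gives
\[
f(x,\tu y)=2^{\ell(x)}g(\tu y)+\sum_{u=0}^{\ell(x)-1}2^{\ell(x)-1-u}\,K\big(\alpha(u),\tu y,h(\alpha(u),\tu y),f(\alpha(u),\tu y)\big),
\]
equivalently $f(\alpha(u+1),\tu y)=2f(\alpha(u),\tu y)+K(\alpha(u),\dots)$ with $f(\alpha(0),\tu y)=g(\tu y)$.

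First I would isolate the finitely many constants $c_1<\dots<c_m$ occurring in the subexpressions $\fun{sg}(f(x,\tu y)-c_j)$ of the \emph{fixed} expression $K$, and set $c=c_m$. The value of $K(\alpha(u),\dots)$ depends on $f(\alpha(u),\tu y)$ only through the $m$ predicates $\fun{sg}(f(\alpha(u),\tu y)-c_j)$, hence only through the quantity $\sigma(u)$ defined as $f(\alpha(u),\tu y)$ when $f(\alpha(u),\tu y)<c$ and as the symbol $\top$ otherwise, which ranges over the constant-size set $\{0,1,\dots,c-1,\top\}$. Since $g\ge 0$ and $K\in\{0,1\}$ we have $f(\alpha(u+1),\tu y)=2f(\alpha(u),\tu y)+K(\alpha(u),\dots)\ge f(\alpha(u),\tu y)$, so $f$ is non-decreasing and $\top$ is absorbing; consequently $\sigma(u+1)$ is determined by $\sigma(u)$, by $h(\alpha(u),\tu y)$ and by the fixed expression $K$ via an update map on $\{0,\dots,c-1,\top\}$, and this map (for each fixed $u$) is computable in $\NC^1$ because $h$ is.

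I would then run the divide-and-conquer of Lemma~\ref{lemma:ODENC1} with ``state'' $\sigma$ in place of the single sign bit: split $\{-1,\dots,\ell(x)-1\}$ at the midpoint, compute in parallel the left part $S(-1,\ell(x)/2)$ (which, as in Lemma~\ref{lemma:ODENC1}, equals $f(\alpha(\ell(x)/2),\tu y)$) together with, for each of the $c+1$ possible values of $\sigma$ at the midpoint, the corresponding right part (a self-similar instance over the right half started from that hypothesised state), and finally select the correct right part using the value of $\sigma$ read off from $f(\alpha(\ell(x)/2),\tu y)$ and concatenate. Because $c+1$ is a fixed constant, the recursion tree, its depth $O(\log\ell(x))$, and the per-node work -- an iterated addition/concatenation of polynomially many numbers (in $\TC^0$, hence in $\NC^1$) plus a constant-size case distinction -- are exactly as in Lemma~\ref{lemma:ODENC1}, so $f$ is in $\NC^1$. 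Alternatively, one may simply note that $\sigma(-1),\sigma(0),\dots$ is obtained by iterated composition in the fixed finite transformation monoid of the set $\{0,\dots,c-1,\top\}$, a problem in $\NC^1$, after which every emitted bit $K(\alpha(u),\dots)$ is known and $f(x,\tu y)$ follows from the displayed closed form by one iterated addition.

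The one point deserving care is the constant-size state claim: it relies on $K$ being a \emph{fixed} $\fun{sg}$-polynomial expression (so only boundedly many threshold subexpressions $\fun{sg}(f(x,\tu y)-c_j)$ can occur) together with the monotonicity of $f$ forced by $g,K\ge 0$, which makes ``$f$ has already passed all thresholds'' an absorbing state and thus bounds the state space by $c_m+1=O(1)$. Granting this, the only extra work over Lemma~\ref{lemma:ODENC1} is the bookkeeping of a constant number of hypotheses at each recursion node, which changes neither the depth nor the size asymptotics, and the proof closes.
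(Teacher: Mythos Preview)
Your proof is correct and follows the same divide-and-conquer as the paper: both split at the midpoint and compute the right half under each of the constantly many possible hypotheses for the value of $f$ at the midpoint (the paper writes these as the $c+2$ conditional sums $S_0,\dots,S_{c+1}$; your $\sigma$-states are the same device). One minor off-by-one to fix: since $\fun{sg}(0)=0$, the predicate $\fun{sg}(f-c_m)$ distinguishes $f=c_m$ from $f>c_m$, so $\top$ should encode $f>c_m$ and the state set should be $\{0,\dots,c_m,\top\}$ ($c_m+2$ states, matching the paper) rather than $\{0,\dots,c_m-1,\top\}$; this does not affect the argument. Your alternative route via iterated composition in a fixed finite transformation monoid is a clean shortcut the paper does not spell out.
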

\begin{proof}
    By definition of $\ell$-ODE,
    $$
    f(x, \tu y) = \sum^{\ell(x)-1}_{u=-1} 2^{\ell(x)-u-1} \times K\big(\alpha(u), \tu y, h(\alpha(u), \tu y), f(\alpha(u), \tu y)\big)
    $$
    with the convention that $K\big(\alpha(-1), \tu y, h(\alpha(u), \tu y), f(\alpha(u), \tu y)\big)= f(0, \tu y)=g(\tu y)$.
    Let $s_j(u)=\fun{sg}\big(f(\alpha(u), \tu y)\big)$, $j\in \{1, \dots, h\}$ be a list of signed terms appearing in $K$ and $c=\max_jc_j$.
    \\
    Let us consider the truncated sum between $a$ and $b-1$ bounded by $\ell(x)-1$.
    It is defined as:
    $$
    S(a,b) = \sum^{b-1}_{u=a} K\big(\alpha(u), \tu y, h(\alpha(u), \tu y), S(u)\big)
    $$
    with $S(a)=f\big(\alpha(a), \tu y\big)$.
    We consider the following $c+2$ sums in which an assumption is made on the value of the first term:
    $$
    S_i(a,b) = \sum^{b-1}_{u=a} 2^{\ell(x)-u-1}K_i\big(\alpha(u), \tu y, h(\alpha(u), \tu y), S_i(\alpha(u), \tu y)\big)
    $$
    for $i\in \{1,\dots, c+1\}$.
    Recall that $K_i$ is defined from $K$ by replacing each term $s_j(a)$ by its value assuming $S_0(a)\leq 0, S_i(u)=i$ for $i\in \{1,\dots c\}$ and $S_{c+1}(\alpha(a), \tu y)>c$.
    More precisely, for any $i$, $s_j(a)=1$ if $i>c_j$ and 0 otherwise.
    Then, for any $x$ and $\tu y$, the value of $f(x, \tu y)$ can be computed as follows:
    \begin{align*}
        f(x, \tu y) &= S(-1, \ell(x)) \\
        &= S\big(-1, \ell(x)/2\big) + S\big(\ell(x)/2, \ell(x)\big) \\
        &= f\big(\alpha(\ell(x)/2), \tu y\big) + S\big(\ell(x)/2, \ell(x)\big) \\
        &= \vee \begin{cases}
            0 + S_0\big(\ell(x)/2, \ell(x)\big) \\
            \dots \\
            c + S_c\big(\ell(x)/2, \ell(x)\big) \\
            S\big(-1, \ell(x)/2\big) + S_{c+1}\big(\ell(x)/2, \ell(x)\big).
        \end{cases}
    \end{align*}
    Hence,
    $$
    f(x, \tu y) = + \begin{cases}
        S\big(-1, \ell(x)/2\big) \\
        S_0\big(\ell(x)/2, \ell(x)\big) \times \fun{sg}\big(S\big(-1, \ell(x)/2\big)\big) \\
        S_1\big(\ell(x)/2, \ell(x)\big) \times \fun{sg}\big(S\big(-1, \ell(x)/2\big)-1\big) \times \fun{sg}\big(1-S\big(-1, \ell(x)/2\big)\big) \\
        \dots \\
        S_c \big(\ell(x)/2, \ell(x)\big) \times \fun{sg}\big(S\big(-1, \ell(x)/2\big)-c\big) \times \fun{sg}\big(c-S\big(-1, \ell(x)/2\big)\big) \\
        S_{c+1} \big(\ell(x)/2, \ell(x)\big) \times \fun{sg}\big(S\big(-1, \ell(x)/2\big)-c-1\big).
    \end{cases}
    $$
    Independently computing $S\big(-1, \ell(x)/2\big), S_0\big(\ell(x)/2, \ell(x)\big), \dots, S_{c+1}\big(\ell(x)/2, \ell(x)\big)$ can be done in parallel: as seen, we know the actual starting value of $S_0\big(\ell(x)/2, \ell(x)\big)$, which is $f\big(\alpha(u), \tu y\big)=0$ (so that this sum can be computed exactly as the starting sum $S\big(-1, \ell(x)/2\big)$ in at most $\ell(x)$ steps),
    while to compute each $S_i\big(\ell(x)/2, \ell(x)\big)=\sum^{\ell(x)-1}_{u=\ell(x)/2} 2^{\ell(x)-u-1} K_1(u)$ we make the assumption that the actual value of $f(\alpha(a), \tu y)$
\end{proof}

\end{toappendix}


We now move to the new characterization of $\NC^1$.
To establish it, we introduce an $\ell$-ODE schema defined by imposing $A=-1$ and $B\ge 0$, that is by generalizing the constraint defining $\schODE$ from $B=K\in \{0,1\}$ to $B$ taking any positive value.

\begin{defn}[Schema $\chODE$]
    Given $g:\Nat^p \to \Nat$ and $h:\Nat^{p+1} \to \Nat$, the function $f:\Nat^{p+1} \to \Nat$ is defined by $\chODE$ if it is the solution of the IVP with initial value $f(0, \tu y)=g(\tu y)$ and such that
    $$
        \frac{\partial f(x, \tu y)}{\partial \ell} = -f(x, \tu y) + B\big(x, \tu y, h(x, \tu y), f(x, \tu y)\big)
    $$
    where, for any $x, \tu y$, $B(x,\tu y, h(x, \tu y), f(x, \tu y)) \ge 0$ is a $\fun{sg}$-polynomial expression, and $f(x, \tu y)$ occurs in it only in expressions of the form $\fun{sg}(f(x, \tu y)-c)$, being $c$ a constant.
\end{defn}

\begin{lemma}\label{lemma:chODE}
    If $f(x, \tu y)$ is defined by \emph{$\ell$-bODE} from functions in $\NC^1$, then $f(x,\tu y)$ is in $\NC^1$ as well.
\end{lemma}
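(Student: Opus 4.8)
The plan is to solve the $\ell$-ODE in closed form and exploit that, since the coefficient of $f(x,\tu y)$ is identically $-1$, the solution forgets its entire past at every length change. First I would apply the closed form for linear $\ell$-ODE recalled in Section~\ref{sec:preliminaries}: with $A\equiv -1$, every factor $1+A(\alpha(t))$ equals $0$, so all summands vanish except the one with the empty product, leaving $f(x,\tu y)=B\big(\alpha(\ell(x)-1),\tu y,h(\alpha(\ell(x)-1),\tu y),f(\alpha(\ell(x)-1),\tu y)\big)$. Since $\ell(\alpha(m-1))=m-1$, this unfolds into the recurrence $F(0,\tu y)=g(\tu y)$ and $F(m,\tu y)=B\big(\alpha(m-1),\tu y,h(\alpha(m-1),\tu y),F(m-1,\tu y)\big)$, with $f(x,\tu y)=F(\ell(x),\tu y)$ --- a chain whose length $\ell(x)$ is only $O(n)$ in the input size but each of whose steps is almost oblivious to the running value. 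Because $B$ is a fixed $\fun{sg}$-polynomial expression and $g,h$ lie in $\NC^1$ (hence have polynomially bounded outputs), every $F(m,\tu y)$ has length bounded by a fixed polynomial in $n$.

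The key observation is the next step: by the definition of $\chODE$, $f(x,\tu y)$ occurs in $B$ only inside subexpressions $\fun{sg}(f(x,\tu y)-c)$. Letting $c_1\le\dots\le c_r$ enumerate the finitely many such constants, the value $F(m-1,\tu y)$ influences $F(m,\tu y)$ \emph{only} through the state $\sigma_{m-1}:=|\{\,j:c_j<F(m-1,\tu y)\,\}|\in\{0,\dots,r\}$, which fixes all the signs $\fun{sg}(F(m-1,\tu y)-c_j)$. For each of the $r+1$ states I would substitute the corresponding signs into $B$, obtaining an $f$-free $\fun{sg}$-polynomial expression in $x,\tu y,h$, evaluate it (one call to $h$ plus the arithmetic of a fixed expression, all within $\TC^0$), and compare the result against the $c_j$ to read off $\sigma_m$. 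This produces, uniformly in $m-1$ and $\tu y$ and within $\NC^1$, the full table of a transition map $\tau_{m-1}\colon\{0,\dots,r\}\to\{0,\dots,r\}$ with $\sigma_m=\tau_{m-1}(\sigma_{m-1})$.

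It then remains to compose: $\sigma_{\ell(x)-1}$ is obtained by evaluating the product $\tau_{\ell(x)-1}\circ\dots\circ\tau_0$ in the fixed finite monoid of self-maps of $\{0,\dots,r\}$. The $\tau_i$ for $i<\ell(x)$ can all be built in parallel by $\NC^1$ subcircuits (polynomial size, depth $O(\log n)$), and the iterated product of $O(n)$ elements of a fixed finite monoid is computed by a balanced binary tree of depth $O(\log n)$ with constant-size composition gadgets --- the standard $\mathbf{Dlogtime}$-uniform $\NC^1$ bound. Once $\sigma_{\ell(x)-1}$ is known, hence the signs of $F(\ell(x)-1,\tu y)$, one last evaluation of the now $f$-free expression $B\big(\alpha(\ell(x)-1),\tu y,h(\alpha(\ell(x)-1),\tu y),\cdot\big)$ --- again in $\TC^0\subseteq\NC^1$ --- returns $f(x,\tu y)$; the case $x=0$ gives $f=g(\tu y)\in\NC^1$ directly. (This also clarifies how $\chODE$ generalizes $\schODE$ of Lemma~\ref{lemma:schODE}: there $B\in\{0,1\}$ makes the state two-valued and the associated transformation monoid solvable, which is why $\FACC$ already suffices in that case.)

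I expect the main obstacle to be the conjunction of the closed form with the finiteness claim: one must argue carefully that the two syntactic restrictions defining $\chODE$ --- coefficient $-1$, and $f$ appearing only under $\fun{sg}(f-c)$ --- together genuinely collapse the dependence on earlier values to an $O(1)$-size state, while simultaneously keeping every intermediate $F(m,\tu y)$ polynomially bounded (so that all the arithmetic evaluations of $B$ are legitimately in $\TC^0$). After that, invoking $\NC^1$-computability of iterated products in a fixed finite monoid and counting circuit depths is routine. The argument is unaffected if $f$ is vector-valued: each coordinate contributes its own bounded state and the product of these finite monoids is again finite.
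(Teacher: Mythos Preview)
Your proof is correct and follows the same idea as the paper: with $A\equiv-1$ the recurrence depends on the previous value only through the finitely many signs $\fun{sg}(f-c_j)$, so computing $f(x,\tu y)$ reduces to composing $\ell(x)$ self-maps of a fixed finite set in depth $O(\log n)$. The paper carries this out as an explicit divide-and-conquer on ``partial results'' $C_i(a,b)$ (the value reached at step $b$ assuming state $i$ at step $a$), which is precisely the balanced-binary-tree evaluation of the monoid product you invoke; your phrasing via iterated products in a fixed finite monoid is a bit more abstract and handles several constants $c_1,\dots,c_r$ uniformly, something the paper's sketch postpones to a remark.
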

\begin{proof}
\begin{sloppypar}
    To simplify the exposition, w.l.o.g. let us suppose that: (i) $B$ includes only one signed term, say $s$, in which $f(x, \tu y)$ occurs, and (ii) $\ell(x)$ is a power of two (to avoid rounding).
    By definition of $\ell$-ODE, $f(x, \tu y)=B\big(\alpha(\ell(x)-1), \tu y, h(\alpha(\ell(x)-1), \tu y), f(\alpha(\ell(x)-1), \tu y)\big)= B_i(\ell(x)-1)$, with $i=s(\ell(x)-1)$.
    %
    %
    We consider partial or truncated results between $a$ and $b-1$ bounded by $\ell(x)-1$, intuitively corresponding to subsequent compositions of $s$ in $B$, going from $a$ to $b-1$; that is, either $C(a,b)=C_0(a,b)$, i.e.~the computation of $f(b-1, \tu y)$ starting from $f(a,\tu y)=B_0(a-1)$, or $C(a,b)=C_1(a,b)$, i.e.~the computation of $f(b-1, \tu y)$ starting from $f(a,\tu y)=B_1(a-1)$.
    The values of $C(0, a-1)$, $C_0(a,b)$ and $C_1(a,b)$ can be computed independently from all previous values of $f(\cdot,\tu y)$.
    Additionally, only one between $C_0(a,b)$ and $C_1(a,b)$ corresponds to the actual value of $C(a,b)$.
    Then, for any $x$ and $\tu y$, the value of $f(x, \tu y)$ can be computed inductively by selecting the correct partial $B(a,b)$ as follows:
    $f(x, \tu y) = C(-1, \ell(x)) 
    = \fun{sg}\big(C(-1, \ell(x)/2)\big) \times   C_1\big(\ell(x)/2, \ell(x)\big) +
    \fun{cosg}\big(C(-1, \ell(x)/2)\big) \times C_0\big(\ell(x)/2, \ell(x)\big)$.
    Both $C(-1, \ell(x)/2), C_0(\ell(x)/2, \ell(x))$ and $C_1(\ell(x)/2, \ell(x))$ as well as, for any $t\in \{0,\dots, \ell(x)-1\}$, $B_0(t),B_1(t)$ and the corresponding signs
    can be computed independently from each other and in parallel.
    The number of subsequent recursive calls needed to end up with the desired selection, i.e.~the one corresponding to $C_0(\ell(x)/2, \ell(x))$ and $C_1(\ell(x)/2, \ell(x))$, would require $\ell(x)/2$ (selection) steps, ensuring we are in $\NC^1$.
\end{sloppypar}
\end{proof}


\begin{remark}
    As desired, Lemma~\ref{lemma:chODE} still holds when dealing with $s_0, \dots, s_r$ signed expressions occurring in $B$.
    To deal with this more general case we need to add an initial consistency check, considering $\fun{sg}(B_{i_0,\dots, i_r}(t))$'s and $B_{i_0'\dots i_r'}(t+1)$'s, in the very beginning.
\end{remark}

Together with $\pODE$, $\chODE$ provides a new ODE-based characterization for $\NC^1$.

\begin{theorem}
    $\NC^1 = [\fun{0}, \fun{1}, \ell, \fun{sg}, +, -, \div 2, \#, \pi^p_i; \circ, \ell\emph{-pODE}, \ell\emph{-bODE}].$
\end{theorem}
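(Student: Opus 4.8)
The plan is to prove the two inclusions separately: $(\supseteq)$ directly from the closure results already in hand, and $(\subseteq)$ indirectly, by simulating a known recursion-theoretic characterization of $\NC^1$.

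For $(\supseteq)$ I would argue by structural induction on the construction of a function in the algebra. The base functions $\fun{0},\fun{1},\ell,\fun{sg},+,-,\div 2,\#,\pi^p_i$ are all (even $\AC^0$- or $\TC^0$-) computable, hence in $\NC^1$, and $\NC^1$ is closed under composition. If $f$ is obtained by $\pODE$ from $g,h\in\NC^1$, then by the closed form underlying Proposition~\ref{prop:pODE} its value is produced from the polynomially many, polynomially bounded values of $\fun{sg}$-polynomial expressions in $x,\tu y,h(\cdot)$ by an iterated product followed by an iterated sum; since these operations are in $\TC^0\subseteq\NC^1$ and $\NC^1$ is composition-closed, $f\in\NC^1$ — this is Proposition~\ref{prop:pODE} with $\TC^0$ relaxed to $\NC^1$, whose proof goes through unchanged. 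If $f$ is obtained by $\chODE$ from functions in $\NC^1$, then $f\in\NC^1$ by Lemma~\ref{lemma:chODE}. This closes the induction.

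For $(\subseteq)$ I would go through Clote's recursion-theoretic characterization of $\NC^1$, namely $\NC^1=[\fun{0},\pi^p_i,\fun{s}_0,\fun{s}_1,\ell,\fun{BIT},\#;\circ,\text{CRN},k\text{-BRN}]$ with $k$-BRN constant-bounded recursion on notation (Clote~\cite{Clote93}); the $tree$-based algebra of Clote~\cite{Clote1990} would be an equally workable route. It suffices to realise every generator inside our algebra: $\fun{0},\pi^p_i,\ell,\#$ and $\circ$ are already there, $\fun{s}_0,\fun{s}_1$ are immediate, $\fun{BIT}$ is $\ell$-ODE$_3$-definable (Remark~\ref{remark:BIT}) and $\text{CRN}$ is $\ell$-ODE$_1$-definable (Remark~\ref{remark:CRN}), and both $\ell$-ODE$_1$ and $\ell$-ODE$_3$ are instances of $\pODE$ (Remark~\ref{remark:pODE}); thus Clote's $\AC^0$ algebra — indeed all of $\TC^0$, by our characterization $\TC^0=[\ldots;\circ,\pODE]$ — sits inside our algebra. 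The only missing ingredient is the single $\NC^1$-specific generator, $k$-BRN (equivalently, $tree$).

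The crux is therefore to define a function $f$ given by $k$-BRN using $\chODE$. Such an $f$ satisfies $|f(x,\tu y)|\le k$ throughout and is computed by reading the binary notation of its principal argument one bit per step, each step mapping the current $(\le k)$-bit value through an algebra-definable transition determined by the bit read and the step index. I would unfold this into an $\ell$-ODE exactly as CRN is unfolded in Remark~\ref{remark:CRN}: carry the argument as an auxiliary variable and diagonalise at the end so that the step indexed by $u$ reads $\fun{BIT}$ of the correct position. The transition naturally wants the value of $f$ directly, while $\chODE$ allows $f$ only under guards $\fun{sg}(f-c)$; I would bridge this with a constant-size case split, writing the transition as $\sum_{v=0}^{2^k-1}[f=v]\cdot(\text{transition on }v)$ with $[f=v]=\fun{sg}(f-(v-1))\cdot\fun{cosg}(f-v)$, so that $f$ occurs only in admissible guards and the equation takes the shape $\frac{\partial f(x,\tu y)}{\partial\ell}=-f(x,\tu y)+B$ with $B\ge 0$ a $\fun{sg}$-polynomial expression of the permitted form, i.e.\ an instance of $\chODE$ in the multi-guard version from the remark following Lemma~\ref{lemma:chODE}. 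Conceptually this exhibits $\chODE$ as computing an iterated composition of $\ell(x)$ maps from a fixed finite monoid (functions on the constantly many states), which is the standard reason such recursions stay in $\NC^1$. The main obstacle I expect is exactly this re-encoding — keeping $B$ nonnegative and syntactically admissible while faithfully reproducing the bounded-recursion step, and aligning the bit positions in the unfolding — which is routine but bookkeeping-heavy and would be spelled out in the Appendix.
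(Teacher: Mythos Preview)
Your proposal is correct and follows essentially the same route as the paper: the $(\supseteq)$ direction is the same closure argument, and for $(\subseteq)$ the paper also passes through Clote's $k$-BRN algebra (specifically $4$-BRN, so values in $\{0,\dots,4\}$ rather than $2^k$ values as you wrote, but this is cosmetic) and simulates the bounded recursion by exactly the finite case-split you describe, writing $B$ as a sum of the transition values guarded by products $\fun{sg}(f-(c-1))\cdot\fun{cosg}(f-c)$ so that the resulting IVP is an instance of $\chODE$.
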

\begin{proof}
    $(\supseteq)$ All basic functions of this algebra are already in $\AC^0$ and the closure property holds for $\circ, \pODE$ (generalization of Proposition~\ref{prop:pODE}) and $\chODE$ (Lemma~\ref{lemma:chODE}). 
    \\
    $(\subseteq)$ The proof is indirect and passes through Clote's algebra:
    $$
    \mathcal{N}_0' = [\fun{0}, \pi^p_i, \fun{s}_0, \fun{s}_1, \ell, \fun{BIT}, \#; \circ, \text{CRN}, \text{4-BRN}],
    $$
    where a function is defined by 4-BRN if $f(0, \tu y)=g(\tu y)$ and $f(s_i(x), \tu y)=h_i(x, \tu y, f(x, \tu y))$ for $g$ and $h$ taking values in $\{0,\dots, 4\}$ only.
    In~\cite{Clote93}, this class is proved to be the one characterizing $\NC^1$. 
    In particular, to show that any function defined by 4-BRN 
    can be rewritten as the solution of an instance of $\chODE$, we consider the IVP defined by the initial value $\fun{4brn}(x,y,\tu y)=g(\tu y)$ and such that:
    \footnotesize
    \begin{align*}
        \frac{\partial \fun{4brn}(x, y, \tu y)}{\partial \ell} = -\fun{4brn}(x, y, \tu y) &+ h_{z(x)}(\ell(x), \tu y, 0) \times \fun{cosg}\big(\fun{4brn}(x, y, \tu y)\big) \\
        & + h_{z(x)}(\ell(x), \tu y, 1) \times \fun{sg}\big(\fun{4brn}(x, y, \tu y)\big) \times \fun{cosg}\big(\fun{4brn}(x, y, \tu y) - 1\big) \\
        & + h_{z(x)}(\ell(x), \tu y, 2) \times \fun{sg}\big(\fun{4brn}(x, y, \tu y) - 1\big)\times \fun{cosg}\big(\fun{4brn}(x,y,\tu y)-2\big) \\
        &+ h_{z(x)}(\ell(x), \tu y, 3) \times   \fun{sg}\big(\fun{4brn}(x,y,\tu y) - 2\big) \times \fun{cosg}\big(\fun{4brn}(x, y, \tu y) - 3\big) \\
        &+ h_{z(x)}(\ell(x), \tu y, 4) \times \fun{sg}\big(\fun{4brn}(x,y,\tu y) - 3\big) \times \fun{cosg}\big(\fun{4brn}(x, y, \tu y) - 4\big)
    \end{align*}
    \normalsize
    \noindent
    with $z(x)=\fun{BIT}(\ell(y)-\ell(x)-1, y)$.
    Then, $\fun{4brn}(x,x,\tu z)=f(x,x,\tu z)$, where $f$ is defined by 4-BRN from $g$ and $h_i$.
    Since Clote's $\mathcal{N}_0'$ is nothing but $\mathcal{A}_0$ endowed with 4-BRN~\cite{Clote93} and by~\cite{ADK24a} $\mathcal{A}_0=\ACDL$, this encoding of 4-BRN via $\chODE$, together with Proposition~\ref{prop:aux}, concludes our proof.
\end{proof}

\subsection{The Class $\AC^1$}\label{sec:ACone}

When we consider a generalization of the non-strict schemas shown in $\TC^0$ from $B=K\in \{0,1\}$ to more liberal $B\ge 0$ (analogously to the generalization from $\schODE$ to $\chODE$), we obtain an $\ell$-ODE schema, which is provably in $\AC^1$.
Formally, it is defined by an IVP
with restrictions 
$A=0$ and, as mentioned, $B\ge 0$ with only simple calls to $f(x, \tu y)$.

The proof of the result below is similar  to that for Lemma~\ref{lemma:ODENC1}, except for the last step which amounts of a sum (possibly including carries) rather than a concatenation (see Appendix~\ref{appendix:AC}).

\begin{lemma}[Bounded Sum with Simple Calls]\label{lemma:AC1}
    Let $g: \Nat^p \to \Nat$ and $h:\Nat^{p+1}\to \Nat$ be computable in $\AC^1$. Then, computation through the schema defined by the IVP with initial value $f(0, \tu y)=g(\tu y)$ and such that
    $$
    \frac{\partial f(x, \tu y)}{\partial \ell} = B\big(x, \tu y, h(x, \tu y), f(x, \tu y)\big)
    $$
    where $B \ge 0$ is a $\fun{sg}$-polynomial expression with $f(x,\tu y)$ only appearing under the scope of the sign function inside expressions of the form $s=\fun{sg}(f(x, \tu y))$, is in $\AC^1$ as well.
\end{lemma}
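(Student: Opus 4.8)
The plan is to follow the proof of Lemma~\ref{lemma:ODENC1} almost verbatim, the only genuine change being that the final reassembly of the partial results is now an addition with carries rather than a concatenation. As in the companion arguments (cf.\ Lemma~\ref{lemma:chODE}) I would first reduce, w.l.o.g., to the case where $\ell(x)$ is a power of two and $B$ depends on $f(x,\tu y)$ through a single sign term $s=\fun{sg}(f(x,\tu y))$. By definition of $\ell$-ODE the solution satisfies $f(0,\tu y)=g(\tu y)$ and
$$
f(x, \tu y) = \sum_{u=-1}^{\ell(x)-1} B\big(\alpha(u), \tu y, h(\alpha(u), \tu y), f(\alpha(u), \tu y)\big),
$$
with the convention $B(\alpha(-1),\cdot)=g(\tu y)$. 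The one new structural observation, absent from Lemma~\ref{lemma:ODENC1}, is that $B\ge 0$ forces the partial sums $f(\alpha(t),\tu y)$ to be non-decreasing in $t$, so the sign $s(u)=\fun{sg}(f(\alpha(u),\tu y))$ is monotone: equal to $0$ up to some threshold index and equal to $1$ from there on.

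Next I would introduce, for $0\le a\le b\le\ell(x)$, the truncated sum $S(a,b)=\sum_{u=a}^{b-1}B(\alpha(u),\tu y,h(\alpha(u),\tu y),S(u))$ with $S(a)=f(\alpha(a),\tu y)$, together with its two guessed-sign variants: $S_1(a,b)=\sum_{u=a}^{b-1}B_1(u)$, obtained by forcing $s=1$ everywhere, which is a plain iterated sum not referring to $f$; and $S_0(a,b)$, obtained as if $f(\alpha(a),\tu y)=0$, i.e.\ with the sign set to $0$ at the start of the block. Exactly as in Lemma~\ref{lemma:ODENC1}, $S(-1,\ell(x)/2)$, $S_0(\ell(x)/2,\ell(x))$ and $S_1(\ell(x)/2,\ell(x))$ can be evaluated in parallel without knowing the value of $f$ at the cut point, and by monotonicity of the sign the true value is recovered by
$$
f(x,\tu y)=S(-1,\ell(x)/2)+\fun{cosg}\big(S(-1,\ell(x)/2)\big)\times S_0\big(\ell(x)/2,\ell(x)\big)+\fun{sg}\big(S(-1,\ell(x)/2)\big)\times S_1\big(\ell(x)/2,\ell(x)\big).
$$
Unfolding this recursion over the $\ell(x)$ indices yields a balanced tree of depth $O(\log\ell(x))$; at each internal node the combination of the left result with the selected right variant is now an addition with carries, still computable in constant depth, rather than a concatenation. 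Incidentally, monotonicity of the sign also allows one to bypass the recursion entirely: locate in $\AC^0$ the first index $u^{\dagger}$ with $B_0(u^{\dagger})>0$ and output $g(\tu y)+\sum_{u\ge 0}B_1(u)$ if $g(\tu y)>0$, and $B_0(u^{\dagger})+\sum_{u>u^{\dagger}}B_1(u)$ otherwise.

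For the resource bound I would argue as follows: (i) $g(\tu y)$ and all the values $B_0(u),B_1(u)$ for $u\in\{0,\dots,\ell(x)-1\}$ are computed in parallel, each being an $\AC^1$ evaluation of $g$ and $h$ composed with the $\TC^0$ arithmetic ($+,-,\div 2,\times,\fun{sg}$) of the polynomial expression $B$, hence still in $\AC^1$; (ii) each block sum $S_1(a,b)$ is an iterated addition of at most $\ell(x)$ polynomially bounded numbers, which lies in $\TC^0\subseteq\AC^1$, and the $S_0(a,b)$ are obtained by the same scheme (or by the threshold shortcut just described); (iii) on top of these sit $O(\log\ell(x))$ select-and-add layers of constant depth. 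As $\ell(x)$ is bounded by the input length, the total depth is $O(\log n)$ and the size polynomial, so $f$ is in $\AC^1$. The jump from $\NC^1$ in Lemma~\ref{lemma:ODENC1} to $\AC^1$ here is the expected one: there the restriction $K\in\{0,1\}$ made the reassembly a cost-free concatenation and placed $g,h$ in $\NC^1$, whereas for a general non-negative $B$ one pays for a genuine iterated addition and the hypotheses only place $g,h$ in $\AC^1$.

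The main obstacle will be the bookkeeping that makes the guess-and-select recursion self-consistent: one has to check that for every dyadic block $S_0(a,b)$ and $S_1(a,b)$ genuinely bracket the true $S(a,b)$, and that the selection at a parent node depends only on the sign of the true value returned by its left child --- which is precisely where $B\ge 0$ (monotonicity of the sign) is used, and which also shows there is no carry interference between the recursion levels. The secondary thing to verify is that replacing concatenation by addition does not inflate the depth: since addition is in $\AC^0$ and the accumulated operands stay polynomially bounded, the $O(\log\ell(x))$ combination layers contribute only additively to the depth, which therefore remains dominated by the $\AC^1$ evaluation of $g$ and $h$ and the $\TC^0$ iterated sums.
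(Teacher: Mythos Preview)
Your proposal is correct and follows essentially the same approach as the paper's own proof: both expand $f$ as a length-indexed sum, exploit that $B\ge 0$ makes the sign $\fun{sg}(f(\alpha(u),\tu y))$ monotone in $u$, introduce the truncated sums $S_0$ and $S_1$ under the two possible initial signs, and derive the recurrence $\mathrm{depth}(\ell(x))=\mathrm{depth}(\ell(x)/2)+O(1)$ yielding depth $O(\ell_2(x))$. Your write-up is in fact a bit more detailed on the resource bookkeeping and additionally sketches the threshold-based shortcut (locating the first $u^{\dagger}$ with $B_0(u^{\dagger})>0$), which the paper does not mention.
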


\begin{toappendix}
\subsection{Schemas in $\AC^1$}\label{appendix:AC}
\begin{proof}[Proof of Lemma~\ref{lemma:AC1}]
First observe that, by definition of $\ell$-ODE, 
    $$
    f(x, \tu y) = \sum^{\ell(x)-1}_{u=-1} B\big(\alpha(u), \tu y, h(\alpha(u), \tu y), f(\alpha(u), \tu y)\big)
    $$
    with the convention that $B(\alpha(-1), \tu y, h(\alpha(-1), \tu y), f(\alpha(-1), \tu y))=g(\tu y)=f(0, \tu y)$.
    \\
    We can consider the truncated sums between $a$ and $b-1$ bounded by $\ell(x)-1$.
    It is defined by
    \begin{align*}
        S(a,b) = \sum^{b-1}_{u=a} B\big(\alpha(u), \tu y, h(\alpha(u), \tu y), S(\alpha(u), \tu y)\big) 
        = \sum^{b-1}_{u=a} B\big(\alpha(u), \tu y, h(\alpha(u), \tu y), s(\alpha(u))\big)
    \end{align*}
    with $S(\alpha(a), \tu y) = f(\alpha(a), \tu y)$ and $s(a)=\fun{sg}(f(\alpha(a-1), \tu y))$.
    By definition of $f(x, \tu y)$, if for some $t\in \{0, \dots, \ell(x)-1\}$, $f(t, \tu x)>0$, then, for all $t'\in \{t+1, \dots, \ell(x)-1\}$, $f(t, \tu x)>0$, that is $B(t')=B_1(t')$.
    Thus, we can consider the following two similar sums:
    \begin{align*}
        S_0(a,b) &= \sum^{b-1}_{u=a} B\big(\alpha(u), \tu y, h(\alpha(u), \tu y), S_0(\alpha(u), \tu y)\big) \\
        S_1(a,b) &= \sum^{b-1}_{u=a} B\big(\alpha(u), \tu y, h(\alpha(u), \tu y), S_0(\alpha(u), \tu y)\big) = \sum^{b-1}_{u=a} B_1\big(\alpha(u)\big)
    \end{align*}
    with $S_0(\alpha(a), \tu y)=0$ and $S_1(\alpha(a), \tu y)=1$.
    Clearly, $S(a,b)$ is either $S_0(a,b)$ or $S_1(a, b)$. 
    (Notably, in the former case $S(0, b)=S_0(a,b)$.)
    Then, for any $x, \tu y$, the value of $f(x, \tu y)$ can be computed inductively as follows:
    \begin{align*}
        f(x, \tu y) &= S\big(-1, \alpha(\ell(x))\big) \\
        &= S\big(-1, \alpha(\ell(x)/2)\big) + S\big(\alpha(\ell(x)/2), \alpha(\ell(x))\big) \\
        &= f\big(\alpha(\ell(x)/2), \tu y\big) + S\big(\alpha(\ell(x)/2), \alpha(\ell(x))\big) \\
        &= \begin{cases}
            0 + S_0\big(\alpha(\ell(x)/2), \alpha(\ell(x))\big) \\
            S\big(-1, \ell(x)/2\big) + \sum^{b-1}_{u=a} B_1(\alpha(u))
        \end{cases}
    \end{align*}
    Hence,
    $$
        f(x, \tu y) = + \begin{cases}
            S(-1, \alpha(\ell(x)/2)) \\
            S_0\big(\alpha(\ell(x)/2), \alpha(\ell(x))\big) \times 1- s\big(\alpha(\ell(x)/2)-1\big) \\
            S_1\big(\alpha(\ell(x)/2), \alpha(\ell(x))\big) \times s\big(\alpha(\ell(x)/2)-1\big)
        \end{cases}
    $$
    Computing $S\big(-1, \alpha(\ell(x)/2\big), S_0\big(\alpha(\ell(x)/2), \alpha(\ell(x))\big)$ and $S_1\big(\alpha(\ell(x)/2), \alpha(\ell(x))\big)$ can be done independently in parallel.
    In particular, as seen, we know that the actual starting value of $S_0\big(\alpha(\ell(x)/2), \alpha(\ell(x))\big)$ (which is $f(u, \tu y)=0$), while for $S_1\big(\alpha(\ell(x)/2), \alpha(\ell(x))\big)=\sum^{\ell(x)-1}_{u=\ell(x)/2}B_1(\alpha(u))$ we do not need to know the actual value of $f$ through the computation, since we know that it is always strictly greater than 0.
    Once $S(-1, \alpha(\ell(x)/2))=f(\alpha(\ell(x)/2), \tu y)$ is known, its sign can be tested and the outermost addition can be performed. 
    Testing the sign of $f(t, \tu y)$ can be done in $O(1)$.
    So, the depth of the process obeys the following recursive equation:
    $$
    depth(\ell(x)) = depth(\ell(x)/2) + O(1).
    $$
    It is known that this given $depth(\ell(x))=O(\ell_2(x))$, hence is logarithmic in the length $\ell(x)$ of one of the input.
\end{proof}
    
\end{toappendix}

\longv{
\section{A New ODE-Based Characterization for $\NC^1$}\label{sec:NC}
\textcolor{red}{or subsection of 3? ...}

\subsection{Characterizing $\FACC$ via ODEs}

We introduce two original characterizations for $\FACC$, both based on ODE schemas obtained deriving along $\ell$.\footnote{Alternative characterizations can be found in Appendix~\ref{Appendix:ACC}.}
The first schema we consider is strict, as it does not contain any call to $f(x, \tu y)$ in $A=-2k(x, \tu y)$ and $B=k(x, \tu y)$.

\marginpar{Choose a nice name}
\begin{defn}[\textcolor{red}{Schema $\ODEACC$}]
    Given $g: \Nat^p\to \Nat$ and $k:\Nat^{p+1}\to \{0,1\}$, the function $f:\Nat^{p+1}\to \Nat$ is defined by $\ODEACC$ if it is the solution of the IVP defined by $f(0, \tu y)=g(\tu y)$ and
    $$
        \frac{\partial f(x, \tu y)}{\partial \ell} = -2k(x, \tu y) \times f(x, \tu y) + k(x, \tu y).
    $$
\end{defn}

\begin{lemma}\label{lemma:FACCone}
    If $g,k$ are in $\FACC$, then $f$ defined by $\ODEACC$ from them is in $\FACC$ as well.
\end{lemma}
\begin{proof}
    By definition of $\ell$-ODE, $f(x, \tu y)=\sum^{\ell(x)-1}_{u=-1}\Big(\prod^{\ell(x)-1}_{t=u+1}\Big(1-2k(\alpha(t), \tu y)\Big)\Big) \times k(\alpha(u), \tu y)$, with $k(\alpha(-1), \tu y)=g(\tu y)$, which has been proved to be computable in $\FACC$, see~\cite{AAD}.
\end{proof}

\begin{remark}\label{remark:mod2}
    Computation performed by \textsc{Mod 2} gates can be simulated by $\ODEACC$ by considering $k(x, \tu y)=\fun{bit}(x, \tu y)$, i.e.
    \begin{align*}
        \fun{mod2}(x, \tu y) &= 0 \\
        \frac{\partial \fun{mod2}(x, \tu y)}{\partial \ell} &= - \fun{2bit}(x, \tu y) \times \fun{mod2}(x, \tu y) + \fun{bit}(x, \tu y).
    \end{align*}
    \marginpar{Is the order OK?}
    Thus, given a sequence of $\ell(x)=n$ bits \textcolor{red}{$x=x_n\dots x_1x_0$}, we compute their sum modulo 2 by $\fun{mod2}(x,x)$.
\end{remark}

\marginpar{\textcolor{red}{I have avoided to add the example (which is in the dictionary). Let me know if you think it is useful.}}

We present an alternative schema, this time based on a schema in which $f(x, \tu y)$ occurs under the scope of $\fun{sg}$ in $B$, which takes only values $\{0,1\}$.
As we will see, this schema is a special case of the schema we will introduce to capture $\NC^1$.

\begin{defn}[Schema $\schODE$]
    Given $g:\Nat^p \to \Nat$ and $h:\Nat^{p+1}\to \Nat$, the function $f:\Nat^{p+1} \to \Nat$ is defined by $\schODE$ if it is the solution of the IVP defined by $f(0, \tu y) = g(\tu y)$ and
    $$
    \frac{\partial f(x, \tu y)}{\partial \ell} = -f(x, \tu y) + K(x, \tu y, h(x, \tu y), f(x, \tu y))
    $$
    where $K\in \{0,1\}$ and $f(x, \tu y)$ occurs in $K$ only under the scope of the $\fun{sg}$ function.
\end{defn}

\begin{lemma}\label{lemma:schODE}
    If $f(x, \tu y)$ is defined by $\schODE$ from functions in $\FACC$, then $f(x, \tu y)$ is in $\FACC$.
\end{lemma}

\begin{proof}[Proof Sketch]
    For simplicity, let us consider the simple case of $f(x, \tu y)$ occurring only under the scope of the $\fun{sg}$ function in expressions of the form $s=\fun{sg}(f(x, \tu y))$.
    Notably, in this case we $K(x, \tu y, h(x, \tu y), f(x, \tu y))= K_1(x) \times \fun{sg}(f(x, \tu y)) + K_0(x) \times \fun{cosg}(f(x, \tu y))$, so that we can rewrite $\frac{\partial f(x, \tu y)}{\partial \ell} = -k(x, \tu y) \times f(x, \tu y) + k(x, \tu y) \times k'(x, \tu y)$ for $k(x, \tu y), k'(x, \tu y) \in \{0,1\}$ and such that $f(x, \tu y)$ does not occur inside of them. 
    Then, it can be shown the corresponding  computation can be performed by a constant depth circuit with \textsc{Mod 2} gates. 
    Intuitively, we compute $k(\alpha(t),\tu y), k'(\alpha(t), \tu y)$, for any $t\in \{0,\dots, \ell(x)-1\}$, their mutual consistency (i.e.~whether $k(\alpha(t), \tu y) = k'(\alpha(t), \tu y)$) and the consistency with the subsequent value (i.e.~whether $k(\alpha(t, \tu y)=k(\alpha(t+1), \tu y)$).
    Then, we search for the last $t'\in \{0,\dots, \ell(x)-1\}$ (or $g(\tu y)$) such that $k(\alpha(t'), \tu y) \neq k(\alpha(t'+1), \tu y)$ and from that point on the value of $k$ doesn't change (i.e.~such that for all subsequent $t''\in \{t'+1, \dots, \ell(x)-1\}$, $k(\alpha(t''), \tu y)=k(\alpha(t''+1, \tu y)$).
    If $k(\alpha(t'), \tu y) =1$, then $f(x, \tu y)=k(\alpha(t'), \tu y)$.
    \textcolor{blue}{If $k(\alpha(t'), \tu y)=0$, we search the greater $u\in\{0, \dots, t'\}$ such that $k(\alpha(u), \tu y)=1$. In this case the value of $f(x, \tu y)$ is determined by the value of $k(\alpha(u), \tu y)$ and the number of  `changes of values', i.e.~the number of $k(u')\neq k'(u')$, for $u'\in \{u,\dots t\}$. This can clearly be computed by (unbounded fan-in) \textsc{Mod 2} gates (plus $\vee$ and $\wedge$ gates) in constant depth.} For a formal proof, see~\ref{appendix:ACC}.
\end{proof}

\subsection{Characterizing $\NC^1$ via ODEs}\label{sec:chODE}

We now move the the new characterization of $\NC^1$, by restricted linear length ODE schemas, such that $A=-1$.
As anticipated, this schema can be seen as a natural generalization of $\schODE$, such that $B$ can even take values other than $\{0,1\}$.

\begin{defn}[Schema $\chODE$]
    Given $g:\Nat^p \to \Nat$ and $h:\Nat^{p+1} \to \Nat$, the function $f:\Nat^{p+1} \to \Nat$ is defined by $\chODE$ if it is the solution of the IVP made of $f(0, \tu y)=g(\tu y)$ and 
    $$
        \frac{\partial f(x, \tu y)}{\partial \ell} = -f(x, \tu y) + B\big(x, \tu y, h(x, \tu y), f(x, \tu y)\big)
    $$
\end{defn}

\begin{lemma}\label{lemma:chODE}
    If $f$ is defined by $\chODE$ from functions in $\NC^1$, then $f$ is in $\NC^1$ as well.
\end{lemma}
\begin{proof}
    To simplify the exposition, w.l.o.g. let us suppose that:
    \begin{itemize}
        \itemsep0em
        \item \textcolor{blue}{$B$ includes only one signed term, say $s$, in which $f(x, \tu y)$ occurs}
        \item $\ell(x)$ is a power of two (to avoid rounding).
    \end{itemize}
    \marginpar{Check consistency}
    By definition of $\ell$-ODE, $f(x, \tu y)=B(\alpha(\ell(x)-1), \tu y, h(\alpha(\ell(x)-1), \tu y), s(\alpha(\ell(x)-1)))= B_i(\textcolor{blue}{\ell(x)-1})$ with $i=s(\ell(x)-1)$.
    \\
    We consider partial or truncated results between $a$ and $b-1$ bounded by $\ell(x)-1$, intuitively corresponding to subsequent compositions of $s$ in $B$: i.e. either $C(a,b)=C_0(a,b)$, i.e.~$f(a-1, \tu y)=B(a-1, \tu y, h(a-1,\tu y), f(a-1, \tu y))=0$, or $C(a,b)=C_1(b)$, i.e.~$f(a-1, \tu y)=B(a-1, \tu y, h(a-1, \tu y), f(a-1, \tu y))\ge 1$.
    Clearly, the value of $B(0, a-1)$, $B_0(a,b)$ and $B_1(a,b)$ can be computed independently from all previous values of $f(t,\tu y)$.
    Additionally, only only one between $B_0(a,b)$ and $B_1(a,b)$ corresponds to the actual value of $B(a,b)$.
    \\
    Then, for any $x,\tu y$, the value of $f(x, \tu y)$ can be computed inductively by selecting the correct partial $B(a,b)$ as follows:
    $f(x, \tu y) = B(-1, \alpha(\ell(x)))$ = $\fun{sg}\big(B(-1, \alpha(\ell(x)/2))\big) \times B_1\big(\alpha(\ell(x)/2), \alpha(\ell(x))\big) + \fun{cosg}\big(B(-1, \alpha(\ell(x)/2))\big) \times B_0\big(\alpha(\ell(x)/2), \alpha(\ell(x))\big)$.
    As mentioned $B(-1, \alpha(\ell(x)/2)), B_0(\alpha(\ell(x)/2), \alpha(\ell(x)))$ and $B_1(\alpha(\ell(x)/2), \alpha(\ell(x)))$ can be computed independently from each other and in parallel.
    Also computing $B_0(t), B_1(t)$ and the corresponding sign $\fun{sg}(E(t+1), B_0(t))$ and $\fun{sg}(E(t+1, B_1(t))$ can be done in parallel and independently from all other computations.
    \textcolor{blue}{The number of subsequent recursive calls needed to end up with the desired selection – i.e.~the one corresponding to $B_0(\alpha(\ell(x)/2), \alpha(\ell(x))), B_1(\alpha(\ell(x)/2), \alpha(\ell(x)))$ would require $\ell(x)/2$ (selection) steps, ensuring we are in $\NC^1$.}
\end{proof}

\noindent
\textcolor{red}{This proof can be moved to the Appendix and summarized in a few lines.}

\begin{remark}
    \textcolor{red}{As desired, Lemma~\ref{lemma:chODE} still holds when dealing with $s_0, \dots, s_r$ signed expressions occurring in $B$.
    To deal with this more general case, we need to add an initial consistency check, considering $\fun{sg}(B_{i_0,\dots, i_r}(t))$ and $B_{i_0'\dots i_r'}(t+1)$, in the very beginning.}
\end{remark}

This schema, together with $\pODE$ is enough to provide the first ODE-based characterization for $\NC^1$.

\begin{theorem}
    $\NC^1 = [\fun{0}, \fun{1}, \ell, \fun{sg}, +, -, \div 2, \pi^p_i; \circ, \pODE, \chODE].$
\end{theorem}
\begin{proof}
    $\subseteq$ All basic functions of this algebra are already in $\AC^0$ and the closure property has been proved to hold for $\circ, \pODE$ (Proposition~\ref{prop:aux}) and $\chODE$ (Lemma~\ref{lemma:chODE}).

    $\supseteq$ The proof is indirect and passes through Clote's $\mathcal{N}_0'$~\cite{Clote93}. 
    In particular, to show that any function defined by 4-BRN as follows: $f(0, \tu y)=g(\tu y)$ and $f(s_i(x), \tu y)=h_i(x, \tu y, f(x, \tu y))$ for $g,h\in \{0,\dots, 4\}$ can be seen as a solution of an instance of $\chODE$, we consider the IVP made of $\fun{4br}(x,y,\tu y)=g(\tu y)$ and
    \small
    \begin{align*}
        \frac{\partial \fun{4brn}(x, y, \tu y)}{\partial \ell} = -\fun{4brn}(x, y, \tu y) &+ h_{z(x)}(\ell(x), \tu y, 0) \times \fun{cosg}\big(\fun{4brn}(x, y, \tu y)\big) \\
        & + h_{z(x)}(\ell(x), \tu y, 1) \times \fun{sg}\big(\fun{4brn}(x, y, \tu y)\big) \\
        & + h_{z(x)}\big(\ell(x), \tu y, 2\big) \times \fun{sg}\big(\fun{4brn}(x, y, \tu y)\dot{-}1\big)\times \fun{cosg}\big(\fun{4brn}(x,y,\tu y)\dot{-}2\big) \\
        &+ h_{z(x)}\big(\ell(x), \tu y, 3\big) \times   \fun{sg}\big(\fun{4brn}(x,y,\tu y) \dot{-} 2\big) \times \fun{cosg}\big(\fun{4brn}(x, y, \tu y) \dot{-} 3\big) \\
        &+ h_{z(x)}\big(\ell(x), \tu y, 4\big) \times \fun{sg}\big(\fun{4brn}(x,y,\tu y) \dot{-} 3\big) \times \fun{cosg}\big(\fun{4brn}(x, y, \tu y) \dot{-} 4\big),
    \end{align*}
    \normalsize
    where \textcolor{blue}{$z(x)=\fun{BIT}(\ell(y)-\ell(x), y)$}.
    Then $\fun{4brn}(x,x,\tu y)=f(x,x,\tu y)$, where $f$ is defined by 4-BRN from $g$ and $h_i$.
    Since Clote's $\mathcal{N}_0'$ is nothing but $\mathcal{A}_0$ endowed with 4-BRN~\cite{Clote93} and $\mathcal{A}_0=\ACDL$ by~\cite{ADK24a}, this encoding of 4-BRN via $\chODE$, together with Proposition~\ref{prop:aux}, are enough to conclude the proof.
\end{proof}

} 

\section{Characterizing Classes Deriving along $\ell_2$}\label{sec:2ell}

Here, we consider alternative linear ODEs in which derivation is not along $\ell$ but along the ``slower'' function $\loglog$. 
Of course, in this context, same syntactic constraints on the linearity of the schema end up characterizing smaller classes (for example, the solution of the system shown in $\AC^1$ when deriving along $\ell$ is shown to be computable in $\TC^0$ if  derivation is along $\ell_2$). 
The main result of this section is that linear ODEs with derivative along $\loglog$ 
are in $\NC^1$ (rather than $\FP$ as in the case of $\ell$~\cite{BournezDurand19}).

First, observe that when deriving along $\ell_2$, computation through a strict linear ODE schema is (of course in $\TC^0$ but even) in $\AC^0$ and corresponds to \textsc{LogItAdd}, i.e.~iterated addition of $\ell_2(n)$ numbers.

\begin{lemma}\label{lemma:logitadd}
    Let $g:\Nat^{p} \to \Nat$ and $h:\Nat^{p+1}\to \Nat$ be computable in $\AC^0$.
    Then, $f:\Nat^{p+1} \to \Nat$ defined from $h$ and $g$ by the IVP with initial value $f(0,\tu y)= g(\tu y)$ and such that
    $$
        \frac{\partial f(x, \tu y)}{\partial \ell_2} = \sB\big(x, \tu y, h(x, \tu y)\big)
    $$
    where $\sB$ is a limited $\fun{sg}$-polynomial expression,
    is in $\AC^0$.
\end{lemma}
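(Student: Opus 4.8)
The plan is to solve the recursion explicitly, observe that the solution is an iterated sum of only \emph{logarithmically} many summands each of which is already in $\AC^0$, and then check that such short iterated sums stay inside $\AC^0$. Unfolding the schema exactly as done for $\ell$-ODE in Section~\ref{sec:preliminaries}: the equation reads $f(x+1,\tu y) = f(x,\tu y) + \big(\ell_2(x+1)-\ell_2(x)\big)\times \sB(x,\tu y,h(x,\tu y))$, and since $\ell_2$ increases (by exactly $1$) precisely when its argument passes a value of the form $\alpha_2(u)=2^{2^u-1}-1$, the same telescoping used for $\ell$ gives $f(x,\tu y)=\sum_{u=-1}^{\ell_2(x)-1}\sB\big(\alpha_2(u),\tu y,h(\alpha_2(u),\tu y)\big)$, with the convention $\sB(\alpha_2(-1),\cdot):=g(\tu y)$. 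The reason the bound is $\AC^0$ and not larger is that the schema is \emph{strict}: no call to $f$ occurs in $\sB$, so each summand depends only on $h$ (in $\AC^0$ by hypothesis), on the fixed limited $\fun{sg}$-polynomial expression $\sB$ (an $\AC^0$ function, since $+,-,\div 2,\fun{sg}$ are and $\AC^0$ is closed under composition), and on $\alpha_2(u)$, whose binary expansion is a block of ones whose length is controlled by $\ell$ and is thus produced from $u$ in $\AC^0$. Hence all the summands are computed in parallel in $\AC^0$.

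Next I would count: there are $\ell_2(x)+1=\ell(\ell(x))+1=O(\log n)$ summands, where $n$ is the input length, and each is a number of $\mathrm{poly}(n)$ bits (a limited $\fun{sg}$-polynomial expression has no multiplication, so its value is linearly bounded by its arguments). Thus $f$ is obtained, after the $\AC^0$ preprocessing above, as the iterated sum of $m=O(\log n)$ numbers of $N=\mathrm{poly}(n)$ bits, and it remains to show that iterated addition of polylogarithmically many numbers is in $\AC^0$. This is the step I expect to be the real content: the naive balanced binary tree of binary adders only achieves depth $O(\log m)=O(\log\log n)$, not constant, so a genuinely different argument is needed.

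For that step I would proceed in two stages. First, for each bit position $i<N$ compute the column sum $s_i=\sum_{j=1}^m a_{j,i}\in\{0,\dots,m\}$; each $s_i$ is a function of only $m=O(\log n)$ input bits, hence is computed by a brute-force DNF with $\le 2^m=\mathrm{poly}(n)$ terms, so all the $s_i$ are produced in $\AC^0$. Second, set $t=\lceil\log_2(m+1)\rceil=O(\log\log n)$, so each $s_i$ fits in $t$ bits, group the columns into blocks of $t$ consecutive positions, and let $w_\beta=\sum_{i\in\text{block }\beta}s_i\,2^{\,i\bmod t}$; then $w_\beta<m\cdot 2^t<2^{2t}$, and $w_\beta$ depends on only $O(t^2)=O((\log\log n)^2)$ input bits, so a brute-force DNF with $2^{O((\log\log n)^2)}=n^{o(1)}$ (a fortiori polynomially many) terms computes it in $\AC^0$. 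Writing $w_\beta=\ell_\beta+2^t u_\beta$ with $\ell_\beta,u_\beta<2^t$, the pieces $\ell_\beta$ sit in pairwise disjoint length-$t$ windows and so do the once-shifted pieces $2^{(\beta+1)t}u_\beta$, so the full sum equals $L+H$ for two $\mathrm{poly}(n)$-bit numbers $L,H$ whose bits are read off directly, and addition of two numbers is in $\AC^0$ by carry-lookahead. All the index arithmetic is $\mathbf{Dlogtime}$-computable, so the construction is $\mathbf{Dlogtime}$-uniform and $f\in\AC^0$. (If intermediate summands can be negative, the same argument applies after the usual splitting into positive and negative parts.) The main obstacle is exactly this $\AC^0$ implementation of short iterated addition: once the column sums have shrunk each per-block instance to subpolynomially many bits, the block decomposition finishes it off without any recursion.
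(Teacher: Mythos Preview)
Your argument is correct and follows the same route as the paper: unfold the $\ell_2$-ODE to the closed form $f(x,\tu y)=\sum_{u=-1}^{\ell_2(x)-1}\sB(\alpha_2(u),\tu y,h(\alpha_2(u),\tu y))$, observe that each summand is in $\AC^0$, and conclude because the sum has only $\ell_2(x)=O(\log n)$ terms. The paper dispatches the last step by citing that \textsc{LogItAdd} is in $\AC^0$ (see~\cite{Vollmer}), whereas you reprove this fact via column sums and the block decomposition into two numbers; your self-contained treatment is a nice bonus (and your remarks on uniformity and possible negative summands are also well taken), but the overall strategy is the same.
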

\noindent
When this schema is generalized to the non-strict setting by allowing $f(x, \tu y)$ to occur in $B$ under the scope of the sign function, computation is proved to be in $\TC^0$ by a much evolved argument.

\begin{lemma}\label{lemma:l2TC}
        Let $g: \Nat^p \to \Nat$ and $h:\Nat^{p+1}\to \Nat$ be definable in $\TC^0$.
        Then, computation through the schema defined by the IVP with initial value 
            $f(0, \tu y) = g(\tu y)$
            and such that
            $$
            \frac{\partial f(x, \tu y)}{\partial \ell_2} = B\big(x, \tu y, h(x, \tu y), f(x, \tu y)\big)
            $$
            where $B$ is a $\fun{sg}$-polynomial expression 
        in which $f(x, \tu y)$ only appears under the scope of the sign function,
        is in $\TC^0$.
    \end{lemma}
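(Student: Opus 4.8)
The idea is that deriving along $\ell_2$ makes the underlying recursion so short that the circular dependence of $B$ on $f$ can be resolved by exhaustive guessing. First I would record the semantics: adapting the computation of solutions of $\ell$-ODEs recalled in Section~\ref{sec:preliminaries} to $\lambda = \ell_2$ (with $\alpha_2(u)=2^{2^u-1}-1$ in the role of $\alpha(u)$, and $\alpha_2(0)=0$), the solution satisfies
$$
f(x, \tu y) = \sum_{u=-1}^{\ell_2(x)-1} B\big(\alpha_2(u), \tu y, h(\alpha_2(u), \tu y), f(\alpha_2(u), \tu y)\big),
$$
with the convention $B(\alpha_2(-1), \cdot) = g(\tu y)$. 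Writing $n$ for the input length, the crucial quantitative facts are that $\ell_2(x)=\ell(\ell(x))=O(\log n)$, so this is a sum of only $O(\log n)$ summands, and that running $O(\log n)$ rounds of a $\fun{sg}$-polynomial update of fixed degree, starting from $g(\tu y)$, multiplies the \emph{length} of the running value by a constant per round, hence keeps every $f(\alpha_2(u),\tu y)$ of polynomial length; so, once the signed subterms of $B$ are fixed, each intermediate value is a polynomially bounded number obtainable by iterated addition and multiplication, both available in $\TC^0$.

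Next I would isolate the obstruction and dispatch it by brute force. The expression $B$ contains a fixed list $s_1,\dots,s_r$ of subterms $s_j=\fun{sg}(E_j(x,\tu y,h,f))$ in which $f$ occurs; their values along the computation form a sequence $(\sigma(0),\dots,\sigma(\ell_2(x)-1))$ of vectors in $\{0,1\}^r$. Since $r$ is a constant and $\ell_2(x)=O(\log n)$, there are only $2^{r\,\ell_2(x)}=n^{O(1)}$ such sequences, so a $\TC^0$ circuit can run one parallel branch per candidate sequence $\Sigma$. A branch computes $B^\Sigma(u)$, namely $B$ evaluated at $\alpha_2(u)$ with $s_j$ replaced by $\sigma_j(u)$, and the partial sums $F^\Sigma(0)=g(\tu y)$, $F^\Sigma(u{+}1)=F^\Sigma(u)+B^\Sigma(u)$ — in $\TC^0$ by the previous paragraph — then \emph{verifies} $\Sigma$ by checking $\sigma_j(u)=\fun{sg}\big(E_j(\alpha_2(u),\tu y,h(\alpha_2(u),\tu y),F^\Sigma(u))\big)$ for all $u,j$, again a $\TC^0$ test. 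By a step-by-step induction a candidate passes all these checks iff it is the sequence induced by the genuine recursion, so exactly one branch succeeds, and for it $F^\Sigma(\ell_2(x))=f(x,\tu y)$; the circuit outputs the bitwise disjunction over all candidates of ``$\Sigma$ consistent'' conjoined with $F^\Sigma(\ell_2(x))$. Uniformity is routine: candidates are indexed by strings of length $O(\log n)$, the points $\alpha_2(u)=2^{2^u-1}-1$ and the values $h(\alpha_2(u),\tu y)$ are $\mathbf{Dlogtime}$-addressable, and every sub-circuit is a uniform instance of a $\TC^0$ computation.

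The step I expect to be the main obstacle is pinning down the two size bounds of the first paragraph at once: one must argue that $2^{r\,\ell_2(x)}$ is genuinely polynomial — which is exactly where the ``doubly logarithmic'' rate of $\ell_2$ is used, and which would fail already for $\lambda=\ell$ — while at the same time checking that $O(\log n)$ multiplicative rounds leave $f$ of polynomial length, even though the value at each round may be squared. Once both are in place the rest is bookkeeping. The enumeration handles the fully general, possibly non-monotone $E_j$'s; in the special case where every $s_j$ has the shape $\fun{sg}(f(x,\tu y)-c_j)$ and $B\ge 0$, so $f$ is nondecreasing, the candidate space collapses to the $O((\log n)^r)$ tuples of ``threshold steps'' at which $f$ first passes the $c_j$'s, giving an even smaller circuit.
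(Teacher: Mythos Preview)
Your proposal is correct and follows essentially the same route as the paper's proof: enumerate all $2^{r\,\ell_2(x)}=n^{O(1)}$ candidate sign sequences, compute the corresponding partial sums in $\TC^0$ via iterated addition, check each candidate for consistency against the actual signs of its partial sums, and output the value from the unique consistent branch. The paper presents the argument for a single signed term and states the final selection as $f(0,\tu y)+\bigvee_{\epsilon}\big(\bigwedge_t (s(S_{\epsilon,t})=\epsilon_t)\big)\times S_{\epsilon,\ell_2(x)}$, but the structure is identical to yours.

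One small remark: your size argument is slightly heavier than needed here. Since $f$ appears in $B$ only under $\fun{sg}$, once a sign sequence $\Sigma$ is fixed the terms $B^\Sigma(u)$ no longer depend on $f$ at all, so each is a polynomial-length value in $x,\tu y,h$ directly; the ``length multiplies by a constant per round'' reasoning is what one needs for the full linear $\ell_2$-ODE of Lemma~\ref{lemma:lTwoODE}, not for this lemma.
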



\begin{toappendix}
%
%
\begin{lemma}\label{lemma:logitadd}
    Let $\sB$ be a limited $\fun{sg}$ polynomial expressions and $g:\Nat^{p} \to \Nat, h:\Nat^{p+1}\to \Nat$ be functions computable in $\AC^0$.
    Then, the function $f:\Nat^{p+1} \to \Nat$ defined from $\sB, h$ and $g$ by the IVP below:
    \begin{align*}
        f(0, \tu y) &= g(\tu y) \\
        \frac{\partial f(x, \tu y)}{\partial \ell_2} &= \sB\big(x, \tu y, h(x, \tu y)\big)
    \end{align*}
    is in $\AC^0$.
\end{lemma}
\begin{proof}
    By definition of $\ell_2$-ODE, the solution of the given system is of the form:
    $$
    f(x, \tu y) = \sum^{\ell_2(x)-1}_{u=-1} \sB\big(\alpha_2(u), \tu y, h(\alpha_2(u), \tu y)\big)
    $$
    where $\alpha_2(u)=2^{2^u-1}-1$ and with the convention that $\sB(\alpha_2(-1), \tu y, h(\alpha_2(-1), \tu y)=f(0, \tu y)= g(\tu y)$.
    Since $g, h$ (by hypothesis) and +, -, $\fun{sg}$ ($\sB$ is limited $\fun{sg}$-polynomial expression) are in $\AC^0$, each $\sB\big(\alpha_2(u), \tu y, h(\alpha_2(u))\big)$ is.
    There are at most $\ell_2(x)$ terms $\sB\big(\alpha_2(u), \tu y, h(\alpha_2(u))\big)$, so their sum, i.e.~$f(x, \tu y)$, can be computed in constant time (\textsc{LogItAdd} is $\AC^0$, see~\cite{Vollmer}). 
\end{proof}

    \begin{proof}[Proof of Lemma~\ref{lemma:l2TC}]
        To ease the presentation, we make the non-essential assumption that there is only one signed term $s(f(x, \tu y))=\fun{sg}(P(f(x, \tu y)))$ occurring in $B$, where $P$ is a polynomial expression.
        \\
        Clearly, each $s(f(x, \tu y))$ can take two values only, i.e.~0 or 1.
        Since derivation is along $\ell_2(x)$, the value of $f(x)$ depends on the \emph{distinct} values of $\ell_2$ between 0 and $x$, hence of each $\alpha_2(u)=2^{2^u-1}-1$, the greatest integer $t$ such that $\ell_2(t)=u$, for $u\in \{0,\dots, \ell_2(x)-1\}$.
        More precisely, 
        $$
            f(x, \tu y) = f(0, \tu y) + \sum^{\ell_2(x)-1}_{u=0} B(x, \tu y, h(x, \tu y), f(x, \tu y)).
        $$
        Notice that, for each $t\in \{0,\dots, \ell_2(x)\}$ it holds that:
        $$
             f(\alpha_2(t), \tu y) = f(0, \tu y) + \sum^{t-1}_{u=0} B\big(\alpha_2(u), \tu y, h(\alpha_2(u), \tu y), f(\alpha_2(u), \tu y)\big).
        $$
        Hence, each signed term is obtained through partial sums.
        To compute $f(x, \tu y)$, we will make assumptions on the values of the intermediate signed terms and check afterwards that the assumptions are consistent.
        For each value $u$, consider
        $$
            B_0(u) \text{ and } B_1(u)
        $$
        with $B_i(u)$, for $i\in \{0,1\}$ is $B(\alpha_2(u), \tu y, h(\alpha_2(u), \tu y), f(\alpha_2(u), \tu y))$ where $s(f(\alpha_2(u), \tu y))$ has been replaced by the corresponding $i\in \{0, 1\}$.
        \\
        Clearly, only one choice of sequence for the value of the signed term is consistent and the following holds:
        $$
        f(x, \tu y) = f(0, \tu y) + \sum_{\epsilon \in \{0,1\}^{\ell_2(x)}} \sum^{\ell_2(x)-1}_{u=0} B_{\epsilon_u}(u),
        $$
        where $\epsilon = (\epsilon_0, \dots, \epsilon_{\ell_2(x)-1})=(\epsilon_u)$.
        Let's now consider the partial sums:
        $$
        S_{\epsilon, t} = \sum^{t-1}_{u=0} B_{\epsilon_i}(u)
        $$
        for all $\epsilon \in \{0,1\}^{\ell_2(x)}$ and $t\in \{0,\dots, \ell_2(x)\}$.
        The correct choice of $\epsilon$ is guided by $s(f(0, \tu y))$, which must be equal to $\epsilon_0$ and by the fact that from one step to the other, the computed sign of the partial sum, i.e.~$s(S_{\epsilon,t})$ is equal to the next choice $\epsilon_t$.
        Then, to summarize, function $f(x, \tu y)$ satisfies the following equation (mixing Boolean and arithmetic operators: since one choice only is consistent, the external operator can be a disjunction):
        $$
            f(x, \tu y) = f(0, \tu y) + \bigvee_{\epsilon \in \{0,1\}^{\ell_2(x)}}\bigg(\bigwedge^{\ell_2(x)-1}_{t=0} (s(S_{\epsilon,t})=\epsilon_t)\bigg) \times S_{\epsilon, \ell_2(x)}.
        $$
        We now describe the construction and its complexity:
        \begin{itemize}
            \itemsep0em
            \item In the first layer, for any possible $t\in \{0, \dots, \ell_2(x)\}$ compute \emph{in parallel}, $B_i(t)$, that is:
            \begin{itemize}
                \itemsep0em
                \item $B_0(0)$ and $B_1(0)$
                \item $B_0(1)$ and $B_1(1)$
                \item $\dots$
                \item $B_0(\ell_1(x)-1)$ and $B_1(\ell_1(x)-1)$
            \end{itemize}
            In this case, since $B$ is a polynomial expression (i.e.~computable in $\TC^0$), each $B_i$ is obtained from $B$ by substitution of one of its input and the number of values to compute is $O(\ell_2(x))$, the implementation is of polynomial size and constant depth on a  circuit with \textsc{Maj} gates.
            \item For all $\epsilon \in \{0,1\}^{\ell_2(x)}$ and for all $t\in \{1,\dots, \ell_2(x)\}$, compute in parallel the sums:
            $$
                S_{\epsilon, t} = \sum^{t-1}_{u=0} B_{\epsilon_u}(u).
            $$
            Each $S_{\epsilon,t}$ is an iterated sum of at most $O(\ell_2(x))$ terms each computable by $\TC^0$ circuit.
            There are $O(\ell_2(x) \times 2^{\ell_2(x)})=O(\ell_2(x)\times \ell(x))$ such sums, hence a linear number in the input size.
            The whole step is in $\TC^0$
            \item For all $\epsilon \in \{0,1\}^{\ell_2(x)}$ and $t\in \{1,\dots, \ell_2(x)\}$ compute the sign $s(S_{\epsilon, t})$. 
            Since $s(S_{\epsilon,t})$ is a signed expression, i.e.~is a sign function applied to a polynomial expression and since addition and multiplication are in $\TC^0$, this step is also in $\TC^0$. 
        \end{itemize}
        The sequence of tests to evaluate $\bigwedge^{\ell_2(x)-1}_{t=0} (s(S_{\epsilon, t})=\epsilon_t)$ and then
        $$
            f(0, \tu y) + \bigvee_{\epsilon \in \{0,1\}^{\ell_2(x)}} \bigg(\bigwedge^{\ell_2(x)-1}_{t=0} (s(S_{\epsilon, t})=\epsilon_t)\bigg) \times S_{\epsilon, \ell_2(x)}
        $$
        can be wired in a constant depth and polynomial size circuit not even using majority gates.
        \\
        The whole computation is in $\TC^0$.
    \end{proof}
\end{toappendix}

As mentioned, the IVP defined deriving along $\ell_2$ and allowing ``full linearity'' is particularly interesting.
%

\begin{defn}[Schema $\ell_2$-ODE]
    Given $g:\Nat^p \to \Nat$ and $h:\Nat^{p+1} \to \Nat$, the function $f:\Nat^{p+1} \to \Nat$ is defined by $\ell_2$-ODE if it is the solution of the IVP with initial value $f(0, \tu y)=g(\tu y)$ and such that:
    $$
    \frac{\partial f(x, \tu y)}{\partial \ell_2} = A\big(x, \tu y, h(x, \tu y), f(x, \tu y)\big) \times f(x, \tu y) + B\big(x, \tu y, h(x, \tu y), f(x, \tu y)\big).
    $$
\end{defn}
\noindent
This schema  captures the computation of expressions in the following form:
\begin{align*}
    f(x, \tu y) &= \sum^{\ell_2(x)-1}_{u=-1} \Bigg(\prod^{\ell_2(x)-1}_{z=u+1} \bigg(1+A\Big(\alpha_2(z), \tu y, h(\alpha_2(z), \tu y), f(\alpha_2(z), \tu y)\Big)\bigg) \Bigg) \\
    &\quad \times B\Big(\alpha_2(u), \tu y, h(\alpha_2(u), \tu y), f(\alpha_2(u), \tu y)\Big),
\end{align*}
where $\alpha_2(v)=2^{2^v-1}-1$, i.e.~$\ell_2(\alpha_2(v))=v$ and $\alpha_2(v)$ is the greatest integer $t$ such that $\ell_2(t)=v$.

\begin{lemma}\label{lemma:lTwoODE}
    Let $f$ be defined by $\ell_2$-ODE from $g$ and $h$ in $\NC^1$, 
    then $f$ is in $\NC^1$. 
\end{lemma}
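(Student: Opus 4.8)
The plan is to unroll the $\ell_2$-ODE into the explicit product--sum solution displayed just above the statement and then exploit two facts: the number of \emph{active} iterations is only $\ell_2(x)$, which is logarithmic in the bit-length $N$ of the input $(x,\tu y)$; and, by the essential constancy of $A$ and $B$ in the terms $f(x,\tu y)$, the defined function occurs in $A,B$ only under the scope of $\fun{sg}$. Writing $v_0=g(\tu y)$ and $v_{k+1}=\bigl(1+A(\alpha_2(k),\tu y,h(\alpha_2(k),\tu y),v_k)\bigr)v_k+B(\alpha_2(k),\tu y,h(\alpha_2(k),\tu y),v_k)$ for $0\le k<m:=\ell_2(x)$, we have $f(x,\tu y)=v_m$. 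First I would record the routine growth bound: $g,h\in\NC^1$ have polynomially bounded outputs and $A,B$ are fixed constant-size $\fun{sg}$-polynomial expressions, so each step adds only $N^{O(1)}$ bits, and over the $m=O(\log N)$ steps every $v_k$ (in particular $v_m$) has $N^{O(1)}$ bits.

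Next I would precompute, in parallel over all $k<m$, the integers $\alpha_2(k)=2^{2^k-1}-1$ (of polynomial bit-length, since $2^k\le\ell(x)\le N$) and the values $h(\alpha_2(k),\tu y)$; this is a logarithmic number of parallel $\NC^1$ computations, hence $\NC^1$. Let $\fun{sg}(P^{(1)}),\dots,\fun{sg}(P^{(r)})$ be the constantly many maximal subterms of $A$ and $B$ in which $f(x,\tu y)$ occurs, and call a tuple $\epsilon=(\epsilon^{(i)}_k)_{0\le k<m,\,1\le i\le r}\in\{0,1\}^{rm}$ a \emph{sign trace}. Since $r$ is a constant and $m=O(\log N)$, there are only $2^{rm}=N^{O(1)}$ sign traces, and I would process all of them in parallel. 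Fixing $\epsilon$ \emph{freezes} $A,B$ at step $k$ into ordinary $\fun{sg}$-polynomial expressions $C^\epsilon_k:=1+A(\alpha_2(k),\tu y,h(\alpha_2(k),\tu y),\epsilon_k)$ and $D^\epsilon_k:=B(\alpha_2(k),\tu y,h(\alpha_2(k),\tu y),\epsilon_k)$ containing no occurrence of $f$, hence evaluable in $\TC^0$ from the precomputed poly-bit data. The recurrence becomes linear with known coefficients, with closed form $v^\epsilon_k=\bigl(\prod_{i<k}C^\epsilon_i\bigr)v_0+\sum_{j<k}\bigl(\prod_{j<i<k}C^\epsilon_i\bigr)D^\epsilon_j$, an iterated product and iterated sum of at most $m=O(\log N)$ poly-bit numbers; since iterated addition and iterated multiplication are in $\TC^0$, I can compute all $v^\epsilon_1,\dots,v^\epsilon_m$, for every $\epsilon$, in parallel.

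Then I would run the consistency check: for each $\epsilon$ and each $k$, test whether $\fun{sg}\bigl(P^{(i)}(\alpha_2(k),\tu y,h(\alpha_2(k),\tu y),v^\epsilon_k)\bigr)=\epsilon^{(i)}_k$ for all $i$ (with $v_0=g(\tu y)$ used at step $0$, where $\alpha_2(0)=0$); each test evaluates a constant-size $\fun{sg}$-polynomial expression on poly-bit inputs, so it is in $\TC^0$, and the conjunction over all $k,i$ gives a bit $\mathrm{ok}(\epsilon)$. A short induction on the first coordinate where $\epsilon$ deviates from the true sign trace $\epsilon^\star$ of the actual solution shows $\epsilon^\star$ passes all tests with $v^{\epsilon^\star}_m=f(x,\tu y)$, while every other $\epsilon$ fails some test; so exactly one trace is consistent and $f(x,\tu y)=\bigvee_{\epsilon}\mathrm{ok}(\epsilon)\cdot v^\epsilon_m$ is a selection among polynomially many candidates (mixing Boolean and arithmetic operators, as in the proof of Lemma~\ref{lemma:l2TC}). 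Composing the $\NC^1$ precomputation of the $h$-values with the $\TC^0$ remainder keeps everything in $\NC^1$.

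The step I expect to be the main obstacle is organizational rather than deep: one must (a) verify that bit-lengths remain polynomial despite iterating the multiplicative term $A\cdot f$ --- this is exactly where deriving along $\ell_2$ instead of $\ell$ is essential, since along $\ell$ there would be $\Theta(N)$ iterations and then $2^{rm}$ would be exponential, killing the guess-and-verify strategy --- and (b) make sure the frozen expressions $C^\epsilon_k,D^\epsilon_k$ really contain no residual occurrence of $f$, which is precisely the point at which the ``$f$ only under $\fun{sg}$'' shape of the linear schema is used. This mirrors, one exponential down, the guess-and-verify and divide-and-conquer arguments of Lemmas~\ref{lemma:l2TC} and~\ref{lemma:chODE}.
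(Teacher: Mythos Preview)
Your proof is correct and follows essentially the same guess-and-verify strategy as the paper: enumerate all $2^{O(\ell_2(x))}=N^{O(1)}$ sign traces, compute the resulting closed-form values via iterated sums and products, and select the unique consistent one. The paper presents this by explicitly laying out a log-depth circuit layer by layer, whereas you invoke the known $\TC^0$ bounds for iterated addition and multiplication and the inclusion $\TC^0\subseteq\NC^1$; your explicit treatment of the bit-length bound and the uniqueness-of-consistent-trace induction are useful details that the paper leaves implicit.
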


\begin{toappendix}
\bigskip 

To prove Lemma~\ref{lemma:lTwoODE}, we need to introduce some preliminary remarks and notational conventions.
The following remark will be largely exploited in the proof of Lemma~\ref{lemma:lTwoODE}.

\begin{remark}
    Since in the given system $A$ and $B$ are essentially constant in $f(\alpha_2(z), \tu y)$, each sub-term of $A$ and $B$ involving $f(\alpha_2(z), \tu y)$ is a $\fun{sg}$-polynomial expression of the form $\fun{sg}\big(P(\alpha_2(z), \tu y)\big)$.
    So the dependency is on Boolean value that themselves depend on $f(\alpha_2(z), \tu y)$.
\end{remark}
Again, to ease the presentation we make the following (non essential) assumptions:
\begin{itemize}
    \itemsep0em
    \item The system is of dimension 1, i.e.~over scalars
    \item In $A$ (resp. $B$) there is only one sub-expression involving $f(x, \tu y)$ denoted by
    $$
    s_1(f(x, \tu y)) = \fun{sg}\big(P_1(x, \tu y, f(x, \tu y))\big)
    $$
    resp., $s_2(f(x, \tu y))=\fun{sg}\big(P_2(x,\tu y, f(x, \tu y))\big)$.
\end{itemize}

\begin{notation}
    Given $i<\ell_2(x)$, let $t_i=\alpha_2(i)$, i.e.~$t_0=0$, $t_1=2^1-1=1$, $t_2=2^3-1 =7$ and, more generally, $t_i=2^{2^i-1}-1$.
    For $i\in \{0, \dots, \ell_2(x)-1\}$, let 
    \begin{align*}
        \tu a(i) &: s_1(f(t_i, \tu y)) \mapsto \{0,1\} \\
        \tu b(i) &: s_2(f(t_i, \tu y)) \mapsto \{0,1\}.
    \end{align*}
    Let
    $A_0(x, \tu y, h(x, \tu y))$ or even $A_0(x)$ (resp. $A_1(x, \tu y, h(x, \tu y))$ or $A_1(x)$) denote $A(x, \tu y, h(x, \tu y), f(x, \tu y))$ such that all occurrences of $s_1(f(x,\tu y))$ in it has been substituted by the value 0 (resp. 1).
    \\
    Let us define $f_{\tu a, \tu b}$ as follows:
    $$
        f_{\tu a, \tu b}(t_i) = \sum^{i-1}_{u=-1} \Bigg(\prod^{i-1}_{z=0} \bigg(1+A_{\tu a(z)}\Big(t_z, \tu y, h(t_z, \tu y)\Big)\bigg) \Bigg) \times B_{\tu b(u)} \big(t_u, \tu y, h(t_u, \tu y)\big)
    $$
    with the convention that $B_{\tu b(-1)} (\cdot, \cdot, \tu y) = f(0, \tu y)$ and $\prod^{x-1}_x \kappa(x)=1$.
    Intuitively, it holds that
    $$
    f_{\tu a, \tu b}(t_i) = f(t_i)
    $$
    for the choice of $\tu a, \tu b$ that passes the following consistency test: check that, for every $j<i$, 
    $s_1(f_{\tu a, \tu b}(t_j))=\tu a(j)$ and $s_2(f_{\tu a, \tu b}(t_j))=\tu b(j)$ for every $j<i$.
\end{notation}


We can now examine how to compute in $\NC^1$ a function $f(x, \tu y)$ defined by $\ell_2$-ODE from functions $g$ and $h$ in $\NC^1$.

\begin{proof}[Proof of Lemma~\ref{lemma:lTwoODE}]
    By definition of $\ell_2$-ODE,
    \begin{align*}
        f(x, \tu y) &= \sum^{\ell_2(x)-1}_{u=-1} \Bigg(\prod^{\ell_2(x)-1}_{z=u+1} \bigg(1+A\big(\alpha_2(z), \tu y, h(\alpha_2(z), \tu y), f(\alpha_2(z,\tu y),\tu y)\big)\bigg)\Bigg) \\
        & \quad \quad \times B\big(\alpha_2(u), \tu y, h(\alpha_2(u), \tu y), f(\alpha_2(u), \tu y)\big)
    \end{align*}
    where $\alpha_2(t)=2^{2^t-1}-1$.

    We show that this function can be computed in $\NC^1$ by constructing the corresponding log-depth circuit with bounded fan-in gates:
    \begin{itemize}
        \itemsep0em
        \item As a first layer, generate in parallel all possible $(\tu a,\tu b)\in \{0,1\}^{2\ell_2(x)}$ i.e (reordered) the vector.
        $$
            (\tu a(0), \tu b(0) \quad \tu a(1), \tu b(1) \quad \dots \quad \tu a(\ell_2(x)-1)), \tu b(\ell_2(x)-1)).
        $$
        There are $O(\ell^2(x))$ such choices.
        \item For each such $\tu a, \tu b$ computes also in parallel 
        $$
            f_{\tu a, \tu b}(t_1) \quad f_{\tu a, \tu b}(t_2) \quad \dots \quad f_{\tu a, \tu b}(t_i) \quad \dots \quad f_{\tu a, \tu b}(t_m).
        $$
        Each  $f_{\tu a, \tu b}(t_1) $ is an iterated sum of iterated products and can be computed in $\ell_2(x)$ depth. There are $O(\ell_2(x))$ such values to compute. 
        \item For $\tu a, \tu b$, and any $i\in \{1,\dots, \ell_2(x)-1\}$, compute
        $$
        s_1(f_{\tu a, \tu b}(t_i)\big) \quad s_2\big(f_{\tu a, \tu b}(t_i)\big).
        $$
        Each of these independent computations can be done with depth at most $\ell_2(x)$.
        \item Next, for any $i\in \{1,\dots, m-1\}$ local consistency is checked, i.e.~it is checked in parallel whether
        $$
        s_1(f_{\tu a, \tu b}(t_i)) = \tu a(i) \quad s_2(f_{\tu a, \tu b}(t_i)) = \tu b(i).
        $$
        This is done in constant depth.
        \item Now the circuit also checks global consistency,
        that is, for the given fixed choice of $(\tu a, \tu b)$, it is checked whether every local choice is consistent.
        %
        It is done in the form of a binary tree. Clearly, there is only one pair $\tu a, \tu b$ only one of these checks would be different from 0.
        \\
        This is done in log$_3(x)$.
        \item At this point the computation concludes returning the value of $f_{\tu a, \tu b}(x)$ corresponding to the (only) choice of $\tu a, \tu b$ leading to a positive global check. Roughly speaking this is done under the form of a binary tree which starts  comparing the result of the global check for two (consecutive) pairs $\tu a', \tu b'$ and $\tu a'',\tu b''$ and push further the correct value ($f_{\tu a', \tu b'}(x)$ or $f_{\tu a'', \tu b''}(x)$) if one of them has a positive global check or  a sequence of $0$ of appropriate length if none of them is. At the root of the tree, the correct value of $f(x)$ is outputted. 
\end{itemize}
\longv{   

        \textcolor{red}{The rest of the proof is not useful in a short version}
        
        \\
        In more detail, this is implemented as follows.
        After the previous steps, we would end up with the following sequence
        $$
        T_0(x) f_0(x) \quad T_1(x) f_1(x) \quad \dots \quad T_{\alpha}(x) f_k(x) \quad \dots \quad T_n(x)f_n(x)
        $$
        where each $f_{\alpha}(x)$ is  $f_{\tu a, \tu b(t_m)}$, with $\alpha=(\tu a, \tu b)$.
        W.l.o.g., we can assume that all $f_k(x)$ are of the same binary length $r\in \Nat$.
        Now, again in binary, compare couple-by-couple each $T_j(x)f_j(x)$ and $T_{j+1}(x) f_{j+1}(x)$ following the procedure below which, for simplicity, is defined by taking $T_0(x)f_0(x)$ and $T_1(x)f_1(x)$ into account:
        \begin{itemize}
        \itemsep0em
            \item Compare the value of $T_0(x)$ with each of the $r$ bits of $f_0(x)$ and return 1 only when both $T_0(x)$ and the given digit are 1; similarly, compare the value of $T_1(x)$ with each of the $r$ bits of $f_1(x)$ and return 1 only when both are 1.
            Let us call the result of each of these comparisons (resp.) $C_{0,j}(x)$ and $C_{1,j}(x)$, for $j\in \{0, \dots, r\}$.
            (Clearly, since at most one among $T_0(x)$ and $T_1(x)$ is 1, at least one of the two comparisons would result in a sequence of 0's.)
            \item Compare the value of $T_0(x)$ and $T_1(x)$ and return 1 if one of them is 1, call the result of such comparison $T_{0,1}(x)$; for any $j\in \{0,\dots, r\}$ compare $C_{0,j}(x)$ and $C_{1,j}(x)$ returning 1 when one of them is 1; finally, return the result of hte first comparison, i.e.~$T_{0,1}(x)$, followed bit-by-bit by the result of the other $r$'s comparisons $C_{0,i}(x)$.
            \item Repeat this procedure until a unique value is obtained; the desired output is obtained by removing the initial bit (in position $r+1$).        \end{itemize}
  
    All this would require at most $\ell^2(x)$ leaves, so $(\ell(\ell^2(x))$, about $2\ell_2(x)$ depth.
 } 
\end{proof}

\end{toappendix}

\section{Conclusion}\label{sec:conclusion}

The objective of this paper was to start a new and uniform approach for studying circuit complexity through the lens of ODEs.
This approach allows to consider bounds on size or depth in circuit computation in terms of constraints on the linearity of the equations defining recursion schemas in two main contexts of derivatives: derivation along $\ell$ and along $\ell_2$.
Additionally, we have investigated completeness both improving and placing in a coherent setting the results obtained in~\cite{ADK24a} for $\AC^0$ and $\TC^0$, and providing what, to the best of our knowledge, are the first ODE-based characterizations for $\FACC$ and $\NC^1$.
%
%
%
In addition of being a simple and natural way to design algorithms,  ODE tools have also enabled us to obtain relatively simple and self-contained membership proofs and characterizations compared to the often convoluted constructions or external theories referenced in previous literature.

Many directions for future investigation remain open.
A natural one would be the study of characterizations for the entire $\AC$ and $\NC$ hierarchies. For this, one has to identify which ODE features or characteristic would help to jump from one depth level, say $(\log n)^i$, to the next. 
Furthermore, a possible next step of our research involves defining (strict) $\ell$-ODE schemas corresponding to complexity classes with counters, lying between $\mathbf{FACC[2]}$ and $\TC^0$ (i.e.~which correspond to computations executed by unbounded fan-in circuits featuring \textsc{Mod $p$} gates), ultimately allowing us to better understand their interrelationships and the concept of counting within the framework of small circuits.
Another relevant and intriguing direction is the exploration of the expressive power of schemas with derivative along $\ell_2$ in particular w.r.t. the problem of capturing complexity classes. Studying the relationship between $\ell$ and $\ell_2$ schemas might also be informative. Classical methods such as changes of variables sometimes help to go from one ODE schema to another but, on the other side, it seems that changing the function along which the derivative is taken gives a different point of view in terms of what can be expressed.
%
%
However, this study has just been initiated here, and a systematic investigation is left for future work.
%
Other challenging directions for subsequent research include generalizing our approach to the continuous setting by analyzing small circuit classes over the reals, following the path outlined in~\cite{BlancBournez23} for $\FP$ and $\mathbf{FPSPACE}$.
Developing proof-theoretical counterparts to our ODE-style algebras, in the form of natural rule systems inspired by the ODE design to syntactically characterize the corresponding classes, represents another promising avenue for exploration. 

\newpage
\bibliography{references}

\begin{thebibliography}{10}

\bibitem{Allen}
B.~Allen.
\newblock Arithmetizing uniform {NC}.
\newblock {\em Ann. Pure Appl. Logic}, 53:1--50, 1991.

\bibitem{ADK24a}
M.~Antonelli, A.~Durand, and J.~Kontinen.
\newblock A new characterization of {FAC}$^0$ via discrete ordinary
  differential equations.
\newblock In {\em Proc. MFCS}, 2024.

\bibitem{Arai}
T.~Arai.
\newblock A bounded arithmetic {AID} for \text{F}rege systems.
\newblock {\em Ann. Pure Appl. Logic}, 103:155--199, 2000.

\bibitem{BarringtonImmermanStraubing}
D.A.M. Barrington, N.~Immerman, and H.~Straubing.
\newblock On uniformity within \text{NC}$^1$.
\newblock {\em J. of Comput. and Syst. Sc.}, 41:274--306, 1990.

\bibitem{BellantoniCook}
S.~Bellantoni and S.~Cook.
\newblock A new recursion-theoretic characterization of poly-time functions.
\newblock {\em Comput. Complex.}, 2:97--110, 1992.

\bibitem{BlancBournez23}
M.~Blanc and O.~Bournez.
\newblock A characterisation of functions computable in polynomial time and
  space over the reals with discrete ordinary differential equations:
  simulation of {T}uring {M}achines with analytic discrete {ODE}s.
\newblock In {\em Proc. MFCS}, pages 21:1--21:15, 2023.

\bibitem{BlancBournez24}
M.~Blanc and O.~Bournez.
\newblock The complexity of computing in continuous time: Space complexity is
  precision.
\newblock In {\em Proc. ICALP}, pages 129:1--129:22, 2024.

\bibitem{BonfanteKahleMarionOitavem}
G.~Bonfante, R.~Kahle, J.-Y. Marion, and I.~Oitavem.
\newblock Two function algebras defining functions in {NC}$^k$ boolean
  circuits.
\newblock {\em Inf. and Comput.}, 2016.

\bibitem{BournezDurand19}
O.~Bournez and A.~Durand.
\newblock Recursion schemes, discrete differential equations and
  characterization of polynomial time computation.
\newblock In {\em Proc. MFCS}, 2019.

\bibitem{BournezDurand23}
O.~Bournez and A.~Durand.
\newblock A characterization of functions over the integers computable in
  polynomial time using discrete differential equations.
\newblock {\em Comput. Complex.}, 32(7), 2023.

\bibitem{Clote88}
P.G. Clote.
\newblock A sequential characterization of the parallel complexity class {NC}.
\newblock Technical report, Boston College, 1988.

\bibitem{Clote1990}
P.G. Clote.
\newblock {\em Sequential, machine-independent characterizations of the
  parallel complexity classes AlogTIME, AC$^k$, NC$^k$ and NC}, pages 49--69.
\newblock Progress in Computer Science and Applied Logic. Birkhäuser, Boston,
  MA, 1990.

\bibitem{Clote93}
P.G. Clote.
\newblock On polynomial size {F}rege proofs of certain combinatorial
  principles.
\newblock In P.~Clote and Krj{\'\i}cek J., editors, {\em Arithmetic, Proof
  Theory, and Computational Complexity}, pages 166--184. Clarendon Press, 1993.

\bibitem{CloteKranakis}
P.G. Clote and E.~Kranakis.
\newblock {\em Boolean Functions and Computation Models}.
\newblock Springer, 1998.

\bibitem{CloteTakeuti92}
P.G. Clote and G.~Takeuti.
\newblock Bounded arithmetic for {NC}, {ALogTIME}, {L} and {NL}.
\newblock {\em Ann. Pure and Appl. Logic}, 56:73--117, 1992.

\bibitem{CloteTakeuti}
P.G. Clote and G.~Takeuti.
\newblock First order bounded arithmetic and small complexity classes.
\newblock In P.G. Clote and J.B. Remmel, editors, {\em Feasible Mathematics
  {II}}, pages 154--218. 1995.

\bibitem{Cobham}
A.~Cobham.
\newblock The intrinsic computational difficulty of functions.
\newblock In {\em Logic, Methodology and Phylosophy of Science: Proc. 1964
  International Congress}, pages 24--30. 1965.

\bibitem{ComptonLaflamme}
K.J. Compton and C.~Laflamme.
\newblock An algebra and a logic for {NC\({^1}\)}.
\newblock {\em Inf. Comput.}, 87(1/2):240--262, 1990.

\bibitem{CookMorioka}
S.~Cook and T.~Morioka.
\newblock Quantified propositional calculus and a second-order theory for
  \text{NC}$^1$.
\newblock {\em Arch. Math. Logic}, 44:711--749, 2005.

\bibitem{CookNguyen}
S.~Cook and P.~Nguyen.
\newblock Theories for {TC}$^0$ and other small circuit classes.
\newblock {\em {Log. Meth. Comput. Sci.}}, 2(1--40), 2006.

\bibitem{DurandHaakVollmer}
A.~Durand, A.~Haak, and H.~Vollmer.
\newblock Model-theoretic characterization of {B}oolean and arithmetic circuit
  classes of small depth.
\newblock In {\em Proc. LICS}, pages 354--363, 2018.

\bibitem{DBLP:conf/focs/FurstSS81}
M.L. Furst, J.B. Saxe, and M.~Sipser.
\newblock Parity, circuits, and the polynomial-time hierarchy.
\newblock In {\em Proc. FOCS}, pages 260--270, 1981.

\bibitem{GurevichLewis}
Y.~Gurevich and H.~Lewis.
\newblock A logic for constant-depth circuit.
\newblock {\em Inf. Control}, 61:65--74, 1984.

\bibitem{HAB}
W.~Hesse, E.~Allender, and D.A.M. Barrington.
\newblock Uniform constant-depth threshold circuits for division and iterated
  multiplication.
\newblock {\em Journal of Computer and System Sciences}, 65(4):695--716, 2002.

\bibitem{Immerman}
N.~Immerman.
\newblock Languages that capture complexity classes.
\newblock {\em SIAM J. Comput.}, 16:760--778, 1987.

\bibitem{Johannsen}
J.~Johannsen.
\newblock {\em A bounded arithmetic theory for constant depth threshold
  circuits}, volume~6 of {\em Springer Lecture Notes in Logic}, chapter G\"ODEL
  '96, pages 224--234.
\newblock H\'ajek, P., 1996.

\bibitem{Leivant}
D.~Leivant.
\newblock Predicative recurrence and computational complexity {I}: Word
  recurrence and poly-time.
\newblock In {\em Feasible Mathematics}. Birkh\"user, 1994.

\bibitem{LeivantMarion}
D.~Leivant and J.-Y. Marion.
\newblock Ramified recurrence and computational complexity {II}: Substitution
  and poly-space.
\newblock In {\em Proc. CSL}, pages 369--380, 1995.

\bibitem{LeivantMarion93}
D.~Leivant and Y.-Y. Marion.
\newblock Lambda calculus characterizations of poly-time.
\newblock {\em Fundam. Inform.}, 19(1,2):167--184, 1993.

\bibitem{Lindell}
S.~Lindell.
\newblock A purely logical characterization of circuit uniformity.
\newblock In {\em 7th Structure in Complexity Theory Conf.}, pages 185--192,
  1992.

\bibitem{Razborov87}
A.A Razborov.
\newblock Lower bounds on the size of bounded depth circuits over a complete
  basis with logical addition.
\newblock {\em Mathematical Notes of the Academy of Sciences of the USSR},
  41:333--338, 1987.

\bibitem{DBLP:conf/stoc/Smolensky87}
R~Smolensky.
\newblock Algebraic methods in the theory of lower bounds for boolean circuit
  complexity.
\newblock In Alfred~V. Aho, editor, {\em Proceedings of the 19th Annual {ACM}
  Symposium on Theory of Computing, 1987, New York, New York, {USA}}, pages
  77--82. {ACM}, 1987.
\newblock \href {https://doi.org/10.1145/28395.28404}
  {\path{doi:10.1145/28395.28404}}.

\bibitem{Vollmer}
H.~Vollmer.
\newblock {\em Introduction to Circuit Complexity: A Uniform Approach}.
\newblock Springer, 1999.

\end{thebibliography}



\longv{
\subsubsection{The Class $\AC^0$}\label{appendix:AC}
In~\cite{ADK24a}, we provided the first characterization of $\AC^0$ based on ODE schemas, due to $\ell$-ODE$_1$ or $\ell$-ODE$_2$ and $\ell$-ODE$_3$.

\paragraph{Strict Schemas Deriving along $\ell$}

\begin{defn}[Schemas $\ell$-ODE$_1$, $\ell$-ODE$_2$ and $\ell$-ODE$_3$]
Given $g,h:\Nat^p \to \Nat$ and $k:\Nat^{p+1} \to \{0,1\}$, the function $f: \Nat^{p+1}\to \Nat$ is defined by $\ell$-ODE$_1$ if it is the solution of the IVP below:
\begin{align*}
    f(0, \tu y) &= g(\tu y) \\
    \frac{\partial f(x, \tu y)}{\partial \ell} &= f(x, \tu y) + k(x, \tu y)
\end{align*}
where $k(x, \tu y) \in \{0,1\}$.
\\
We say that it is defined by $\ell$-ODE$_2$ from $g, h$ and $k$, if it is the solution of the IVP below:
\begin{align*}
    f(0, \tu y) &= g(\tu y) \\
    \frac{\partial f(x, \tu y)}{\partial \ell} &= \big(2^{\ell(h(\tu y))}-1\big) \times f(x, \tu y) + k(x, \tu y)
\end{align*}
where $k(x, \tu y)\in \{0,1\}$ and if, for some $x$ and $\tu y$, $k(x, \tu y)\neq 0$, then $h(\tu y)\neq 0$
\\
We say that it is defined by $\ell$-ODE$_3$ from $g$ if it is the solution of the IVP below:
\begin{align*}
    f(0, \tu y) &= g(\tu y) \\
    \frac{\partial f(x, \tu y)}{\partial \ell} &= - \bigg\lceil \frac{f(x, \tu y)}{2} \bigg\rceil.
\end{align*}
\end{defn}

\noindent
As shown~\cite[Prop. 12, 15, 17]{ADK24a}, computation through these schema is closed under $\AC^0$.

Clearly, also the following, very restrictive schema is closed under $\AC^0$:
\begin{align*}
    f(0, \tu y)  &= g(\tu y)
    \\
    \frac{\partial f(x, \tu y)}{\partial \ell} &= - f(x, \tu y) + k(x, \tu y).
\end{align*}
Indeed, by definition of $\ell$-ODE, we simply have to compute $f(x, \tu y) = k(x, \tu y)$, which can be done in $\AC^0$, whenever $k(x, \tu y)$ can be computed in $\AC^0$.

\begin{lemma}
    Let $k: \Nat^{p} \to \{0, 1\}$ and $k': \Nat^{p} \to \{-1,1\}$ be computable in $\AC^0$. 
    Then, the function $f(x, \tu y): \Nat^{p} \to \Nat$ defined by $k', k$ as the solution of the IVP below:
        \begin{align*}
            f(0, \tu y) &= g(\tu y) \\
            \frac{\partial f(x, \tu y)}{\partial \ell} &= k'(x, \tu y) \times f(x, \tu y) + k(x, \tu y)
        \end{align*}
    is in $\AC^0$.
\end{lemma}

\begin{proof}
    \textcolor{red}{to do}
\end{proof}

\begin{defn}[Schema $\ell$-ODE$_0$]
    Given $g:\Nat^p \to \Nat$ and $k: \Nat^{p+1} \to \{0,1\}$, the function $f:\Nat^{p+1}\to \Nat$ is defined by $\ell$-ODE$_0$ if it is the solution of the IVP below:
    \begin{align*}
        f(0, \tu y) &= g(\tu y) \\
        \frac{\partial f(x, \tu y)}{\partial \ell} &= f(x, \tu y) \times \big(k(x, \tu y) -1\big)
    \end{align*}
\end{defn}

\begin{lemma}
    If $f(x, \tu y)$ is defined by $\ell$-ODE$_0$ from functions in $\AC^0$, then $f$ is in $\AC^0$ as well.
\end{lemma}
\begin{proof}
    By definition of $\ell$-ODE:
    $$
        f(x, \tu y) = \prod^{\ell(x)-1}_{u=-1} k(\alpha(u), \tu y)
    $$
    with the convention that $k(\alpha(-1), \tu y)=g(\tu y)$.
    Clearly $f(x, \tu y)=g(\tu y)$ when $k(\alpha(t), \tu y)=1$ for all $t\in \{0, \dots, \ell(x)\}$ and 0 otherwise.
    \\
    \textcolor{blue}{By hypothesis, we can compute each $k(\alpha(t), \tu y)$ in parallel and iterated multiplication by 0 (namely cancellation) and 1 (namely no change) can clearly be performed in $\AC^0$, i.e.~by a single layer with unbounded fan-in $\wedge$.}
\end{proof}

\noindent
Computation through this schema intuitively corresponds to the idea of bounded search and is somewhat weaker than $\ell$-ODE$_1$, as shown by the following remark.

\begin{remark}
    Let us consider the following instance of $\ell$-ODE$_0$ such that:
    \begin{align*}
        f(0, \tu y) &= g'(\tu y) \\
        \frac{\partial f(x, \tu y)}{\partial \ell} &= f(x, \tu y) \times \big(k'(x, \tu y) - 1\big).
    \end{align*}
    It is easy to see that it can be rewritten in terms of $\ell$-ODE$_1$ as follows:
    \begin{align*}
        f'(0, \tu y) &= \fun{cosg}(g'(\tu y)) \\
        \frac{\partial f'(x, \tu y)}{\partial \ell} &= f'(x, \tu y) + \fun{cosg}\big(k'(x, \tu y)\big).
    \end{align*}
    Then, $f(x, \tu y)=\fun{cosg}(f'(x, \tu y))$.
\end{remark}

\paragraph{\textcolor{blue}{Non-Strict} Schemas Deriving along $\ell$}

We consider a schema obtained deriving along $\ell$ and such that $A= K\big(x, \tu y, h(x, \tu y), f(x, \tu y) - 1\big)$ and $B=0$.

            %

\begin{proof}[Proof of Lemma~\ref{lemma:ACODE}]
    The proof is similar to the previous one.
    By definition of $\ell$-ODE,
    $$
    f(x, \tu y) = \prod^{\ell(x)-1}_{u=-1} K\big(\alpha (u), \tu y, h(\alpha(u), \tu y), f(\alpha(u), \tu y)\big)
    $$
    with the convention that $K(\alpha(-1), \tu y, h(\alpha(u), \tu y), f(\alpha(u), \tu y))=g(\tu y)$.
    Clearly, if there is a $t$ such that $K(t)= 0$, then $f(x, \tu y)=0$, whereas if $f(x, \tu y)\neq 0$, then for any $f(t, \tu y)$ must have been strictly positive and, thus, $K(t,s)=K_1(t)$.
    Therefore,
    $$
        f(x, \tu y) = \prod^{\ell(x)-1}_{u=-1} K_1(\alpha(u))
    $$
    again with the convention that $K(\alpha(-1))=g(\tu y)$.
    This is clearly either 0, if either $g(\tu y)=0$ or there is a $K_1(\alpha(u))=0$, and $g(\tu y)$ otherwise.
    \textcolor{blue}{By hypothesis, we can compute $g(\tu y)$ and $K_1(\alpha(u))$, for any $u \in \{0, \dots, \ell(x)-1\}$. Computation is then performed in constant depth, by a single layer defined by a(n unbounded fan-in) $\wedge$-gate.}
\end{proof}

\noindent
\marginpar{\footnotesize{\textcolor{red}{A bit pleonastic but I don't know which approach is more insightful/elegant}}}
Also in this case it is trivial to see that this system can be rewritten by $\ell$-ODE$_1$, offering an alternative indirect proof that computation through it is in $\AC^0$.
As seen, since $s=\fun{sg}(f(x, \tu y))$, clearly $f(x, \tu y) \neq 0$ when $g(\tu y)>0$ and for any $u\in \{0, \dots, \ell(x)\}$, $f(x, \tu y)=K_1(\alpha(u))>0$.
This can be expressed by $\ell$-ODE$_1$ as follows:
\begin{align*}
    f'(0, \tu y) &= \fun{cosg}(g(\tu y)) \\
    \frac{\partial f'(x, \tu y)}{\partial \ell} &= f'(x, \tu y) \times \fun{cosg}\big(K_1\textcolor{blue}{(\alpha(\ell(x) + 1))}\big).
\end{align*}
Then, $\fun{bsearch}(x,x)=\fun{cosg}(f'(x,x))$.

Indeed, it is evident that this schema allows us to perform computation corresponding to bounded search, and can be rewritten in terms of weaker $\ell$-ODE$_0$.
The two schemas are actually equivalent:
\begin{itemize}
    \itemsep0em
    \item It is clear that $\ell$-ODE$_0$ is nothing but a special case of the given schema such that $f(x, \tu y)$ does not occur in $K$.
    \item Also the converse direction holds.
    By definition, 
    $$
    K\big(x, \tu y, h(x, \tu y), f(x, \tu y)\big) = K_0(x) \times \fun{cosg}\big(f(x, \tu y)\big) + K_1(x) \times \fun{sg}\big(f(x, \tu y)\big) 
    $$
    Then,
    \begin{align*}
        \frac{\partial f(x,\tu y)}{\partial \ell} &= \big(K(x, \tu y, h(x, \tu y), f(x, \tu y)) - 1\big) \times f(x, \tu y) \\
        &= \big((K_0(x) \times \fun{cosg}(f(x, \tu y)) + K_1(x) \times \fun{sg}(f(x,\tu y)))-1\big) \times f(x, \tu y) \\
        &= K_0(x) \times \fun{cosg}(f(x, \tu y)) \times f(x, \tu y) + (K_1(x) \times \fun{sg}(f(x, \tu y))-1) \times f(x, \tu y) \\
        &= K_0(x) \times \fun{cosg}(f(x, \tu y)) \times f(x, \tu y) + K_1(x) \times \fun{sg}(f(x, \tu y)\times f(x, \tu y) - \times f(x, \tu y)
        \\
        &= K_1(x) \times f(x, \tu y) - f(x, \tu y) \\
        &= f(x, \tu y) (K_1(x)-1)
    \end{align*}
    where $K_1(x)$ does not include any occurrence of $s$, i.e. $K_1(x) = k(x, \tu y) \in \{0,1\}$, making the corresponding IVP an instance of $\ell$-ODE$_0$.
\end{itemize}

\paragraph{Strict Schemas Deriving along $\ell_2$}

When deriving along $\ell_2$ instead of $\ell$, even computation through a strict schema in which $A=0$ and $B$ is any limited $\fun{sg}$-polynomial expression (without calls to $f(x, \tu y)$) can also be proved in $\AC^1$.

\begin{lemma}\label{lemma:logitadd}
    Let $\sB$ be a limited $\fun{sg}$ polynomial expressions and $g:\Nat^{p} \to \Nat, h:\Nat^{p+1}\to \Nat$ be functions computable in $\AC^0$.
    Then, the function $f:\Nat^{p+1} \to \Nat$ defined from $\sB, h$ and $g$ by the IVP below:
    \begin{align*}
        f(0, \tu y) &= g(\tu y) \\
        \frac{\partial f(x, \tu y)}{\partial \ell_2} &= \sB\big(x, \tu y, h(x, \tu y)\big)
    \end{align*}
    is in $\AC^0$.
\end{lemma}

\begin{proof}
    By definition of $\ell_2$-ODE, the solution of the given system is of the form:
    $$
    f(x, \tu y) = \sum^{\ell_2(x)-1}_{u=-1} \sB\big(\alpha_2(u), \tu y, h(\alpha_2(u), \tu y)\big)
    $$
    where $\alpha_2(u)=2^{2^u-1}-1$ and with the convention that $\sB(\alpha_2(-1), \tu y, h(\alpha_2(-1), \tu y)=f(0, \tu y)= g(\tu y)$.
    Since $g, h$ (by hypothesis) and +, -, $\fun{sg}$ ($\sB$ is limited $\fun{sg}$-polynomial expression) are in $\AC^0$, each $\sB\big(\alpha_2(u), \tu y, h(\alpha_2(u))\big)$ is.
    There are at most $\ell_2(x)$ terms $\sB\big(\alpha_2(u), \tu y, h(\alpha_2(u))\big)$, so their sum, i.e.~$f(x, \tu y)$, can be computed in constant time (\textsc{LogItAdd} is $\AC^0$, see~\cite{Vollmer}). 
\end{proof}
}

\longv{
\subsubsection{The Class $\FACC$}\label{appendix:ACC}

\paragraph{Strict Schemas Deriving along $\ell$}
In addition to $\ODEACC$, computation through the following strict schemas is in $\FACC$ 

\begin{lemma}
    Let $g:\Nat^p\to \Nat, k:\Nat^{p+1}\to \{0,1\}$ be computable in $\FACC$, then computation through the schema defined by the IVP below:
    \begin{align*}
        f(0, \tu y) &= g(\tu y) \\
        \frac{\partial f(x, \tu y)}{\partial \ell} &= - \bigg\lfloor \frac{f(x, \tu y) + k(x, \tu y)}{2}\bigg\rfloor \times 2 + k(x, \tu y)
    \end{align*}
    is in $\FACC$.
\end{lemma}

\begin{proof}
    Since $f(x, \tu y), k(x, \tu y)\in \{0,1\}$, $\lfloor \frac{f(x, \tu y) + k(x, \tu y)}{2}\rfloor= k(x, \tu y) \times f(x, \tu y)$, the given IVP can be reduced to $\ODEACC$, which is in $\FACC$ (see Lemma~\ref{lemma:FACCone}.
\end{proof}

\begin{lemma}\label{lemma:kkODE}
    Let $g:\Nat^p\to \Nat, k, k':\Nat^{p+1}\to \{0,1\}$ be computable in $\FACC$, then computation through the schema defined by the IVP below:
    \begin{align*}
        f(0, \tu y) &= g(\tu y) \\
        \frac{\partial f(x, \tu y)}{\partial \ell} &= - f(x, \tu y) \times k(x, \tu y) + k(x, \tu y) \times k'(x, \tu y)
    \end{align*}
    is in $\FACC$.
\end{lemma}

\begin{proof}
By definition of $\ell$-ODE,
$$
f(x, \tu y) = \sum^{\ell(x)-1}_{u=-1} \Bigg(\prod^{\ell(x)-1}_{t=u+1} \Big(1-k(\alpha(t), \tu y)\Big) \Bigg) \times k'(\alpha(u), \tu y) \times k(\alpha(u), \tu y).
$$
Notice that $f(x, \tu y)\in \{0,1\}$.
Indeed, $\prod^{\ell(x)-1}_{t'=u'+1}(1-k(\alpha(t'), \tu y))=1$ when $k(\alpha(t'), \tu y)=0$ for all $t'\in \{u+1, \dots, \ell(x)-1\}$ and $\prod^{\ell(x)-1}_{t'=u'+1}(1-k(\alpha(t'), \tu y))\times k'(\alpha(u'), \tu y) \times k(\alpha(u'), \tu y)) =1$ when, in addition $k'(\alpha(u')) \times k(\alpha(u'))=1$.
Then, clearly, for all $u''\in \{u'+1, \dots, \ell(x)-1\}$, $\prod^{\ell(x)-1}_{t'=u''+1} (1-k(\alpha(t'), \tu y))=0$ (since $k(\alpha(t'), \tu y) \times k'(\alpha(t'), \tu y)=0)$.
\\
We can compute this value using constant-depth circuits including \textsc{Mod 2} gates.
W.l.o.g. let us assume that $g(\tu y)\in \{0,1\}$ and consider the following circuit computation:
\begin{itemize}
    \itemsep0em
    \item In parallel, we compute $g(\tu y)$ and the $\ell(x)$ pairs of values $k(\alpha(z))$ and $k'(\alpha(z'))$ (both in $\{0,1\}$), for any $z\in \{0, \dots, \ell(x)-1\}$.%
    This can be done in $\FACC$ for hypothesis.
    \item Again in parallel, compute consistency between each pair, i.e. whether $k(\alpha(z)) = k'(\alpha(z))$ and between pairs of subsequent values, i.e.~whether $k(\alpha(z))=k(\alpha(z)+1)$
    \item In the subsequent layer, we consider computation through $\ell(x)$ (unbounded fan-in) \textsc{Mod 2} gates ``counting'' whether the number of changes of values from the considered $z\in \{0, \dots, \ell(x)-1\}$ to $\ell(x)-1$ is even or odd.
    \item \textcolor{blue}{In constant depth, for any $z\in \{0,\dots, \ell(x)-1\}$, we consider the disjunction of
    \begin{itemize}
        \itemsep0em
        \item the (unbounded fan-in) conjunction of the result of the corresponding consistency gate (i.e.~checking whether $k(\alpha(z), \tu y)=k'(\alpha(z), \tu y)$), of $k(\alpha(z), \tu y)$ and of the negation of the \textsc{Mod 2} gate for $z$ and
        \item the (unbounded fan-in) conjunction of the result of the corresponding consistency gate (i.e. checking whether $k(\alpha(z), \tu y)=k'(\alpha(z), \tu y)$), of the negation of $k(\alpha(z), \tu y)$ and \textsc{Mod 2} gate for $z$.
    \end{itemize}}
    \item We conclude in one layer computing the (unbounded fan-in) disjunction of all the $\ell(x)$ results of the previous gates (at most one of which can be 1).
\end{itemize}
\end{proof}

\begin{remark}[Relationship with $\ell$-ODE$_0$]
    Notice that $\ell$-ODE$_0$ is nothing but a special case of this schema such that, for any $x, \tu y$, $k(x, \tu y)=1$, that is this schema is already enough to capture bounded search.
\end{remark}

\marginpar{\textcolor{red}{A direct connection between $\ODEACC$ and $\schODE$ is still missing}}
\begin{remark}[Relationship with $\ODEACC$ and $\schODE$]
    \textcolor{blue}{Both $\ODEACC$ and $\schODE$, with only expressions of the form $s=\fun{sg}(f(x,\tu y))$ occurring in $K$, can be seen as special cases of the given schema.
    Any instance of $\ODEACC$, e.g. defined by the equation $\frac{\partial f(x,\tu y)}{\partial \ell} = -2h(x,\tu y) \times f(x, \tu y) + h(x, \tu y)$, can be seen as an instance of the given schema such that $k(x, \tu y)=2h(x, \tu y)$ and $k'(x, \tu y)=\frac{1}{2}$, whereas $\schODE$ can be rewritten as an instance of the given schema such that $k(x, \tu y)=\big(K_0(x)=K_1(x)\big) = K_0(x)\times K_1(x)+ (1-K_0(x)) \times (1-K_1(x))$ and $k'(x, \tu y)$ is a convoluted expression involving $K_0(x)$ and $K_1(x)$ and described in the proof of Lemma~\ref{lemma:schODE}.}
\end{remark}

\marginpar{\textcolor{red}{This is new and has to be checked.}}
\begin{defn}[Schema \textcolor{red}{$\ell$-ODE$_4$}]
    \textcolor{red}{Given $g:\Nat^p\to \Nat$ and $k,k':\Nat^{p+1}\to \{0,1\}$, the function $f:\Nat^{p+1}\to \Nat$ is defined by $\ell$-ODE$_4$ from $g,k$ and $k'$ if it is the solution of the IVP below:
    \begin{align*}
        f(0, \tu y) &= g(\tu y) \\
        \frac{\partial f(x, \tu y)}{\partial \ell} &= \pm f(x, \tu y) \times k(x, \tu y) + k(x, \tu y) \times k'(x, \tu y)
    \end{align*}}
\end{defn}

\begin{corollary}
    If $f(x, \tu y)$ is defined by $\ell$-ODE$_4$ from functions in $\AC^0$, then $f(x, \tu y)$ is in $\AC^0$ as well.
\end{corollary}
\begin{proof}
By putting~\cite[Prop. 12]{ADK24a} and Lemma~\ref{lemma:kkODE} together.
\end{proof}

\marginpar{\textcolor{red}{Is this enough to capture the class? I mean: $\FACC = [... \#; \ell ODE_3, \ell ODE_4]$?}}
\textcolor{red}{...}
}

$$
$$

\end{document}